\documentclass[reqno]{amsart}
\usepackage{amsmath}
\usepackage{amsfonts,amssymb}
\usepackage{mathrsfs}

\usepackage{mathtools}

\usepackage{mathpazo}

\usepackage{url}
\usepackage{bbm}

\makeatletter \@addtoreset{equation}{section} \makeatother

\renewcommand\thefigure{\thesection.\@arabic\c@figure}
\renewcommand\thetable{\thesection.\@arabic\c@table}
\newtheorem{theorem}{Theorem}[section]
\newtheorem{lemma}[theorem]{Lemma}
\newtheorem{proposition}[theorem]{Proposition}

\theoremstyle{remark}

\newtheorem{remark}[theorem]{Remark}

\hyphenation{sta-tion-arity}
\hyphenation{ex-pec-ta-tion}
\newcommand{\mc}[1]{{\mathscr #1}}

\newcommand{\bb}[1]{{\mathbb #1}}

\newcommand{\<}{\langle}
\renewcommand{\>}{\rangle}


\newcommand{\CC}{\mathbb{C}}

\newcommand{\RR}{\mathbb{R}}

\newcommand{\ZZ}{\mathbb{Z}}


\let\ve=\varepsilon

\let\ve=\varepsilon

\let\n=\eta


\newcommand{\sfrac}[2]{{\smash{\frac{#1}{#2}}}}

\newcommand{\tnorm}{\vert \kern -1pt\vert\kern -1pt\vert}

\newcommand{\plus}{\!+\!}
\newcommand{\minus}{\!-\!}

\usepackage{zito}

\addtolength{\textheight}{3pt}

\begin{document}
\title[Fractional Superdiffusion] {$3/4$-Fractional superdiffusion in a system of harmonic oscillators perturbed by a conservative noise}
\author{C\'edric Bernardin}
\address{ C\'edric Bernardin\\ Universit\'e de Nice Sophia-Antipolis\\ Laboratoire J.A. Dieudonn\'e\\ UMR CNRS 7351\\ Parc Valrose\\ 06108 Nice cedex 02- France } \email{cbernard@unice.fr}

\author{ Patr\' icia Gon\c{c}alves}

\address{Patr\' icia Gon\c{c}alves\\ PUC-RIO\\ Departamento de Matem\'atica\\ Rua Marqu\^es de S\~ao Vicenteno. 225, 22453-900\\ G\'avea, Rio de Janeiro\\ Brazil and
CMAT\\ Centro de Matem\'atica da Universidade do Minho\\ Campus de Gualtar\\ 4710-057 Braga\\ Portugal}
\email{patricia@mat.puc-rio.br}

\author{Milton Jara}
\address{ Milton Jara\\ IMPA\\ Estrada Dona Castorina 110\\ Jardim Bot\^anico\\ CEP 22460-340\\ Rio de Janeiro\\ Brazil}
\email{mjara@impa.br}

\begin{abstract}
We consider a harmonic chain perturbed by an energy conserving noise and show that after a space-time rescaling the energy-energy correlation function is given by the solution of a skew-fractional heat equation with exponent $3/4$.
\end{abstract}

\maketitle

\section{Introduction}

The problem of anomalous diffusion of energy in one-dimensional chains of coupled oscillators has attracted a lot interest since the end of the 90's , see the review papers~\cite{D,LLP}. In one dimension the presence of long time tails in the correlation functions of the energy current  shows that transport coefficients are ill defined. Recently, following \cite{VB} {\footnote{In \cite{VB} the focus is on one-dimensional fluids.}}, Spohn goes further and gives in \cite{Sp} very precise predictions about the long-time behavior of the dynamic correlations of the conserved fields, identifying explicitly several universality classes. The predictions are based on the so-called nonlinear fluctuating hydrodynamics which claims that in order to capture the super-diffusive behavior of the system it suffices to expand the system of Euler equations up to the second order and add conservative space-time white noise satisfying the fluctuation-dissipation relation. These mesoscopic equations are the starting point from which the predictions are deduced. Thus, they do not depend on the microscopic specificities of the model but only on its behavior in a coarse space-time scale. The method applies also to conservative systems whose hydrodynamic equations are described by a system of hyperbolic conservation laws.

Up to now, mathematical progress on this issue has been rather modest. The validity of the hydrodynamic equations should be the consequence of good mixing properties of the microscopic dynamics, properties well known to be very difficult to justify rigorously for Hamiltonian systems. Therefore, during the last years, following the pioneering works {\cite{OVY}} and \cite{FFL}, it has been proposed to superpose stochastic perturbations to the deterministic Hamiltonian evolution in order to ensure the required chaoticity.  In \cite{BBO2} it is  proved that the thermal conductivity of an unpinned one-dimensional harmonic chain of oscillators perturbed by an energy-momentum conservative noise is infinite, while if a pinning potential (destroying momentum conservation) is added, it is finite. In \cite{BOS}, it is then shown that if the intensity $\ve$ of the noise goes to $0$, the local spectral density evolves according to a linear phonon Boltzmann equation in a space-time scale of order $\ve^{-1}$. The latter can be interpreted as the evolution of the density of a Markov process. In \cite{JKO}, \cite{JK}, the authors study the long time behavior of additive functionals of this Markov process and deduce that the long-time, large-scale limit of the solution of the previous Boltzmann equation converges to the solution of the fractional heat equation:
\begin{equation}
\partial_t u =- (-\Delta)^{3/4} u
\end{equation}
where $\Delta$ is the one-dimensional Laplacian (see also \cite{DLLMM}, \cite{DLLP}, \cite{LMMP} and references therein). This result is in perfect agreement with the nonlinear fluctuating hydrodynamics predictions (\cite{Sp}).  Nevertheless, observe that it is obtained in a {\textit{double limit}} procedure and that it is a priori much more difficult and interesting to obtain the fractional heat equation in a {\textit{unique}} space-time scaling limit bypassing the mesoscopic Boltzmann equation. The aim of this paper is to present a general method permitting, precisely, to solve this problem.

The model we consider in this paper has been introduced in \cite{BerSto} and presents strong analogies with the models described above. We believe, in fact, that with some extra work, the proof can be carried out also for the models of \cite{BBO2}. The systems of \cite{BerSto} can be described as follows. Let $V$ and $U$ be two non-negative potentials on $\RR$ and consider the Hamiltonian system $( \, {\bf r} (t) , {\bf p} (t) \,)_{t \ge 0}$, whose equations of motion are given by
\begin{equation}
\label{eq:generaldynamics}
\frac{dp_x}{dt} = V'(r_{x+1}) -V'(r_x),
\qquad \frac{dr_x}{dt} = U' (p_x) -U' (p_{x-1}),
\qquad x \in \ZZ,
\end{equation}
where $p_x$ is the momentum of the particle $x$, $q_x$ its position and $r_x=q_{x} -q_{x-1}$ is the ``deformation'' of the lattice at $x$. Standard chains of oscillators are recovered for a quadratic kinetic energy $U(p)=p^2 /2$. Now, take $V=U$, and call $\eta_{2x-1}=r_x$ and $\eta_{2x}=p_x$. The dynamics can be rewritten as:
\begin{equation}
\label{eq:dyneq}
d\eta_{x} (t) =\Big(V' (\eta_{x+1}) - V' (\eta_{x-1})\Big) dt.
\end{equation}
Notice that with these new variables, the energy of the system is simply given by $\sum_{x\in \bb Z} V(\eta_x)$. If $V(\eta)=\eta^2/2$, which is the case considered in this paper, then we recover a chain of harmonic oscillators. Then, following the spirit of \cite{BBO2}, the deterministic evolution is perturbed by adding a noise which consists to exchange $\eta_{x}$ with $\eta_{x+1}$ at random exponential times, independently for each bond $\{x,x+1\}$. The dynamics still conserves the energy $\sum_{x\in \ZZ} V(\eta_x)$ and the ``volume'' $\sum_{x\in \ZZ} \eta_x=\sum_{x \in \ZZ} [p_x +r_x]$ and destroys all the other conserved quantities. As argued in \cite{BerSto}, the volume conservation law is responsible for the anomalous energy diffusion observed for this class of energy-volume conserving dynamics. This can be shown for quadratic interactions (\cite{BerSto}) with a behavior similar to the one observed in \cite{BBO2} but also for exponential interactions (\cite{BG}). The technical advantage to deal with this kind of stochastic perturbation is that the number of conserved quantities is only $2$ (energy and volume) and not $3$ (energy, momentum and stretch) as it is for the dynamics of \cite{BBO2}. In a recent paper, Jara et al. (\cite{JKO2}) obtained similar results to ours, but by very different techniques,  for the dynamics of \cite{BBO2}. 

Our proof is based on some recent ideas introduced in \cite{J2}. One way to study the diffusivity of a conserved quantity of given system, is to look at the evolution of the space-time correlations of the conserved quantity on a diffusive (or $1:2:4$) space-time scaling, with respect to a given stationary state. For diffusive systems, these correlations evolve according to a linear heat equation, and the corresponding diffusion coefficient is what we call the diffusivity of the quantity at the given stationary state.

As we will see for the model described above, energy correlations evolve on a $1:2:3$ {\em superdiffusive} space-time scale. If we scale space with a mesh $\frac{1}{n}$, then we have to speed up the time by a factor $n^{3/2}$ in order to see a non-trivial evolution of the energy correlations. For the expert reader, we can explain why is it difficult to obtain a limiting evolution in this situation. Since the model we are looking at is conservative, the continuity equation relating spatial variations of the energy with the energy current, allows to perform an integration by parts which absorbs a factor $n$ of the time scale. If the system satisfies the, so-called, {\em gradient condition}, the Fourier's law is satisfies at the microscopic level, and the ergodic properties of the underlying dynamics are enough to perform a second integration by parts, absorbing an extra  factor $n$ of the time scale. This second integration by parts allows to obtain the heat equation as the limit of the correlations of the conserved quantity. If the system does not satisfy the gradient condition, the so-called {\em non-gradient method} introduced by Varadhan \cite{Var} allows to use a central limit theorem in order to show an approximate version of the {\em fluctuation-dissipation relation}, which allows to perform the second integration by parts. The non-gradient method is extremely technical and difficult to apply and it gives rigorous justification to the {\em Green-Kubo formula} for the diffusivity of a system.

If we believe that our scaling is the right one, what we need to perform is a sort of {\em fractional} integration by parts, since the extra factor $n^{1/2}$ would be overcome by a standard integration by parts. In \cite{J2} we introduced what we call the {\em quadratic correlation field} associated to the volume. This field has two different meaningful scaling limits. In the {\em hyperbolic} scaling $tn$, the volume correlations evolve according to a linear transport equation. In particular, the correlations do not evolve on a reference frame moving with the characteristic speed. In the {\em diffusive} time scaling $tn^2$ and on the same moving reference frame, the volume correlations follow the heat equation. It turns out that the energy current can be expressed as a {\em singular} functional of the quadratic correlation field. A two-dimensional Laplace problem can be used to express this singular functional in terms of a regular function of the quadratic field and a boundary term. This boundary term turns out to be a skew version of the fractional Laplacian of order $3/4$ of the energy, and in particular it allows to perform  a sort of fractional integration by parts.

%
%
%

The paper is organized as follows. In Section \ref{sec:model-results} we define the model and state the main result. In Section \ref{sec:wf} we give a formal intuitive proof, that is rigorously performed in Section \ref{sec:proof}.

\section{The model}
\label{sec:model-results}
\subsection{Description of the model}

For $\eta: \bb Z \to \bb R$ and $\alpha >0$, define
\begin{equation}
\tnorm \eta \tnorm_{\alpha} = \sum_{x \in \bb Z} \big| \eta(x) \big| e^{-\alpha |x|}.
\end{equation}
Define $\Omega_\alpha = \{ \eta: \bb Z \to \bb R; \tnorm \eta \tnorm_\alpha < +\infty\}$. The normed space $(\Omega_\alpha, \tnorm \cdot \tnorm)$ turns out to be a Banach space. In $\Omega_\alpha$ we consider the system of ODE's
\begin{equation}
\label{ODE}
\tfrac{d}{dt} \tilde{\eta}_t(x) = {\tilde \eta}_t(x+1) - \tilde{\eta}_t(x-1) \text{ for } t \geq 0 \text{ and } x \in \bb Z.
\end{equation}
The Picard-Lindel\"of Theorem shows that the system \eqref{ODE} is well posed in $\Omega_\alpha$. We will superpose to this deterministic dynamics a stochastic dynamics as follows. To each bond $\{x,x+1\}$, with $x \in \bb Z$ we associate an exponential clock of rate one. Those clocks are independent among them. Each time the clock associated to $\{x,x+1\}$ rings, we exchange the values of $\tilde \eta_t(x)$ and $\tilde \eta_t(x+1)$. Since there is an infinite number of such clocks, the existence of this dynamics needs to be justified. If we freeze the clocks associated to bonds not contained in $\{-M,\dots,M\}$, the dynamics is easy to define, since it corresponds to a piecewise deterministic Markov process. It can be shown that for an initial data $\eta_0$ in
\begin{equation}
\Omega = \bigcap_{\alpha>0} \Omega_\alpha,
\end{equation}
these piecewise deterministic processes stay at $\Omega$ and they converge to a well-defined Markov process $\{\eta_t; t\geq 0\}$, as $M \to \infty$, see \cite{BerSto} and the references therein. This Markov process is the rigorous version of the dynamics described above. Notice that $\Omega$ is a complete metric space with respect to the distance
\begin{equation}
d(\eta,\xi) = \sum_{\ell \in \bb N} \frac{1}{2^\ell} \min\{ 1, \tnorm \eta-\xi\tnorm_{\frac{1}{\ell}}\}.
\end{equation}
Let us describe the generator of the process $\{\eta_t; t\geq 0\}$. For $x,y \in \bb Z$ and $\eta \in \Omega$ we define $\eta^{x,y} \in \Omega$ as
\begin{equation}
\eta^{x,y}(z)=
\begin{cases}
\eta(y); & z=x\\
\eta(x); &z=y\\
\eta(z); &z \neq x,y.
\end{cases}
\end{equation}
We say that a function $f: \Omega \to \bb R$ is {\em local } if there exists a finite set $B \subseteq \bb Z$ such that $f(\eta) = f(\xi)$ whenever $\eta(x) = \xi(x)$ for any $x \in B$. For a smooth function $f: \Omega \to \bb R$ we denote by $\partial_x f: \Omega \to \bb R$ its partial derivative with respect to $\eta(x)$. For a function $f: \Omega \to \bb R$ that is local, smooth and bounded we define $L f: \Omega \to \bb R$ as $L f = S f + A f$, where
\begin{equation}
S f(\eta)  = \sum_{x \in \bb Z} \big(f(\eta^{x,x+1}) - f(\eta)\big),
\end{equation}
\begin{equation}
Af(\eta) = \sum_{x \in \bb Z} \big( \eta (x+1) -\eta (x-1) \big)\, \partial_x f (\eta)
\end{equation}
for any $\eta \in \Omega$. Denote by $\mc C_b(\Omega)$ the space of bounded functions $f:\Omega \to \bb R$ which are continuous with respect to the distance $d(\cdot,\cdot)$.  The generator of $\{\eta_t; t\geq 0\}$ turns out to be the closure in $\mc C_b(\Omega)$ of the operator $L$.

The process $\{\eta_t; t\geq 0 \}$ has a family $\{\mu_{\rho,\beta}; \rho \in \bb R, \beta >0\}$ of invariant measures given by
\begin{equation}
\mu_{\rho,\beta}(d\eta) = \prod_{x \in \bb Z} \sqrt{\tfrac{\beta}{2 \pi}} \exp\big\{ - \tfrac{\beta}{2}\big(\eta(x) - \rho \big)^2 \big\} d\eta(x).
\end{equation}
It also has two conserved quantities. If one of the numbers
\begin{equation}
\sum_{x \in \bb Z} \eta_0(x) , \quad \sum_{x \in \bb Z} \eta_0(x)^2
\end{equation}
is finite, then its value is preserved by the evolution of $\{\eta_t; t \geq 0\}$. Following \cite{BerSto}, we will call these conserved quantities {\em volume} and {\em energy}. Notice that $\int \eta(x) d\mu_{\rho,\beta} = \rho$ and $\int\eta(x)^2 d\mu_{\rho,\beta} = \rho^2 + \frac{1}{\beta}$.

\subsection{Description of the result}
Fix $\rho \in \bb R$ and $\beta>0$, and consider the process $\{\eta_t; t \geq 0\}$ with initial distribution $\mu_{\rho,\beta}$. Notice that $\{\eta_t+\lambda; t \geq 0\}$ has the same distribution of the process with initial measure $\mu_{\rho+\lambda,\beta}$. Therefore, we can assume, without loss of generality, that $\rho = 0$. We will write $\mu_\beta= \mu_{0,\beta}$ and we will denote by $E_\beta$ the expectation with respect to $\mu_\beta$. We will denote by $\bb P$ the law of $\{\eta_t; t \geq 0\}$ and by $\bb E$ the expectation with respect to $\bb P$. The {\em energy correlation function} $\{S_t(x); x \in \bb Z, t \geq 0\}$ is defined as
\begin{equation}
S_t(x) = \tfrac{\beta^2}{2} \,  \bb E\big[\big(\eta_0(0)^2-\tfrac{1}{\beta} \big) \big( \eta_t(x)^2 -\tfrac{1}{\beta} \big)\big]
\end{equation}
for any $x \in \bb Z$ and any $t \geq 0$. The constant $\frac{\beta^2}{2}$ is just the inverse of the variance of $\eta(x)^2-\frac{1}{\beta}$ under $\mu_\beta$. By translation invariance of the dynamics and the initial distribution $\mu_\beta$, we see that
\begin{equation}
\tfrac{\beta^2}{2}\,  \bb E\big[ \big(\eta_0(x)^2 -\tfrac{1}{\beta} \big) \big(\eta_t(y)^2 -\tfrac{1}{\beta}\big)\big]=S_t(y-x)
\end{equation}
for any $x, y \in \bb Z$. Our main result is the following

\begin{theorem}
\label{t1}
Let $f,g: \bb R \to \bb R$ be smooth functions of compact support. Then,
\begin{equation}
\label{ec1.12}
\lim_{n \to \infty} \tfrac{1}{n}\sum_{x, y  \in \bb Z} f\big( \tfrac{x}{n} \big) g\big( \tfrac{y}{n}\big) S_{tn^{3/2}}(x-y) = \iint f(x)g(y) P_t(x-y) dx dy,
\end{equation}
where $\{P_t(x); x \in \bb R, t \geq 0\}$ is the fundamental solution of the fractional heat equation
\begin{equation}
\label{ec1.13}
\partial_t u =  -\tfrac{1}{\sqrt{2}}\big\{ (-\Delta)^{3/4} - \nabla (-\Delta)^{1/4}\big\} u.
\end{equation}
\end{theorem}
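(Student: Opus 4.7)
The plan is to follow the general strategy of \cite{J2}: the evolution of $S_t$ under $L$ does not close, but couples to the \emph{quadratic correlation field}
\begin{equation*}
Q_t(x,y) \;:=\; \tfrac{\b^2}{2}\, \bb E\big[(\eta_0(0)^2 - \tfrac{1}{\b})\,\eta_t(x)\,\eta_t(y)\big], \qquad x \neq y.
\end{equation*}
The proof then consists in analyzing the coupled pair $(S_t, Q_t)$ of rescaled correlation fields and eliminating $Q$ by inverting a two-dimensional Poisson-type problem; this inversion produces the fractional operator in \eqref{ec1.13}.

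A direct calculation with the generator yields $L\eta(x)^2 = \D_d[\eta(\cdot)^2](x) + 2\eta(x)(\eta(x+1)-\eta(x-1))$, where $\D_d$ is the discrete Laplacian. Setting $f_0(0):=\eta_0(0)^2-\tfrac{1}{\b}$ and using $\bb E[f_0(0)]=0$ gives
\begin{equation*}
\partial_t S_t(x) \;=\; \D_d S_t(x) \,+\, 2\big[\,Q_t(x,x+1) - Q_t(x,x-1)\,\big].
\end{equation*}
For $|x-y|\geq 2$, the field $Q_t(x,y)$ solves a closed two-dimensional transport-diffusion equation (the symmetric exchanges give $\D_x+\D_y$, and the antisymmetric part $A$ gives a rigid translation along both variables). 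On the "boundary" $|x-y|=1$ the antisymmetric generator produces the diagonal terms $\eta(x+1)^2-\eta(x)^2$, which couple $Q$ back to $\nabla S$ as a boundary source. Under the rescaling $(x,y,t)\mapsto(x/n,y/n,tn^{3/2})$, the discrete-Laplacian contribution to the $S$-equation is $O(n^{-1/2})$ and vanishes, so the whole limiting dynamics for $S$ must come from the singular diagonal current $Q_t(x,x+1)-Q_t(x,x-1)$.

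To identify this limit I would solve, at the level of the resolvent $(\l - L)^{-1}$, a Poisson-type equation on the half-lattice $\{(x,y):x\neq y\}$ whose boundary data are supplied by $S$. Because the harmonic dynamics is diagonalized by Fourier, both the bulk Green function and the Dirichlet-to-Neumann map along the diagonal can be written explicitly. Matching the time scale $tn^{3/2}$ with the $n^{-2}$ scale of the two-dimensional bulk diffusion, one expects the boundary contribution to have Fourier symbol asymptotic to
\begin{equation*}
\tfrac{1}{\sqrt{2}}\,\big( |\xi|^{3/2} \,-\, i\,\xi\,|\xi|^{1/2}\,\big).
\end{equation*}
The real part is $(-\D)^{3/4}$, and the imaginary part, stemming from the broken $x\mapsto -x$ symmetry of $A$, produces the skew drift $-\nabla(-\D)^{1/4}$; this is the mechanism responsible for the precise form of \eqref{ec1.13}.

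The remaining pieces are comparatively standard. Gaussian integrability under $\mu_\b$ yields the moment bounds needed to prove tightness of $S^n_t$ in an appropriate negative Sobolev space, and the operator on the right-hand side of \eqref{ec1.13} is dissipative in Fourier (its symbol has non-positive real part), so the limiting equation has a unique solution in the relevant class. The main obstacle is making rigorous the Fourier/Poisson argument of the third step: the formal $n^{1/2}$ gain that turns the naive remainder from summation by parts into a genuine fractional operator cannot come from power counting alone. It has to be extracted from a careful analysis of the resolvent of $L$ restricted to functions supported off the diagonal, combined with a Kipnis--Varadhan type control on the additive functional $\int_0^t [Q_s(\cdot,\cdot+1)-Q_s(\cdot,\cdot-1)]\,ds$, so as to justify replacing microscopic discrete diagonal operators by their continuous fractional counterparts uniformly in $n$. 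This is where essentially all of the technical work will be concentrated.
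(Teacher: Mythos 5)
Your plan follows essentially the same route as the paper: the energy field $\mc S_t^n$ couples to the quadratic (volume) field $Q_t^n$ through \eqref{ec3.10} and \eqref{ec3.15}, the singular term $Q_t^n(\nabla_n f\otimes\delta)$ is cancelled by choosing the two-variable test function $h_n$ to solve the discrete Poisson equation \eqref{ec4.22}, and the resulting diagonal term $\mc S_t^n(\mc D_n h_n)$ produces exactly the symbol $\tfrac{1}{\sqrt 2}\big(|\xi|^{3/2}-i\,\xi|\xi|^{1/2}\big)$ you wrote down, i.e.\ the skew operator $\bb L$ (Lemma \ref{l3}). The one point where you genuinely diverge is the residual diagonal term $\int_0^T Q_t^n\big(n^{-1/2}\widetilde{\mc D}_n h_n\big)\,dt$: you correctly observe that power counting fails there (indeed $n^{-1/2}\widetilde{\mc D}_n h_n$ is only of order one in the weighted $\ell^2$ norm, so the a priori bound \eqref{ec3.6} gives nothing), but you propose to close this with a resolvent/Kipnis--Varadhan estimate on an additive functional. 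The paper's resolution is simpler and entirely deterministic: apply \eqref{ec3.15} a second time with $h=v_n$ solving the second Poisson equation \eqref{eq:poisson-v}, whose right-hand side is precisely the troublesome $n^{-1/2}\widetilde{\mc D}_n h_n$; explicit Fourier computations (Lemma \ref{lem:5,7.}) show that $\|v_n\|_{2,n}$, $\|\mc D_n v_n\|_{2,n}$ and $\|n^{-1/2}\widetilde{\mc D}_n v_n\|_{2,n}$ all vanish, after which every term is controlled by the a priori bounds alone. Relatedly, the whole argument lives at the level of the deterministic correlation functions $S_t(x,y)$, which satisfy a closed (dual) system, so no stochastic tightness or Gaussian moment bounds are required: the compactness used is weak compactness of deterministic paths in $L^2([0,T];\mc H_{-k})$, coming from the Cauchy--Schwarz bound \eqref{ec3.4}. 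A dynamical Kipnis--Varadhan argument is closer in spirit to \cite{JKO2} and might be made to work, but as stated it is the missing step in your outline, and you would in any case need the quantitative near-diagonal $L^2$ estimates on $h_n$ that the paper's second Poisson inversion supplies.
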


A fundamental step in the proof of this theorem will be the analysis of the correlation function $\{S_t(x,y); x \neq y \in \bb Z, t \geq 0\}$ given by
\begin{equation}
S_t(x,y) = \tfrac{\beta^2}{2} \bb E \big[ \big(\eta_0(x)^2-\tfrac{1}{\beta} \big) \eta_t(x) \eta_t(y)\big]
\end{equation}
for any $t \geq 0$ and any $x \neq y \in \bb Z$. Notice that this definition makes perfect sense for $x =y$ and, in fact, we have $S_t(x,x) = S_t(x)$. For notational convenience we define $S_t(x,x)$ as equal to $S_t(x)$. However, these quantities are of different nature, since $S_t(x)$ is related to {\em energy fluctuations} and $S_t(x,y)$ is related to {\em volume fluctuations} (for $x \neq y$).

\begin{remark}
It is not difficult to see that with a bit of technical work our techniques actually show that the distributions valued process $\{ {\mc E}_t^n (\cdot) \, ; \, t\ge 0\}$ defined for any test function $f$ by
$${\mc E}_{t}^n (f) = \cfrac{1}{\sqrt n} \sum_{x \in \ZZ} f \big (\tfrac{x}{n} \big) ( \eta_{tn^{3/2}} (x)^2 - \tfrac{1}{\beta})$$
converges as $n$ goes to infinity to an infinite dimensional $3/4$-fractional Ornstein-Uhlenbeck process, i.e. the centered Gaussian process with covariance prescribed by the right hand side of (\ref{ec1.12}).
\end{remark}

\begin{remark}
It is interesting to notice that $P_t$ is the maximally asymmetric $3/2$-Levy distribution.
It has power law as $|x|^{-5/2}$ towards the diffusive peak and stretched exponential as $\exp[-|x|^3]$ towards the exterior of the sound cone (\cite{UZ} Chapter 4). As noticed to us by H. Spohn, this reflects the expected physical property that  no propagation beyond the sound cone occurs.
\end{remark}

In order to prove Theorem \ref{t1} we can assume $\beta=1$ since the general case can be recovered from this particular case by multiplying the process by $\beta^{-1/2}$. Thus, in the rest of the paper $\beta=1$.

\section{Duality}

Let $\bb H_2$ be the subspace of $L^2(\mu_\beta)$ spanned by the functions $\{\eta(x)\eta(y); x\neq y \in \bb Z\}$, $\{\eta(x)^2-\frac{1}{\beta}; x \in \bb Z\}$. As we can see on Appendix \ref{sec:Abb}, the space $\bb H_2$ is left invariant under the action of the operator $L$. By the definition of the generator of a Markov process, we know that for any bounded, local, smooth function $F: \Omega \to \bb R$,
\begin{equation}
\tfrac{d}{dt} \bb E[F(\eta_t)] = \bb E [ L F(\eta_t)]
\end{equation}
for any $t \geq 0$. Moreover, the Markov property shows that for any bounded function $G: \Omega \to \bb R$ and any $t \geq 0$,
\begin{equation}
\tfrac{d}{dt} \bb E[G(\eta_0)F(\eta_t)] = \bb E [G(\eta_0) L F(\eta_t)].
\end{equation}
Taking well-chosen approximating functions, we can show that these formulas hold for functions in $\bb H_2$. Using the fact that the operator $L$ leaves $\bb H_2$ invariant, we see that there exists an operator $\mc L: \ell^2(\bb Z^2) \to \ell^2(\bb Z^2)$ such that
\begin{equation}
\label{ec2.3}
\tfrac{d}{dt} S_t(x,y) = \mc L S_t(x,y)
\end{equation}
for any $t \geq 0$ and for any $x,y \in \bb Z$. In other words, the family of functions $\{S_t(x,y); t \geq 0; x,y \in \bb Z\}$ satisfy a {\em closed} set of equations. This property is known as {\em duality} in the literature, since it allows to solve \eqref{ec2.3} explicitly in terms of the semigroup associated to the operator $\mc L$. Therefore, in principle the analysis of scaling limits of the functions $\{S_t(x,y); t \geq 0; x,y \in \bb Z\}$ can be obtained as a consequence of the analysis of scaling limits of the operator $\mc L$. We will see that this approach is actually not convenient, because it misses the different roles played by the conserved quantities.\\

\section{Weak formulation of \eqref{ec2.3}}
\label{sec:wf}

Denote by $\mc C_c^\infty(\bb R)$ the space of infinitely differentiable functions $f: \bb R \to \bb R$ of compact support. Let $g \in \mc C_c^\infty(\bb R)$ be a fixed function.
For each $n \in \bb N$ we define the field $\{\mc S_t^n; t \geq 0\}$ as
\begin{equation}
\mc S_t^n(f) = \tfrac{1}{n}\!\! \sum_{x, y \in \bb Z} g\big(\tfrac{\vphantom{y}x}{n} \big) f\big( \tfrac{y}{n}\big) S_{tn^{3/2}}(y-x)
\end{equation}
for any $t \geq 0$ and any $f \in \mc C_c^\infty(\bb R)$. Rearranging terms in a convenient way we have that
\begin{equation}
\mc S_t^n(f) = \tfrac{\beta^2}{2} \bb E \Big[ \Big(\tfrac{1}{\sqrt n} \sum_{x \in \bb Z} g\big(\tfrac{x}{n}\big)\big(\eta_0(x)^2-\tfrac{1}{\beta}\big) \Big) \times
	\Big( \tfrac{1}{\sqrt n} \sum_{y \in \bb Z}f\big(\tfrac{y}{n}\big)  \big(\eta_{tn^{3/2}}(y)^2 -\tfrac{1}{\beta}\big)  \Big)\Big].
\end{equation}
For any function $f \in \mc C_c^\infty(\bb R)$, define the weighted $\ell^2(\bb Z)$-norm as
\begin{equation}
\|f\|_{2,n} = \sqrt{ \vphantom{H^H_H}\smash{\tfrac{1}{n} \sum_{x \in \bb Z} f\big( \tfrac{x}{n} \big)^2}}.
\end{equation}
\vspace{0pt}

\noindent
By the Cauchy-Schwarz inequality we have the {\em a priori} bound
\begin{equation}
\label{ec3.4}
\big| \mc S_t^n(f) \big| \leq \|g\|_{2,n} \| f\|_{2,n}
\end{equation}
for any $t \geq 0$, any $n \in \bb N$ and any $f,g \in \mc C_c^\infty(\bb R)$. Let $\mc C_c^\infty(\bb R^2)$ be the space of infinitely differentiable functions $h: \bb R^2 \to \bb R$. For a function $h \in \mc C_c^\infty(\bb R^2)$ we define $\{ Q_t^n(h); t \geq 0\}$ as
\begin{equation}
 Q_t^n(h) = \tfrac{\beta^2}{2} \bb E \Big[ \Big(\tfrac{1}{\sqrt n} \sum_{x \in \bb Z} g\big(\tfrac{x}{n}\big)\big(\eta_0(x)^2-\tfrac{1}{\beta}\big) \Big) \times
	\Big( \tfrac{1}{ n}\!\! \sum_{y \neq z  \in \bb Z}\!\! h\big(\tfrac{y}{n}, \tfrac{\vphantom{y}z}{n}\big) \eta_{tn^{3/2}}(y) \eta_{tn^{3/2}}(z)   \Big)\Big].
\end{equation}
In this way we have defined a two-dimensional field $\{Q_t^n; t \geq 0\}$. Notice that
\begin{equation}
Q_t^n(h) = \tfrac{1}{n^{3/2}} \sum_{x,y,z \in \bb Z} g\big(\tfrac{\vphantom{y}x}{n} \big) h \big(\tfrac{y}{n}, \tfrac{\vphantom{y} z}{n}\big) S_{t n^{3/2}}(y-x,z-x).
\end{equation}
Notice as well that $Q_t^n(h)$ depends only on the symmetric part of the function $h$. Therefore, we will always assume, without loss of generality, that $h(x,y) = h(y,x)$ for any $x,y \in \bb Z$. We point out that $Q_t^n(h)$ does not depend on the values of $h$ at the diagonal $\{x=y\}$. We have the {\em a priori} bound
\begin{equation}
\label{ec3.6}
\big| Q_t^n(h) \big| \leq 2 \|g\|_{2,n} \|{\tilde h} \|_{2,n},
\end{equation}
where $\|{\tilde h}\|_n$ is the weighted $\ell^2(\bb Z^2)$-norm of $\tilde h$:
\begin{equation}
\|{\tilde h}\|_{2,n} = \sqrt{  \vphantom{H^H_H}\smash{\tfrac{1}{n^2} \!\!\sum_{x, y \in \ZZ}\!\! {\tilde h} \big(\tfrac{\vphantom{y}x}{n} , \tfrac{y}{n}\big)^2}}
\end{equation}
and $\tilde h$ is defined by
\begin{equation*}
{\tilde h} \big(\tfrac{\vphantom{y}x}{n} , \tfrac{y}{n}\big) = h \big(\tfrac{\vphantom{y}x}{n} , \tfrac{y}{n}\big) \, {\bf 1}_{x \ne y}.
\end{equation*}
We notice that we use the same notation for the  weighted $\ell^2(\bb Z)$-norm for functions in $\mc C_c^\infty(\bb R)$ and $ \mc C_c^\infty(\bb R^2)$.
Using the computations of the Appendix \ref{sec:Abb}, we can obtain some differential equations satisfied by the fields $\mc S_t^n$ and $Q_t^n$. Before writing these equations down, we need to introduce some definitions. For a function $f \in \mc C_c^\infty(\bb R)$, we define  $\Delta_n f: \bb R \to \bb R$ as
\begin{equation}
\Delta_n f \big( \tfrac{x}{n}\big) = n^2\Big( f\big(\tfrac{x\plus 1}{n} \big) + f\big( \tfrac{x\minus 1}{n} \big) - 2 f\big(\tfrac{x}{n}\big) \Big).
\end{equation}
In other words, $\Delta_n f$ is a discrete approximation of the second derivative of $f$. We also define $\nabla_n f \otimes \delta : \sfrac{1}{n} \bb Z^2 \to \bb R$ as
\begin{equation}
\label{eq:3.9}
\big(\nabla_n f \otimes \delta\big) \big( \tfrac{x}{n}, \tfrac{y}{n}\big) =
\begin{cases}
\frac{n^2}{2}\big(f\big(\tfrac{x+1}{n}\big) - f\big(\tfrac{x}{n}\big)\big); & y =x\plus 1\\
\frac{n^2}{2}\big(f\big(\tfrac{x}{n}\big) - f\big(\tfrac{x-1}{n}\big)\big); & y =x\minus 1\\
0; & \text{ otherwise.}
\end{cases}
\end{equation}
Less evident than the interpretation of $\Delta_n f$, $\nabla_n f \otimes \delta$ turns out to be a discrete approximation of the distribution $f'\!(x) \otimes \delta(x=y)$, where $\delta(x=y)$ is the $\delta$ of Dirac at the line $x=y$. We have that
\begin{equation}
\label{ec3.10}
\tfrac{d}{dt} \mc S_t^n(f) = -2Q_t^n(\nabla_n f \otimes \delta) +  \mc S_t^n(\tfrac{1}{\sqrt n}\Delta_n f).
\end{equation}
In this equation we interpret the term $Q_t^n(\nabla_n f \otimes \delta)$ in the obvious way.
By the {\em a priori} bound \eqref{ec3.4}, the term $ \mc S_t^n(\frac{1}{\sqrt n} \Delta_n f)$ is negligible, as $n \to \infty$. If the scaling $tn^{3/2}$ is correct, the term $Q_t^n( \nabla_n f \otimes \delta)$ should be the relevant one. This motivates the study of the field $Q_t^n$. In order to describe the equation satisfied by $Q_t^n(h)$, we need some extra definitions. For $h \in \mc C_c^\infty(\bb R^2)$ we define $\Delta_n h : \bb R^2 \to \bb R$ as
\begin{equation}
\Delta_n h\big( \tfrac{\vphantom{y}x}{n}, \tfrac{y}{n}\big) = n^2\Big( h\big( \tfrac{\vphantom{y}x+1}{n}, \tfrac{y}{n}\big)+h\big( \tfrac{\vphantom{y}x-1}{n}, \tfrac{y}{n}\big)+h\big( \tfrac{\vphantom{y}x}{n}, \tfrac{y+1}{n}\big)+ h\big( \tfrac{\vphantom{y}x}{n}, \tfrac{y-1}{n}\big) - 4 h\big( \tfrac{\vphantom{y}x}{n}, \tfrac{y}{n}\big)\Big).
\end{equation}
In other words, $\Delta_n h$ is a discrete approximation of the Laplacian of $h$. We also define $\mc A_n h: \bb R \to \bb R$ as
\begin{equation}
\mc A_n h\big( \tfrac{\vphantom{y}x}{n}, \tfrac{y}{n}\big) = n\Big(h\big( \tfrac{\vphantom{y}x}{n}, \tfrac{y-1}{n}\big)+ h\big( \tfrac{\vphantom{y}x-1}{n}, \tfrac{y}{n}\big)- h\big( \tfrac{\vphantom{y}x}{n}, \tfrac{y+1}{n}\big)-h\big( \tfrac{\vphantom{y}x+1}{n}, \tfrac{y}{n}\big)\Big).
\end{equation}
The function $\mc A_n h$ is a discrete approximation of the directional derivative $(-2,-2) \cdot \nabla h$. Let us define $\mc D_n h : \sfrac{1}{n} \bb Z \to \bb R$ as
\begin{equation}
\mc D_n h\big( \tfrac{x}{n} \big) = n \Big( h \big(\tfrac{x}{n}, \tfrac{x+1}{n}\big) - h \big( \tfrac{x-1}{n}, \tfrac{x}{n} \big) \Big)
\end{equation}
and $\widetilde {\mc D}_n h :\sfrac{1}{n} \bb Z^2 \to \bb R$ as
\begin{equation}
\widetilde{\mc D}_n h (\tfrac{x}{n},\tfrac{y}{n}) =
\begin{cases}
n^2 \big(h\big(\tfrac{x}{n}, \tfrac{x+1}{n}\big)-h\big(\tfrac{x}{n}, \tfrac{x}{n}\big)\big); & y =x+1\\
n^2 \big(h\big(\tfrac{x-1}{n}, \tfrac{x}{n}\big)-h\big(\tfrac{x-1}{n}, \tfrac{x-1}{n}\big)\big); & y=x-1\\
0; & \text{ otherwise.}
\end{cases}
\end{equation}
The function $\mc D_n h$ is a discrete approximation of the directional derivative of $h$ along the diagonal $x=y$, while $\widetilde{\mc D}_n h$ is a discrete approximation of the distribution $\partial_y h(x,x) \otimes \delta(x=y)$.
Finally we can write down the equation satisfied by the field $Q_t^n(h)$:
\begin{equation}
\label{ec3.15}
\tfrac{d}{dt} Q_t^n(h) = Q_t^n\big(n^{-1/2}\Delta_n h+ n^{1/2}\mc A_n h\big) -2 \mc S_t^n\big( \mc D_n h\big) + 2 Q_t^n\big(n^{-1/2}\widetilde{\mc D}_n h\big).
\end{equation}

Notice that in equation \eqref{ec3.15}, both fields $\mc S_t^n$ and $Q_t^n$ appear with non-negligible terms. Moreover, the term involving $Q_t^n$ is quite singular, since it involves an approximation of a distribution. Looking at the equations \eqref{ec3.10} and \eqref{ec3.15} we see a possible strategy: given $f \in \mc C_c^\infty(\bb R)$, if we choose $h$ in a careful way, we can try to cancel out the terms $Q_t^n( \nabla_n f \otimes \delta)$ and $Q_t^n(n^{-1/2} \Delta_n h + n^{1/2} \mc A_n h)$. Then the term $\mc S_t^n(\mc D_n h)$ will provide a non-trivial drift for the differential equation \eqref{ec3.10} and with a little bit of luck the term $Q_t^n( n^{-1/2} \widetilde{\mc D}_n h)$ turns out to be negligible. This is the strategy that will be pursued in the following section.

\section{Proof of Theorem \ref{t1}}
\label{sec:proof}

In this section we prove Theorem \ref{t1}. We start with a non-rigorous discussion as a guideline of what are we going to do.

\subsection{Heuristics}

As explained above, the idea is to combine equations \eqref{ec3.10} and \eqref{ec3.15} in a clever way in order to obtain a weak formulation of a differential equation involving the field $\mc S_t^n$ alone. Let $h_n: \sfrac{1}{n} \bb Z^2 \to \bb R$ be the solution of the equation
\begin{equation}
\label{Poisson}
n^{-1/2} \Delta_n h \big(\tfrac{\vphantom{y}x}{n},\tfrac{y}{n}\big) + n^{1/2} \mc A_n h \big(\tfrac{\vphantom{y}x}{n},\tfrac{y}{n}\big)  =  \nabla_n f \otimes \delta \big(\tfrac{\vphantom{y}x}{n},\tfrac{y}{n}\big) .
\end{equation}
Define $\bb H = \{(x,y) \in \bb R^2; y \geq 0\}$.
It turns out that $h_n \big(\tfrac{x}{n}, \tfrac{y}{n} \big)$ is well approximated by
$ g\big(\tfrac{x+y}{2n},\tfrac{|x-y|}{2\sqrt n} \big)$, where $g: \bb H \to \bb R$ is the solution of the Laplace equation
\begin{equation}
\left\{
\begin{array}{cl}
\partial_y^2 g(x,y) - 4\partial_x g(x,y) =0 &\quad \text{   for } x \in \bb R, y > 0,\\
2\partial_y g(x,0) = f'(x) &\quad \text{ for } x \in \bb R.
\end{array}
\right.
\end{equation}
The solution $g$ of this problem is unique, regular and square-integrable. Therefore, we expect that
\begin{equation}
\lim_{n \to \infty} \sqrt n \|h_n\|^2_{2,n} = 2 \int_{\bb H} g(x,y)^2 dx dy.
\end{equation}
In particular, $\|h_n\|_{2,n} = \mc O(\frac{1}{n^{1/4}})$. We also expect that
\begin{equation}
\lim_{n \to\infty} \mc D_n h_n \big( \tfrac{x}{n}\big) = \partial_x g(x,0).
\end{equation}
Considering the integral formulation of the differential equation satisfied by the sum $\mc S_t^n(f) + 2Q_t^n(h_n)$, we see that
\begin{equation}
\label{ec4.6}
\mc S_t^n(f) = \mc S_0^n(f) -4 \int_0^t \mc S_s^n( \partial_x g(\cdot,0))ds +4 \int_0^t \tfrac{1}{\sqrt n} Q_s^n( {\tilde {\mc D}}_n h_n )ds
\end{equation}
plus terms of order $\mc O(\frac{1}{\sqrt n})$. At this heuristic level, we can argue that the second integral on the right-hand side of \eqref{ec4.6} is small, since it has a $\frac{1}{\sqrt n}$ in front of it. This is not straightforward and, in fact, replacing $h_n$ by the approximation furnished by the function $g$,  one observes that $n^{-{1/2}} {\tilde {\mc D}}_n h_n$ diverges with $n$. A more careful study of the true solution $h_n$ shows that  $n^{-1/2} {\tilde {\mc D}}_n h_n$ is, in fact, of order $1$ in $L^2$. But even with this estimate the a priori bound (\ref{ec3.6}) is not sufficient to show that $\int_0^t \tfrac{1}{\sqrt n} Q_s^n( {\tilde {\mc D}}_n h_n )ds$ is small. Some extra dynamical argument detailed in Subsection \ref{subsec:colp} proves that this term vanishes, as $n\to\infty$.

Using Fourier transform, it can be shown that
\begin{equation}
\partial_x g(\cdot,0) =\tfrac{1}{4\sqrt{2}} \left[ (-\Delta)^{3/4} - \nabla(-\Delta)^{1/4} \right] f.
\end{equation}
Therefore, \eqref{ec4.6} is an approximated weak formulation of \eqref{ec1.13}. With a little bit of work, we can show that for $f: [0,t] \times \bb R \to \bb R$ regular enough,
\begin{equation}
\mc S_t^n(f_t) = \mc S_0^n(f_0) + \int_0^t \mc S_s^n(\partial_t f_s + \bb L f_s) ds
\end{equation}
plus terms of order $\mc O(\frac{1}{\sqrt  n})$.
Here we have used the notation $\bb L = -\frac{1}{{\sqrt 2}}(-\Delta)^{3/4}-\frac{1}{\sqrt 2}\nabla(-\Delta)^{1/4}$. Passing to the limit and showing that the function
\begin{equation}
f_s(x) = \int f(y)P_{t-s}(y-x) dy
\end{equation}
can be used as a test function, Theorem \ref{t1} would be proved.

\subsection{Topology and relative compactness}

It is not straightforward to follow the strategy of proof of Theorem \ref{t1} outlined in the previous section. Therefore, we will divide the proof in various steps. For topological reasons it will be convenient to fix a finite time-horizon $T>0$.
In this section we start showing that the sequence $\{\mc S_t^n; t \in [0,T]\}_{n \in \bb N}$ is relatively compact. Of course, we need to specify the topology with respect to which this sequence is relatively compact. Let us define the {\em Hermite polynomials} $H_\ell: \bb R \to \bb R$ as
\begin{equation}
H_\ell(x) = (-1)^\ell e^{\frac{x^2}{2}} \frac{d^\ell}{dx^\ell} e^{-\frac{x^2}{2}}
\end{equation}
for any $\ell \in \bb N_0$ and any $x \in \bb R$. We define the {\em Hermite functions} $f_\ell: \bb R \to \bb R$ as
\begin{equation}
f_\ell(x) = \tfrac{1}{\sqrt{\ell! \sqrt{2 \pi}}} H_\ell(x) e^{-\frac{x^2}{4}}
\end{equation}
for any $\ell \in \bb N_0$ and any $x \in \bb R$. The Hermite functions $\{f_\ell; \ell \in \bb N_0\}$ form an orthonormal basis of $L^2(\bb R)$. For each $k \in \bb R$, we define the {\em Sobolev space} $\mc H_k$ as the completion of $\mc C_c^\infty(\bb R)$ with respect to the norm $\|\cdot\|_{\mc H_k}$ defined as
\begin{equation}
\|g\|_{\mc H_k} = \sqrt{\vphantom{H^H_H}\smash{\sum_{\ell \in \bb N_0} (1+\ell)^{2k} \<f_\ell,g\>^2 } }
\end{equation}

\vspace{8pt}
\noindent
for any $g \in \mc C_c^\infty(\bb R)$. Here we use the notation $\<f_\ell,g\> = \int g(x) f_\ell(x) dx$. Notice that $\mc H_0 = L^2(\bb R)$ and $\mc H_k \subseteq L^2(\bb R) \subseteq \mc H_{-k}$, for any $k >0$. By continuity, the inner product $\<\cdot,\cdot\>$ can be extended to a continuous bilinear form in $\mc H_{k} \times \mc H_{-k}$ for any $k >0$.
This bilinear form allows us to identify, for any $k \in \bb R$, the space $\mc H_{-k}$ with the dual of $\mc H_{k}$.
An important property is that the inclusion $\mc H_k \subseteq \mc H_{k'}$ is compact and Hilbert-Schmidt, whenever $k-k'>\frac{1}{2}$. The space $\mc H_{k}$ is a Hilbert space with respect to the inner product
\begin{equation}
\<g,h\>_k = \sum_{\ell \in \bb N_0} (1+\ell)^{2k} \<f_\ell,g\>\<f_\ell,h\>.
\end{equation}
Let us denote by $\mc C([0,T]; \mc H_{k})$ the space of continuous functions from $[0,T]$ to $\mc H_k$. We have the following compactness criterion in $\mc C([0,T]; \mc H_{k})$ for $k$ negative enough:

\begin{proposition}
\label{p1}
For any $k<-\frac{1}{2}$, a sequence $\{ S^n_t; t \in [0,T]\}_{n \in \bb N}$ of elements in the space $\mc C([0,T];\mc H_{k})$ is relatively compact if:
\begin{itemize}
\item[i)] for any $\ell \in \bb N_0$ the sequence of real-valued functions $\{\<S_t^n, f_\ell\>; t \in [0,T]\}_{n \in \bb N}$ is equicontinuous,
\item[ii)] the set $\{S_t^n(f_\ell); t \in [0,T]; n \in \bb N; \ell \in \bb N_0\}$ is bounded in $\bb R$.
\end{itemize}
\end{proposition}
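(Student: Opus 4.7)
The plan is to invoke the Arzelà-Ascoli theorem in the complete metric space $\mc C([0,T]; \mc H_k)$, which reduces relative compactness to two ingredients: (a) for each fixed $t \in [0,T]$ the set $\{S_t^n; n \in \bb N\}$ is relatively compact in $\mc H_k$, and (b) the family $\{t \mapsto S_t^n\}_n$ is equicontinuous as maps from $[0,T]$ into $\mc H_k$. The whole argument rests on a single truncation estimate. Writing $P_N g = \sum_{\ell \leq N} \<f_\ell, g\> f_\ell$ for the orthogonal projection onto $\spanx\{f_0, \dots, f_N\}$, hypothesis (ii) together with $k<-1/2$ yields the uniform tail bound
\begin{equation}
\|S_t^n - P_N S_t^n\|_{\mc H_k}^2 = \sum_{\ell > N} (1+\ell)^{2k} \<S_t^n, f_\ell\>^2 \leq C^2 \sum_{\ell > N}(1+\ell)^{2k},
\end{equation}
which tends to $0$ as $N \to \infty$, uniformly in $n$ and $t$, precisely because the series $\sum_\ell (1+\ell)^{2k}$ converges when $k < -1/2$.

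For step (a), I would fix $t$ and $\epsilon > 0$, and choose $N$ so that the right-hand side of the tail estimate is less than $\epsilon^2/4$. By (ii), the projections $\{P_N S_t^n\}_n$ lie in a bounded subset of the finite-dimensional space $\spanx\{f_0, \dots, f_N\}$ and therefore admit a finite $\epsilon/2$-net in $\mc H_k$. Combined with the tail bound, this gives a finite $\epsilon$-net for $\{S_t^n\}_n$ in $\mc H_k$, which proves total boundedness and hence relative compactness in the complete Hilbert space $\mc H_k$.

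For step (b), given $\epsilon > 0$ I would choose $N$ so that $4C^2 \sum_{\ell > N}(1+\ell)^{2k} < \epsilon^2/2$. Hypothesis (i) then provides, for each $\ell \leq N$, some $\delta_\ell > 0$ such that $(1+\ell)^{2k}|\<S_t^n - S_s^n, f_\ell\>|^2 < \epsilon^2/(2(N+1))$ uniformly in $n$ whenever $|t-s| < \delta_\ell$. With $\delta = \min_{\ell \leq N}\delta_\ell$, applying the tail estimate to the increment $S_t^n - S_s^n$ (whose Hermite coefficients are bounded by $2C$) yields $\|S_t^n - S_s^n\|_{\mc H_k}^2 < \epsilon^2$ for $|t-s| < \delta$, uniformly in $n$. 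This is the required equicontinuity, and Arzelà-Ascoli concludes.

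The proof is essentially routine; the summability of $(1+\ell)^{2k}$ for $k < -1/2$ plays the role the Hilbert-Schmidt compactness of an embedding $\mc H_{k'} \hookrightarrow \mc H_k$ would play in a Rellich-type argument. The only point that needs care is the uniformity (in $n$ and $t$) of the truncation estimate, which is exactly what forces the assumption $k<-1/2$ rather than any weaker negative threshold; no real obstacle is anticipated.
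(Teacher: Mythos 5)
Your proof is correct and follows essentially the same route as the paper: Arzel\`a--Ascoli in $\mc C([0,T];\mc H_k)$, with both ingredients obtained by splitting the sum over $\ell$ into a finite head controlled by hypotheses (i)--(ii) and a tail controlled by hypothesis (ii) together with the summability of $(1+\ell)^{2k}$ for $k<-\frac{1}{2}$. If anything, your step (a) is slightly more careful than the paper's, which only asserts boundedness in $\mc H_k$ "by a similar argument," whereas the infinite-dimensional Arzel\`a--Ascoli theorem genuinely requires the pointwise relative compactness that your truncation-plus-$\epsilon$-net argument supplies.
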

\begin{proof}
By the Arzela-Ascoli theorem, we need to prove equicontinuity and boundedness of $\{ S^n_t; t \in [0,T]\}_{n \in \bb N}$ in $\mc C([0,T]; \mc H_k)$.
Notice that
\begin{equation}
\begin{split}
\sup_{|t-s| <\delta} \|S_t^n-S_s^n\|_{\mc H_k}^2
	&= \sup_{|t-s|<\delta} \sum_{\ell \geq 0} (1+\ell)^{2k} \big| \<S_t^n,f_\ell\> - \<S_s^n,f_\ell\>\big|^2\\
	&\leq \sum_{\ell \geq 0} \sup_{|t-s|<\delta}  (1+\ell)^{2k} \big| \<S_t^n,f_\ell\> - \<S_s^n,f_\ell\>\big|^2.\\
\end{split}
\end{equation}
Therefore, for each $M \in \bb N$,
\begin{equation}
\begin{split}
\sup_{|t-s| <\delta} \|S_t^n-S_s^n\|_{\mc H_k}^2
	&\leq \sum_{\ell=0}^{M-1} (1+\ell)^{2k}  \sup_{|t-s|<\delta} \big| \<S_t^n,f_\ell\> - \<S_s^n,f_\ell\>\big|^2\\
	&\quad \quad + 4\sup_{\substack{t \in [0,T]\\n \in \bb N\\ \ell \in \bb N_0}}\big|\<S_t^n,f_\ell\>\big|^2 \sum_{\ell \geq M} (1+\ell)^{2k}.
\end{split}
\end{equation}
By ii), making $M$ large enough and independent of $n$ or $T$ the second sum can be made arbitrarily small. Now that $M$ is fixed, the first sum can be made arbitrarily small taking $\delta$ small enough, independently of $n$ or $T$. This proves the equicontinuity of the sequence $\{S_t^n; t \in [0,T]\}_{n \in \bb N}$. The boundedness follows from ii) and a similar argument.
\end{proof}
Another very useful compactness criterion is given by the Banach-Alaoglu theorem, on its version for Hilbert spaces:

\begin{proposition}[Banach-Alaoglu theorem]
\label{pBA}
Let $\mc H$ be a separable Hilbert space. Any set $ \mc K \subseteq \mc H$ that is bounded with respect to the strong topology of $\mc H$ is sequentially, weakly relatively compact in $\mc H$.
\end{proposition}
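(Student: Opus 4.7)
The plan is to prove the statement by a diagonal extraction argument exploiting separability, followed by a Riesz representation step to identify the weak limit as an element of $\mc H$. Although this is a textbook result, I will sketch the argument in a self-contained way since it is the form in which we need it.

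First I would fix a bounded sequence $\{x_n\}_{n \in \bb N} \subseteq \mc K$, with $\|x_n\|_{\mc H} \le R$ for some $R < \infty$. Choose an orthonormal basis $\{e_k\}_{k \in \bb N}$ of $\mc H$, which exists by separability. For each fixed $k$, the Cauchy-Schwarz inequality yields $|\< x_n, e_k \>| \le R$, so $\{\< x_n, e_k \>\}_{n \in \bb N}$ is a bounded sequence in $\bb R$. A standard Cantor diagonal extraction then produces a subsequence $\{x_{n_j}\}_{j \in \bb N}$ such that $a_k := \lim_{j \to \infty} \< x_{n_j}, e_k \>$ exists in $\bb R$ for every $k \in \bb N$.

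Next I would show that the candidate limit functional is continuous. Bessel's inequality applied to any finite partial sum gives $\sum_{k=1}^M a_k^2 = \lim_{j \to \infty} \sum_{k=1}^M \< x_{n_j}, e_k\>^2 \le R^2$, so $\sum_{k \in \bb N} a_k^2 \le R^2$ and consequently $x := \sum_{k \in \bb N} a_k e_k$ defines an element of $\mc H$ with $\|x\|_{\mc H} \le R$. It then remains to prove that $\< x_{n_j}, y \> \to \< x, y \>$ for every $y \in \mc H$. Writing $y = \sum_{k \in \bb N} \< y, e_k\> e_k$ and truncating at level $M$, one splits
\begin{equation*}
\< x_{n_j} - x, y \> = \sum_{k=1}^{M} \< y, e_k\>\big(\< x_{n_j}, e_k\> - a_k\big) + \< x_{n_j} - x, y - y_M\>,
\end{equation*}
where $y_M = \sum_{k=1}^M \< y, e_k\> e_k$. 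The second term is bounded by $2R \|y - y_M\|_{\mc H}$, which can be made arbitrarily small by choosing $M$ large, and for fixed $M$ the first term tends to $0$ as $j \to \infty$ by construction.

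The main (and essentially only) technical point is organizing the diagonal extraction and the two-parameter $\varepsilon/M$ estimate cleanly; nothing deep is at stake, since the bound $\|x\|_{\mc H} \le R$ from Bessel is automatic and separability is used only to reduce weak convergence to convergence of countably many scalar sequences. No non-gradient input or dynamical information is needed here: the proposition is purely functional-analytic and will be invoked later to extract weak limits of the fields $\mc S_t^n$ and $Q_t^n$ in appropriate Sobolev spaces $\mc H_k$.
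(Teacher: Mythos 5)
Your argument is correct and complete: the diagonal extraction over an orthonormal basis, the Bessel bound $\sum_k a_k^2 \le R^2$ guaranteeing that the candidate limit $x$ lies in $\mc H$ with $\|x\|_{\mc H}\le R$, and the $\varepsilon/M$ splitting against an arbitrary $y$ together constitute the standard proof of sequential weak compactness of bounded sets in a separable Hilbert space. The paper itself offers no proof of this proposition --- it is stated as the Hilbert-space version of the classical Banach--Alaoglu theorem and simply invoked later for $\mc H_{-k}$ and $L^2([0,T];\mc H_{-k})$ --- so there is no argument to compare against; your self-contained write-up supplies exactly the textbook proof the authors are implicitly relying on.
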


We will use this proposition for the Hilbert spaces $\mc H_{-k}$ and $L^2([0,T]; \mc H_{-k})$ for $k$ big enough.

Recall the {\em a priori} bound \eqref{ec3.4}.
In order to make an effective use of Proposition \ref{pBA}, we need a way to estimate the $\ell^2_n(\bb Z)$-norm of various discretizations of $f_\ell$ in terms of their continuous counterparts. Let us denote by $\|\cdot\|_p$ the $L^p(\bb R)$-norm. We have the following lemma:

\begin{lemma}
\label{l1}
For any smooth function $f: \bb R \to \bb R$,
\begin{equation}
\Big| \tfrac{1}{n} \sum_{x \in \bb Z} f \big(\tfrac{x}{n} \big)^2 - \int f(x)^2 dx\Big| \leq \tfrac{2}{n}\|f^{\prime}\|_1  \|f\|_\infty .
\end{equation}
\end{lemma}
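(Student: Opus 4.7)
The plan is to view the left-hand side as a Riemann-sum error, localized box by box. Writing
\[
\int f(x)^2\, dx = \sum_{x \in \bb Z} \int_{x/n}^{(x+1)/n} f(y)^2\, dy,
\]
I would subtract this from $\frac{1}{n}\sum_x f(x/n)^2$ and, on each interval $[x/n,(x+1)/n]$, compare $f(x/n)^2$ with $f(y)^2$. The standard trick is to factor the difference of squares:
\[
f(x/n)^2 - f(y)^2 = \bigl(f(x/n)-f(y)\bigr)\bigl(f(x/n)+f(y)\bigr),
\]
bounding the second factor by $2\|f\|_\infty$ and the first by $|f(x/n)-f(y)| \le \int_{x/n}^{(x+1)/n} |f'(s)|\, ds$ via the fundamental theorem of calculus, which is uniform in $y \in [x/n,(x+1)/n]$.

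With these estimates in hand, the integration in $y$ over $[x/n,(x+1)/n]$ contributes an extra factor of $1/n$, yielding
\[
\Bigl| \tfrac{1}{n} f(x/n)^2 - \int_{x/n}^{(x+1)/n} f(y)^2\, dy \Bigr| \le \tfrac{2}{n}\|f\|_\infty \int_{x/n}^{(x+1)/n}|f'(s)|\, ds.
\]
Summing over $x \in \bb Z$ and recognizing $\sum_x \int_{x/n}^{(x+1)/n}|f'(s)|\, ds = \|f'\|_1$ closes the estimate. Since $f$ is assumed smooth and, in the applications of this lemma, of sufficient decay (in fact it will be applied to Hermite or compactly supported functions), $\|f'\|_1$ and $\|f\|_\infty$ are finite, so there is no obstacle from integrability. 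There is essentially no hard step: the argument is a one-line telescoping of the difference-of-squares identity against $\int |f'|$, and no dynamical or probabilistic input is required.
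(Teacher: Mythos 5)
Your argument is correct and is essentially identical to the paper's proof, which consists of the single observation that $\big|\int_a^b (f(x)^2-f(a)^2)\,dx\big| \le 2(b-a)\sup_x|f(x)|\int_a^b|f'(x)|\,dx$ applied on each cell $[x/n,(x+1)/n]$ and summed. Nothing to add.
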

\begin{proof}
It is enough to observe that for any $a<b$,
\begin{equation}
\Big|\int_a^b \big(f(x)^2- f(a)^2\big) dx \Big| \leq 2(b-a) \sup_x |f(x)| \int_a^b |f'(x)|dx.
\end{equation}
\end{proof}

In view of this lemma, we need a way to compute $L^p(\bb R)$-norms of Hermite functions. We have the following:

\begin{proposition}
\label{p2}
For any $\delta >0$ there exists a constant $c=c(1,\delta)$ such that
\begin{equation}
\label{ec5.17}
\|f_\ell\|_1 \leq c(1+\ell)^{\frac{1+\delta}{4}}
\end{equation}
for any $\ell \in \bb N_0$. There also exists a constant $c(\infty)$ such that
\begin{equation}
\label{ec5.18}
\|f_\ell\|_\infty \leq \frac{c(\infty)}{(1+\ell)^{\frac{1}{6}}}.
\end{equation}
for any $\ell \in \bb N_0$.
\end{proposition}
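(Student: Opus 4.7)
The two bounds are of very different natures, so I would treat them separately: \eqref{ec5.18} is (a weak form of) the classical sharp estimate for the maximum of a Hermite function, while \eqref{ec5.17} follows from a Cauchy-Schwarz argument on the effective support $|x| \lesssim \sqrt\ell$ combined with a Gaussian tail bound.

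For \eqref{ec5.18}, I would first reduce to the physicist's Hermite functions
\begin{equation*}
\psi_\ell(y) = (2^\ell \ell!\sqrt\pi)^{-1/2}\, H_\ell^{P}(y)\, e^{-y^2/2},
\end{equation*}
where $H_\ell^{P}$ denotes the physicist's polynomial, related to the probabilist's one used here by $H_\ell^P(y) = 2^{\ell/2} H_\ell(\sqrt 2\, y)$. The change of variables $x = \sqrt 2\, y$ together with matching the $L^2$-normalizations yields the explicit identity $f_\ell(x) = 2^{-1/4}\psi_\ell(x/\sqrt 2)$. I would then invoke the Plancherel-Rotach asymptotics: $\psi_\ell$ is oscillatory of amplitude $O(\ell^{-1/4})$ in the interior $|y| < \sqrt{2\ell+1}(1-\epsilon)$, passes through an Airy transition layer of amplitude $O(\ell^{-1/12})$ near the turning point $|y| \sim \sqrt{2\ell+1}$, and decays exponentially beyond. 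The supremum is thus $O(\ell^{-1/12})$, which is strictly stronger than the required $(1+\ell)^{-1/6}$. The cleanest route is to cite Szeg\H{o}'s monograph rather than reproduce the full asymptotic analysis.

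For \eqref{ec5.17}, set $R_\ell = (1+\ell)^{(1+\delta)/2}$ and split
\begin{equation*}
\|f_\ell\|_1 \;=\; \int_{|x|\le R_\ell} |f_\ell(x)|\, dx \;+\; \int_{|x|> R_\ell} |f_\ell(x)|\, dx.
\end{equation*}
On the inner region, Cauchy-Schwarz together with $\|f_\ell\|_2 = 1$ immediately gives
\begin{equation*}
\int_{|x|\le R_\ell} |f_\ell(x)|\, dx \;\le\; \sqrt{2R_\ell}\;\|f_\ell\|_2 \;=\; \sqrt 2\, (1+\ell)^{(1+\delta)/4},
\end{equation*}
which is precisely of the order claimed. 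For the outer region, since $R_\ell \gg \sqrt\ell$ lies far beyond the turning point $\sim 2\sqrt\ell$, I would use the explicit monomial expansion of $H_\ell$, Stirling's formula for the $1/\sqrt{\ell!}$ normalization, and the Gaussian weight $e^{-x^2/4}$ to show that this tail decays faster than any negative power of $(1+\ell)$, and is therefore negligible compared to the inner contribution.

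The main obstacle is the $L^\infty$ bound, whose sharp form rests on the subtle Airy-type behavior of $\psi_\ell$ at the turning point; deriving it from scratch is not short, so invoking Plancherel-Rotach as a black box is the practical route. The $L^1$ argument is by contrast completely elementary, and the slightly weak exponent $(1+\delta)/4$ instead of the sharp $1/4$ is precisely the price paid for using crude Cauchy-Schwarz on the inner region rather than a refined oscillation estimate; since any $\delta>0$ is allowed, this loss is harmless for the later applications in Section~\ref{sec:proof}.
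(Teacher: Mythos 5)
Your treatment of \eqref{ec5.17} coincides with the paper's own proof in Appendix \ref{A.7}: the same splitting of the integral at $|x|\leq (1+\ell)^{(1+\delta)/2}$, Cauchy--Schwarz against $\|f_\ell\|_2=1$ on the inner region, and a super-polynomially small tail obtained from the monomial expansion of $H_\ell$, an explicit Gaussian-tail integral and Stirling's formula. That half is fine.

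The $L^\infty$ half contains a genuine error: your comparison of exponents is backwards. Since $\ell^{-1/12}\geq \ell^{-1/6}$ for $\ell\geq 1$, the Plancherel--Rotach bound $\|f_\ell\|_\infty=O(\ell^{-1/12})$ is strictly \emph{weaker} than the claimed $\|f_\ell\|_\infty\leq c(\infty)(1+\ell)^{-1/6}$, not stronger, so your argument does not deliver \eqref{ec5.18} as written. Worse, the very asymptotics you invoke show that \eqref{ec5.18} cannot hold with the exponent $1/6$: the Airy transition layer makes $\max_x|\psi_\ell(x)|$ genuinely of order $\ell^{-1/12}$, hence $(1+\ell)^{1/6}\|f_\ell\|_\infty\to\infty$. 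The paper itself only cites \cite{KT}, whose sup-norm bound is $\lambda^{-1/6}$ with $\lambda^2=2\ell+1$, i.e.\ $(1+\ell)^{-1/12}$ up to constants; the exponent in \eqref{ec5.18} should therefore be read as $1/12$. This discrepancy is harmless downstream --- the paper explicitly notes that any polynomial bound suffices, since \eqref{ec5.18} only enters through the product $\|f_\ell'\|_1\,\|f_\ell\|_\infty$ in \eqref{ec4.22a}, and replacing $1/6$ by $1/12$ merely shifts the threshold on $k$ slightly --- but you should prove the attainable exponent $1/12$ (which your Plancherel--Rotach reduction does, once the direction of the inequality is stated correctly) rather than assert an implication that runs the wrong way.
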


The estimate \eqref{ec5.17} is proved in Appendix \ref{A.7}, and the estimate \eqref{ec5.18} is proved in \cite{KT} for example. Notice that any polynomial bound (even positive!) would have sufficed for what follows. Let us see how to use this proposition in order to obtain bounds on the $L^p$-norms of Hermite functions. The Hermite functions $\{f_\ell; \ell \in \bb N_0\}$ satisfy the relation
\begin{equation}
\label{ec4.20}
f_\ell' = \tfrac{1}{2} \big( \sqrt{\ell} f_{\ell-1} - \sqrt{\ell+1} f_{\ell+1}\big).
\end{equation}
Therefore, for any $\delta>0$ there exists a constant $c$ such that
\begin{equation}
\label{ec4.21}
\|f_\ell'\|_1 \leq c (1+\ell)^{3/4+\delta}
\end{equation}
for each $\ell \in \bb N_0$.
In particular, by Lemma \ref{l1} there exists a constant $c$ such that
\begin{equation}
\label{ec4.22a}
\|f_\ell\|_{2,n}^2 \leq 1 + \frac{ c(1+\ell)^{\frac{7}{12}+\delta}}{n}
\end{equation}
for any $\ell \in \bb N_0$ and any $n \in \bb N$. This estimate combined with the {\em a priori} bound \eqref{ec3.4} gives that
\begin{multline}
\label{ec4.23}
\|\mc S_t^n\|_{\mc H_{-k}}^2
	= \sum_{\ell \geq 0} (1+\ell)^{-2k} \big|\mc S_t^n(f_\ell)\big|^2\leq \\
	\leq \|g\|_{2,n}^2 \sum_{\ell \geq 0} \frac{1}{(1+\ell)^{2k}}\Big(1+\frac{c (1+\ell)^{\frac{7}{12}+\delta}}{n}\Big)
\end{multline}
for any $t \geq 0$. Since $g$ is smooth, by Lemma \ref{l1} $\|g\|_{2,n}$ is bounded in $n$. Therefore we conclude that
\begin{lemma}
\label{l2}
For any $k>\frac{19}{24}$, the sequence $\{\mc S^n_t; t \in [0,T]\}_{n \in \bb N}$ is sequentially, weakly relatively compact in $L^2([0,T]; \mc H_{-k})$. Moreover, for any $t \in [0,T]$ fixed, the sequence $\{\mc S_t^n; n \in \bb N\}$ is sequentially, weakly relatively compact in $\mc H_{-k}$.
\end{lemma}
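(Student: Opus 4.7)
The plan is to deduce both compactness statements directly from the a priori estimate \eqref{ec4.23} combined with the Banach--Alaoglu theorem (Proposition \ref{pBA}). The whole strategy boils down to showing that $\|\mc S_t^n\|_{\mc H_{-k}}$ is bounded uniformly in $t \in [0,T]$ and $n \in \bb N$ as soon as $k > 19/24$; once this is in hand, both statements are immediate from Proposition \ref{pBA} applied to the separable Hilbert spaces $\mc H_{-k}$ and $L^2([0,T]; \mc H_{-k})$, respectively.

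First I would examine the right-hand side of \eqref{ec4.23}. The series
\begin{equation*}
\sum_{\ell \geq 0} \frac{1}{(1+\ell)^{2k}}\Big(1+\tfrac{c(1+\ell)^{\frac{7}{12}+\delta}}{n}\Big)
\end{equation*}
splits into two pieces. The first piece, $\sum_\ell (1+\ell)^{-2k}$, converges as long as $2k > 1$. The second piece, $\frac{c}{n}\sum_\ell (1+\ell)^{-2k+\frac{7}{12}+\delta}$, converges provided $2k - \tfrac{7}{12} - \delta > 1$, i.e. $k > \tfrac{19}{24} + \tfrac{\delta}{2}$. Thus, given any $k > 19/24$, one can pick $\delta > 0$ small enough so that both sums converge, and the second sum is additionally bounded by $C/n$, hence by $C$. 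Next I would observe that since $g \in \mc C_c^\infty(\bb R)$, Lemma \ref{l1} ensures $\|g\|_{2,n}^2 \to \|g\|_2^2$, so $\sup_n \|g\|_{2,n}^2 < \infty$. Combining these remarks yields a constant $C = C(k,g,T) < \infty$ such that
\begin{equation*}
\sup_{n \in \bb N} \sup_{t \in [0,T]} \|\mc S_t^n\|_{\mc H_{-k}}^2 \leq C.
\end{equation*}

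For the second statement of the lemma, I would now fix $t \in [0,T]$ and note that $\{\mc S_t^n; n \in \bb N\}$ is bounded in the separable Hilbert space $\mc H_{-k}$; Proposition \ref{pBA} then delivers weak sequential relative compactness in $\mc H_{-k}$. For the first statement, I would integrate the uniform bound over $[0,T]$ to obtain
\begin{equation*}
\sup_{n \in \bb N} \int_0^T \|\mc S_t^n\|_{\mc H_{-k}}^2 \, dt \leq CT,
\end{equation*}
so that $\{\mc S^n_\cdot\}_{n \in \bb N}$ is bounded in the separable Hilbert space $L^2([0,T]; \mc H_{-k})$; a second application of Proposition \ref{pBA} to this space gives weak sequential relative compactness in $L^2([0,T]; \mc H_{-k})$, completing the argument.

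There is really no hard step here: the whole content has already been packaged into the a priori bound \eqref{ec3.4}, the Hermite-function $L^\infty$ estimate \eqref{ec5.18}, and the discretization lemma (Lemma \ref{l1}) via \eqref{ec4.22a}. The only potential pitfall is bookkeeping the exponent $19/24$ correctly; this is the exact threshold at which the $\ell$-sum coming from $\|f_\ell\|_\infty^2$ through \eqref{ec4.22a} remains summable against $(1+\ell)^{-2k}$.
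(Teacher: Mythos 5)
Your proposal is correct and follows essentially the same route as the paper: the uniform bound $\sup_{n,t}\|\mc S_t^n\|_{\mc H_{-k}}^2<\infty$ is exactly what the paper extracts from \eqref{ec4.23} (via \eqref{ec3.4}, \eqref{ec4.22a} and Lemma \ref{l1}), and the conclusion then follows from Proposition \ref{pBA} applied to $\mc H_{-k}$ and $L^2([0,T];\mc H_{-k})$, with the same threshold $2k>1+\tfrac{7}{12}+\delta$ giving $k>\tfrac{19}{24}$.
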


\subsection{Characterization of limit points}
\label{subsec:colp}
In this section we obtain various properties satisfied by any limit point of $\{\mc S_t^n; t \in [0,T]\}_{n \in \bb N}$ and we will show that these properties characterize the limit point in a unique way.
Fix $k >\frac{19}{24}$ and let $\{\mc S_t; t \in [0,T]\}$ be a limit point of $\{\mc S_t^n; t \in [0,T]\}_{n \in \bb N}$ with respect to the weak topology of $L^2([0,T]; \mc H_{-k})$. With some abuse of notation, we will denote by $n$ the subsequence for which $\{\mc S_t^n; t \in [0,T]\}_{n \in \bb N}$ converges to $\{\mc S_t; t \in [0,T]\}$. Without loss of generality, we can assume that the distribution $\{\mc S_t^n\}_{n \in \bb N}$ converges to $\mc S_t$ with respect to the weak topology of $\mc H_{-k}$ and that the path $\{\mc S_s^n; s \in [0,t]\}_{n \in \bb N}$ converges to $\{\mc S_s; s \in [0,t]\}$ with respect to the weak topology of $L^2([0,t]; \mc H_{-k})$ for any $t \in [0,T]$ such that $\frac{t}{T} \in \bb Q$.
In order to simplify the notation, we define $[0,T]_{\bb Q} = \{t \in [0,T]; \frac{t}{T} \in \bb Q\}$.

Fix a function $f \in \mc C_c^\infty(\bb R)$ and let $h_n : \frac{1}{n} \bb Z \times  \frac{1}{n} \bb Z \to \bb R$ be the solution of the equation
\begin{equation}
\label{ec4.22}
n^{-1/2}\Delta_n h+ n^{1/2}\mc A_n h= \nabla_n f \otimes \delta.
\end{equation}
The following properties of $h_n$ are shown in the Appendix \ref{sec:proof_l3}:

\begin{lemma}
\label{l3}
Let $f \in \mc C_c^\infty(\bb R)$.
The solution of \eqref{ec4.22} satisfies
\begin{equation}
\label{eq:l31}
\lim_{n \to \infty} \frac{1}{n^2} \sum_{x,y \in \bb Z} h_n\big(\tfrac{\vphantom{y}x}{n}, \tfrac{y}{n}\big)^2 =0
\end{equation}
and
\begin{equation}
\label{eq:l32}
\lim_{n \to \infty} \frac{1}{n} \sum_{x \in \bb Z} \big|\mc D_n h_n \big(\tfrac{x}{n}\big) + \tfrac{1}{4} \bb L f\big(\tfrac{x}{n}\big) \big|^2 =0.
\end{equation}
In other words, $\|h_n\|_{2,n}$, $\|\mc D_n h_n + \tfrac{1}{4} \bb Lf \|_{2,n}$ converge to $0$, as $n \to \infty$.
\end{lemma}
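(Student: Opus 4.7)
The plan is to solve the discrete Poisson-type equation \eqref{ec4.22} explicitly via the double discrete Fourier transform on $\bb Z^{2}$ and then read both asymptotics off the resulting formula using Plancherel. Intuitively the operator $n^{-1/2}\Delta_n+n^{1/2}\mc A_n$ has a symmetric part of size $n^{-1/2}$ and an antisymmetric part of size $n^{1/2}$, and these force a macroscopic ($\sim 1/n$) scaling in the diagonal direction $\zeta=\xi+\eta$ together with a diffusive ($\sim 1/\sqrt n$) scaling in the anti-diagonal direction $\omega=\xi-\eta$; the mismatch of these two scales is precisely what produces the exponent $3/4$. Setting $\hat h_n(\xi,\eta)=\sum_{x,y}h_n(\tfrac{x}{n},\tfrac{y}{n})e^{-i(\xi x+\eta y)}$ and $\hat f_n(\zeta)=\sum_x f(\tfrac{x}{n})e^{-i\zeta x}$, direct computation shows that \eqref{ec4.22} reads $-2n^{3/2}(2-e^{-i\xi}-e^{-i\eta})\hat h_n(\xi,\eta)=in^{2}(\sin\xi+\sin\eta)\hat f_n(\xi+\eta)$, so
\begin{equation}\label{plan:hatH}
\hat h_n(\xi,\eta)=-\frac{in^{1/2}(\sin\xi+\sin\eta)\,\hat f_n(\xi+\eta)}{2\,(2-e^{-i\xi}-e^{-i\eta})}.
\end{equation}

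To prove \eqref{eq:l31}, after the rotation $\zeta=\xi+\eta$, $\omega=\xi-\eta$ (Jacobian $1/2$) Plancherel gives
\[
\|h_n\|_{2,n}^{2}=\tfrac{1}{32\pi^{2}n}\iint\frac{\sin^{2}(\zeta/2)\,\cos^{2}(\omega/2)\,|\hat f_n(\zeta)|^{2}}{1-2\cos(\zeta/2)\cos(\omega/2)+\cos^{2}(\omega/2)}\,d\zeta\,d\omega.
\]
The denominator vanishes only at the origin and expands there as $\zeta^{2}/4+\omega^{4}/64+\text{h.o.t.}$, while $\hat f_n$ is concentrated on $|\zeta|\lesssim 1/n$. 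Rescaling $\zeta=k/n$, $\omega=v/\sqrt n$ produces a Jacobian $n^{-3/2}$ and an integrand controlled by $16k^{2}|\hat f(k)|^{2}/(v^{4}+16k^{2})$; the $v$-integral is $C|k|^{-1/2}|\hat f(k)|^{2}$, which is integrable for smooth compactly supported $f$. Collecting the $n$-prefactors yields $\|h_n\|_{2,n}^{2}=\mc O(n^{-1/2})\to 0$, in agreement with the heuristic $\|h_n\|_{2,n}=\mc O(n^{-1/4})$ announced in the paper.

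To prove \eqref{eq:l32}, observe that $\mc D_n h_n(\tfrac{x}{n})=n[h_n(\tfrac{x}{n},\tfrac{x+1}{n})-h_n(\tfrac{x-1}{n},\tfrac{x}{n})]$, and inverting the Fourier transform in the $\xi$ variable gives $\widehat{\mc D_n h_n}(\lambda)=\tfrac{n}{2\pi}\int_{-\pi}^{\pi}\hat h_n(\lambda-\eta,\eta)(e^{i\eta}-e^{-i(\lambda-\eta)})\,d\eta$. Substituting \eqref{plan:hatH} and making the shift $\eta=\lambda/2+\alpha$, the numerator factors cleanly as $4i\sin^{2}(\lambda/2)\cos\alpha\,e^{i\alpha}$ and the denominator becomes $2(1-e^{-i\lambda/2}\cos\alpha)$. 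Rescaling $\lambda=\mu/n$ (macroscopic) and $\alpha=\beta/\sqrt n$ (diffusive), the integrand concentrates on $\beta\in\bb R$ against $(\beta^{2}+i\mu)^{-1}$ and a contour integral identifies the pointwise limit
\[
\tfrac{1}{n}\,\widehat{\mc D_n h_n}(\mu/n)\;\longrightarrow\;\tfrac{1}{4\sqrt 2}\big(|\mu|^{3/2}-i\mu|\mu|^{1/2}\big)\hat f(\mu)\;=\;-\tfrac14\,\widehat{\bb L f}(\mu),
\]
and Plancherel converts this into \eqref{eq:l32}. The main obstacle is precisely this last $L^{2}$-promotion: one needs an $n$-uniform integrable majorant for the rescaled integrand in order to pass from the pointwise Fourier-side limit to the strong $\ell^{2}_n$-convergence required in \eqref{eq:l32}. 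The dominating envelope follows from the key lower bound $|1-e^{-i\lambda/2}\cos\alpha|^{2}\gtrsim \sin^{2}(\lambda/2)+\sin^{4}(\alpha/2)$, which controls the denominator away from its singularity, combined with the rapid decay of $\hat f_n$ for $f$ smooth and compactly supported; these two ingredients together produce a bound of the form $C(1+|\mu|)^{3/2}|\hat f(\mu)|$ times an absolutely integrable function of $\beta$, which allows dominated convergence.
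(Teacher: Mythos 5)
Your proposal follows essentially the same route as the paper: solve \eqref{ec4.22} explicitly in Fourier variables, pass to diagonal/antidiagonal coordinates, exploit the $1/n$ versus $1/\sqrt n$ scaling mismatch (the paper's Lemma \ref{lem:W} is exactly your $\int dv/(v^4+k^2)$ estimate, up to a harmless slip in your exponent $|k|^{-1/2}$, which should read $|k|^{1/2}$), and identify the limiting symbol $\tfrac{\pi^{3/2}}{2}|y|^{3/2}(1+i\,\mathrm{sgn}(y))$ of $-\tfrac14\bb L$. The only substantive difference is in the last step of \eqref{eq:l32}: where you invoke dominated convergence with a claimed majorant, the paper computes the symbol $G$ in closed form by the residue theorem, obtains the quantitative bound $|G-G_0|\le C|y|^2$, and then splits $q-q_n$ into three explicitly estimated pieces (high-frequency tail of $\mc F f$, the discrepancy $\mc Ff-\widehat{f}_n$, and the symbol error), which is the careful version of what your sketch defers.
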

By \eqref{ec3.10} and \eqref{ec3.15}, we see that
\begin{equation}
\begin{split}
\mc S_T^n(f) &= \mc S_0^n(f) + \int_0^T \mc S_t^n\big(-4\mc D_n h_n\big) dt+ 2 \big[Q_0^n(h_n) - Q_T^n(h_n)\big] \\
&\hspace{3cm}+ \int_0^T \mc S_t^n\big(\tfrac{1}{\sqrt n}\Delta_n f \big) dt + 4\int_0^T {Q}_t^n \big( \tfrac{1}{\sqrt n}\widetilde{\mc D}_n(h_n) \big) dt.
\end{split}
\end{equation}
Therefore, by the {\em a priori} bound \eqref{ec3.6} and by Lemma \ref{l3}, we have that
\begin{equation}
\label{ec4.277}
\mc S_T^n(f) = \mc S_0^n(f) + \int_0^T \mc S_t^n(\bb L f) dt + 4\int_0^T {Q}_t^n \big( \tfrac{1}{\sqrt n}\widetilde{\mc D}_n(h_n) \big) dt
\end{equation}
plus an error term which goes to $0$, as $n \to \infty$. As explained above, it turns out that the {\em a priori} bound \eqref{ec3.6} is not sufficient to show that the last term on the right hand side of  (\ref{ec4.277}) goes to $0$, as $n\to\infty$, since
\begin{equation}
\label{eq:l33}
\frac{1}{n^3} \sum_{x \in \bb Z} \widetilde{\mc D}_n h_n \big(\tfrac{x}{n}, \tfrac{x+1}{n}\big)^2
\end{equation}
is of order one. Therefore, we use again (\ref{ec3.15}) applied to $h=v_n$ where $v_n$, is the solution of the Poisson equation
\begin{equation}
\label{eq:poisson-v}
n^{-1/2} \Delta_n v_n \big(\tfrac{\vphantom{y}x}{n},\tfrac{y}{n}\big) + n^{1/2} \mc A_n v_n\big(\tfrac{\vphantom{y}x}{n},\tfrac{y}{n}\big)  =n^{-1/2} {\widetilde {\mc D}}_n h_n.
\end{equation}
Then we have
\begin{equation*}
\begin{split}
\int_0^T Q_t^n \big( \tfrac{1}{\sqrt n} {\widetilde{\mc D}}_n h_n \big) dt &= 2 \int_0^T {\mc S}_t^n ({\mc D}_n v_n) dt -2 \int_{0}^T {Q}_t^n  \big( \tfrac{1}{\sqrt n} {\widetilde{\mc D}}_n v_n \big) dt \\
&+ Q_T^n (v_n) -Q_0^n (v_n).
\end{split}
\end{equation*}
Now, we use the {\em a priori} bounds \eqref{ec3.4} and \eqref{ec3.6}. We have the following estimates on $v_n$ which are proved in the Appendix \ref{subsec:secondterm}.
\begin{lemma}
\label{lem:5,7.}
The solution $v_n$ of \eqref{eq:poisson-v} satisfies
\begin{equation}
\label{est vn}
\lim_{n \to \infty} \; \tfrac{1}{n^2}\sum_{x,y\in\bb Z}v_n\big(\tfrac {\vphantom{y}x}{n},\tfrac{y}{n}\big)^2=0,
\end{equation}
\begin{equation}\label{est der vn}
\lim_{n \to \infty} \;\tfrac{1}{n}\sum_{x\in\bb Z}\mc D_n v_n\big(\tfrac {x}{n}\big)^2=0,
\end{equation}
\begin{equation}\label{est tilde d vn}
\lim_{n \to \infty} \;\tfrac{1}{n^3}\sum_{x\in\bb Z}\widetilde{\mc D}_n v_n\big(\tfrac {x}{n},\tfrac {x+1}{n}\big)^2=0.
\end{equation}
In other words,  $ \left\| v_n \right\|_{2,n}$, $\left\| {\mc D}_n  v_n \right\|_{2,n}$ and $\left\| \tfrac{1}{\sqrt n} {\widetilde {\mc D}}_n v_n \right\|_{2,n}$ converge to $0$, as $n\to\infty$.
\end{lemma}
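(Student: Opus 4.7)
My plan is to adapt the Fourier/Laplace-problem technology developed for Lemma \ref{l3} in Appendix \ref{sec:proof_l3} to the new Poisson equation \eqref{eq:poisson-v}. The operator $L_n := n^{-1/2}\Delta_n + n^{1/2}\mc A_n$ is diagonalized by the discrete Fourier transform on $\ZZ^2$ with symbol
\[
\sigma_n(\xi, \eta) = 2n^{3/2}\bigl[\cos\xi + \cos\eta - 2 - i(\sin\xi + \sin\eta)\bigr],
\]
so that $\widehat{v_n} = \widehat{F_n}/\sigma_n$, where $F_n := n^{-1/2}\widetilde{\mc D}_n h_n$. Since the proof of Lemma \ref{l3} yields $\widehat{h_n} = \widehat{\nabla_n f \otimes \delta}/\sigma_n$ in closed form, one gets an explicit expression for $\widehat{F_n}$ as a trigonometric integral involving $\widehat{f'}$.

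\textbf{First step.} By Parseval, each of the three bounds \eqref{est vn}, \eqref{est der vn}, \eqref{est tilde d vn} is equivalent to showing that
\[
\iint_{[-\pi,\pi]^2} \frac{|\widehat{F_n}(\xi,\eta)|^2}{|\sigma_n(\xi,\eta)|^2}\, m_n(\xi,\eta)\, d\xi\, d\eta \xrightarrow[n\to\infty]{} 0,
\]
where $m_n$ is, respectively, the squared modulus of the symbol of the identity, of $\mc D_n$, or of $n^{-1/2}\widetilde{\mc D}_n$ (all bounded uniformly in $n$). I would then perform the anisotropic change of variables $w = n(\xi+\eta)/2$ and $\kappa = \sqrt n\,(\eta-\xi)/2$, under which $\sigma_n$ reduces, to leading order, to the symbol of the half-plane operator $\partial_{\tilde v}^2 - 4\partial_{\tilde u}$ governing the Laplace problem for $g$.

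\textbf{Second step.} Each of the three estimates then follows from a careful analysis of the limit profile of $\widehat{F_n}$ in the matched variables $(w,\kappa)$, together with explicit integrability estimates in the tails $|w|+|\kappa| \to \infty$ coming from the oscillatory cancellation already built into $\widehat{\nabla_n f \otimes \delta}$. The gain of an extra factor $n^{-1/2}$ (from $m_n$ for $\mc D_n$ and $n^{-1/2}\widetilde{\mc D}_n$) plays against the corresponding loss in the numerator, and it is here that the precise strength of the inversion by $\sigma_n$ must be tracked.

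\textbf{Main obstacle.} Unlike $\widehat{\nabla_n f \otimes \delta}$, the source $\widehat{F_n}$ is not a smooth function of the rescaled variables: the heuristic substitution $h_n \approx g$ fails at the boundary layer $|x-y| = O(1)$, and in fact $\|F_n\|_{2,n}$ is of order one rather than vanishing. The vanishing of all three norms of $v_n$ must therefore arise from a cancellation inside the Parseval integrals between the singular part of $\widehat{F_n}$ and the degeneracy of $\sigma_n$ along the line $\xi+\eta = 0$. Quantifying this cancellation --- essentially computing $h_n$ to one order beyond the $g$-approximation at the diagonal --- is the technical heart of the argument, and it is this extra regularity gain that explains why not only $v_n$ but also $\mc D_n v_n$ and $n^{-1/2}\widetilde{\mc D}_n v_n$ vanish in $\ell^2_n$.
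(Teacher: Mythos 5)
Your Fourier set-up is the same as the paper's, and you have correctly diagnosed where the difficulty sits: $F_n:=n^{-1/2}\widetilde{\mc D}_n h_n$ has $\ell^2_n$-norm of order one, so the vanishing of $v_n$ and its derivatives must come from a cancellation between the singular behaviour of $\widehat{F_n}$ and the degeneracy of the symbol near $\xi+\eta=0$. But the proposal stops exactly at the point where the proof has to begin: you name the required cancellation (``computing $h_n$ to one order beyond the $g$-approximation at the diagonal is the technical heart of the argument'') without supplying it. As written, none of the three limits \eqref{est vn}--\eqref{est tilde d vn} is actually established; the ``second step'' asserts that they ``follow from a careful analysis of the limit profile'' but gives no estimate that could be checked. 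Moreover, the route you sketch --- the anisotropic change of variables $w=n(\xi+\eta)/2$, $\kappa=\sqrt n\,(\eta-\xi)/2$ and a matched-asymptotics comparison with the half-plane operator $\partial_{\tilde v}^2-4\partial_{\tilde u}$ --- is precisely the continuum approximation that fails in the boundary layer $|x-y|=O(1)$, which is where $\widetilde{\mc D}_n h_n$ lives; so this device cannot by itself produce the gain you need.

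What the paper does instead, and what is missing from your argument, is to avoid any continuum approximation and compute everything exactly. Since $\widetilde{\mc D}_n h_n$ is supported on $\{y=x\pm1\}$, its Fourier transform factors through a one-dimensional function, and \eqref{ecB.4} gives the closed form $\widehat{w}_n(\xi)=-\tfrac{\sqrt n}{2}\, I(\xi/n)\,\widehat{f}_n(\xi)$ with $I$ the explicit trigonometric integral \eqref{eq:I}. The entire lemma then reduces to sharp bounds on four one-dimensional profiles, $I$, $J$, $K$ and $W$, each evaluated by the residue theorem: $|I(y)|\le C|\sin\pi y|^{3/2}$ (Lemma \ref{lem:I}) is the quantitative form of the cancellation you gesture at, and combined with $W(y)=O(|y|^{-3/2})$, $|J(y)|\le C|\sin\pi y|^{-1/2}$, $|K(y)|\le C|\sin\pi y|^{1/2}$ and the decay of $\widehat f_n$ from Lemma \ref{lem:sfp}, it yields all three limits by Parseval. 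Until you prove the analogue of $|I(y)|\le C|\sin\pi y|^{3/2}$ (or an equivalent statement about the diagonal behaviour of $h_n$), the proposal remains an outline with the decisive step unproved.
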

It follows that
\begin{equation}
\label{ec4.27}
\mc S_T^n(f) = \mc S_0^n(f) +\int_0^T \mc S_t^n(\bb L f) dt
\end{equation}
plus an error term which goes to $0$, as $n \to \infty$. Recall that $\{\mc S_t^n; t \in [0,T]\}_{n \in \bb N}$ converges weakly to $\{\mc S_t; t \in [0,T]\}$. Therefore, we could take the limit in \eqref{ec4.27} if we could show that $\bb Lf \in \mc H_{k}$. It turns out that this is not the case. In fact, the operator $\bb L$ is an integro-differential operator with heavy tails. Even for $f \in \mc C_c^\infty(\bb R)$ the function $\bb L f$ has heavy tails. We can show the following:

\begin{lemma}[Lemma 2.8, \cite{DG}]
\label{l4}
For any $f \in \mc C_c^\infty(\bb R)$ there exists a constant $c=c(f)$ such that
\begin{equation}
\label{ec4.28}
\big| \bb L f(x) \big| \leq \frac{c}{(1+x^2)^{5/4}}
\end{equation}
for any $x \in \bb R$.
\end{lemma}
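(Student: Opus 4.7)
The statement is a pointwise tail bound for the Fourier multiplier $\bb L$ applied to a compactly supported smooth function, so my plan is to produce a one-sided singular-integral representation of $\bb L$ and then exploit the compact support of $f$ directly in that representation.

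A direct computation shows that the Fourier symbol of $\bb L$ is $\widehat{\bb L f}(\xi) = -\tfrac{1}{\sqrt{2}}|\xi|^{3/2}(1+i\,\mathrm{sgn}(\xi))\widehat{f}(\xi)$, which up to an overall sign is the characteristic exponent of a maximally-skewed $3/2$-stable L\'evy process, as anticipated by Remark~2.2. Using the classical identity $\int_0^\infty(e^{iu}-1-iu)u^{-5/2}\,du = \Gamma(-3/2)(-i)^{3/2}$, together with its complex conjugate for $\xi<0$, and matching symbols, one obtains the integral representation
\[
\bb L f(x) \;=\; \frac{3}{4\sqrt{\pi}}\int_0^\infty \bigl[f(x+h)-f(x)-h\,f'(x)\bigr]\,\frac{dh}{h^{5/2}}.
\]
The crucial feature is that the resulting L\'evy measure is supported on a single half-line; this is precisely the maximal-asymmetry statement and already foreshadows the sound-cone picture of Remark~2.2.

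With this representation in hand, fix $R$ such that $\supp(f)\subset[-R,R]$. For $x\ge R$, one has $f(x)=f'(x)=0$ and $f(x+h)=0$ for every $h>0$, so $\bb L f(x)\equiv 0$; this accounts trivially for the stretched-exponential side of the sound cone, well below any power law. For $x\le -R-1$, one has $f(x)=f'(x)=0$ and $f(x+h)\ne 0$ only for $h\in[|x|-R,|x|+R]$, whence
\[
|\bb L f(x)| \;\le\; \frac{3\|f\|_\infty}{4\sqrt{\pi}}\int_{|x|-R}^{|x|+R} h^{-5/2}\,dh \;\le\; \frac{3R\|f\|_\infty}{\sqrt{\pi}(|x|-R)^{5/2}} \;=\; O\bigl((1+x^2)^{-5/4}\bigr).
\]
For $|x|\le R$, I would split the integral at $h=1$: on $(0,1]$ Taylor's theorem bounds the integrand by $\tfrac12\|f''\|_\infty h^{-1/2}$, which is integrable near $0$; on $[1,\infty)$ the three terms are bounded individually by $\|f\|_\infty$, $\|f\|_\infty$, and $h\|f'\|_\infty$, yielding finite integrals against $h^{-5/2}$ and $h^{-3/2}$. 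This gives a uniform bound on the compact set $\{|x|\le R\}$, and combining the three regimes produces the desired inequality with a constant $c=c(f)$.

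The conceptual obstacle is the first step: the symmetric principal-value formula for $(-\Delta)^{3/4}$ alone does not yield a kernel supported on a half-line, and it is the exact cancellation between $(-\Delta)^{3/4}$ and $\nabla(-\Delta)^{1/4}$ on one side that concentrates the L\'evy kernel on $h>0$. Once this one-sided representation is in place, the pointwise tail estimate is elementary; the main work therefore lies in extracting the representation from the Fourier symbol.
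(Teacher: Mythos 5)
Your proof is correct, but note that the paper does not actually prove this lemma at all: it is imported wholesale as Lemma~2.8 of \cite{DG} (Dawson--Gorostiza), so any argument you give is by construction a different route. What you supply is a self-contained and essentially sharp proof, and the key idea is the right one: the symbol $-\tfrac{1}{\sqrt2}|\xi|^{3/2}(1\pm i\,\mathrm{sgn}\,\xi)$ is that of a \emph{spectrally one-sided} $3/2$-stable generator, hence $\bb L$ admits a compensated L\'evy-kernel representation with jump measure $c\,h^{-5/2}\,dh$ supported on a single half-line, and the tail bound then falls out of the compact support of $f$: identically zero on one side of $\supp f$, a clean $\int_{|x|-R}^{|x|+R}h^{-5/2}dh = O(|x|^{-5/2})$ on the other, and a Taylor/far-field splitting at $h=1$ on the compact middle region. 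Your normalization $\tfrac{3}{4\sqrt\pi}=1/\Gamma(-3/2)$ is even consistent with the identity $\int_0^\infty(e^{iu}-1-iu)u^{-5/2}du=\Gamma(-3/2)e^{-3i\pi/4}$ under the paper's Fourier convention. Two trivial caveats: the orientation of the half-line (kernel on $h>0$ versus $h<0$) depends on the sign in front of $\nabla(-\Delta)^{1/4}$, about which the paper is itself not entirely consistent between \eqref{ec1.13} and the displayed definition of $\bb L$ in the heuristics --- but the two choices are mirror images and yield the same bound; and your three regimes leave the sliver $-R-1<x<-R$ formally uncovered, which is repaired by running the compact-set splitting on $[-R-1,R]$ instead of $[-R,R]$. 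What your approach buys over the citation is transparency: it makes visible exactly where the exponent $5/4=\tfrac12\cdot\tfrac52$ comes from (the order $1+\alpha=\tfrac52$ of the L\'evy kernel) and why, per Remark~2.6, the decay is one-sided power law and trivial on the other side of the sound cone.
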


An important consequence of this lemma is that $\bb L f \in L^2(\bb R)$. Notice that $f'$ also satisfies the hypothesis of the lemma, and therefore we can take $c$ such that we also have
\begin{equation}
\label{ec4.31}
\big|\tfrac{d}{dx} \bb L f(x) \big| \leq \frac{c}{(1+x^2)^{5/4}}
\end{equation}
for any $x \in \bb R$. Using Lemma \ref{l1} we conclude that $\|\bb Lf\|_{2,n}$ is uniformly bounded in $n$. Moreover, it can be approximated by functions in $\mc H_{k}$, {\em uniformly} in $n$. In fact, consider the {\em bump function} $\phi: \bb R \to \bb R$  given by
\begin{equation}
\phi(x) = \int_{|x|}^\infty e^{-\frac{1}{y(1-y)}} \mathbf{1}_{\{y\in [0,1]\}} dy
\end{equation}
and define for $M \in \bb N$ the function $g_M:\bb R \to \bb R$ as
\begin{equation}
g_M(x) =
\begin{cases}
1,& |x| \leq M,\\
\frac{\phi(|x|-M)}{\phi(0)}, & |x| >M.
\end{cases}
\end{equation}
Using \eqref{ec4.28} and \eqref{ec4.31} we see that
\begin{equation}
\label{ec4.34}
\lim_{M \to \infty} \sup_{n \in \bb N} \|(1-g_M)\bb L f\|_{2,n} =0.
\end{equation}
We claim that $g_M \bb Lf \in \mc H_k$ for any $k >0$. Notice that $g_M \bb L f \in \mc C_c^\infty(\bb R)$. Therefore, this will be a consequence of the following
\begin{lemma}
\label{l5}
Let $f$ be a smooth function with $\lim_{x \to \pm \infty} f(x)=0$. Assume that $(\partial_x-\frac{x}{2})f \in L^2(\bb R)$. Then $f \in \mc H_k$ for any $k\leq \frac{1}{2}$. In particular, if $f \in \mc C_c^\infty(\bb R)$, then $f \in \mc H_{k}$ for any $k \in \bb R$.
\end{lemma}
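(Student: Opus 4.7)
The key observation is that the first-order operator $\partial_x-\tfrac{x}{2}$ acts on the Hermite functions as a shift in the index. Using the given recurrence \eqref{ec4.20} together with the three-term recurrence $xf_\ell=\sqrt{\ell+1}\,f_{\ell+1}+\sqrt{\ell}\,f_{\ell-1}$ (which follows from the classical identity $He_{\ell+1}(x)=xHe_\ell(x)-\ell He_{\ell-1}(x)$ after dividing by the Hermite normalisation and multiplying by $e^{-x^2/4}$), a direct computation gives
\begin{equation*}
\bigl(\partial_x-\tfrac{x}{2}\bigr)f_\ell=-\sqrt{\ell+1}\,f_{\ell+1},\qquad \bigl(-\partial_x-\tfrac{x}{2}\bigr)f_\ell=-\sqrt{\ell}\,f_{\ell-1},
\end{equation*}
so that, up to signs, these are the usual raising/lowering operators associated to the Hermite basis.

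Set $g:=(\partial_x-\tfrac{x}{2})f\in L^2(\bb R)$. Since $f(x)\to 0$ as $|x|\to\infty$ and $f_\ell$ decays super-polynomially, integration by parts is justified and yields
\begin{equation*}
\langle g,f_\ell\rangle=-\langle f,(\partial_x+\tfrac{x}{2})f_\ell\rangle=-\sqrt{\ell}\,\langle f,f_{\ell-1}\rangle,\qquad \ell\ge 1.
\end{equation*}
Equivalently, $\langle f,f_m\rangle=-\tfrac{1}{\sqrt{m+1}}\langle g,f_{m+1}\rangle$ for every $m\ge 0$. Substituting into the definition of the Sobolev norm,
\begin{equation*}
\|f\|_{\mc H_k}^{2}=\sum_{m\ge 0}(1+m)^{2k}\,\langle f,f_m\rangle^{2}=\sum_{m\ge 0}(1+m)^{2k-1}\,\langle g,f_{m+1}\rangle^{2}.
\end{equation*}
For $k\le\tfrac12$ one has $(1+m)^{2k-1}\le 1$, so Parseval gives $\|f\|_{\mc H_k}^2\le \|g\|_{2}^{2}<\infty$, proving the first assertion.

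For the second assertion, let $f\in\mc C_c^\infty(\bb R)$ and fix $k\in\bb R$. Since $\partial_x-\tfrac{x}{2}$ maps $\mc C_c^\infty(\bb R)$ into itself, the iterates $h_j:=(\partial_x-\tfrac{x}{2})^{j}f$ are smooth with compact support, hence lie in $L^2(\bb R)$. Iterating the identity from the previous paragraph,
\begin{equation*}
\langle f,f_m\rangle=\frac{(-1)^{j}}{\sqrt{(m+1)(m+2)\cdots(m+j)}}\;\langle h_j,f_{m+j}\rangle,
\end{equation*}
and combining this with $(m+1)\cdots(m+j)\ge (m+1)^{j}$ yields $\|f\|_{\mc H_k}^{2}\le \sum_{m\ge 0}(1+m)^{2k-j}\langle h_j,f_{m+j}\rangle^{2}$. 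Choosing $j$ with $j\ge 2k$ makes the power factor bounded by $1$, and Parseval again gives $\|f\|_{\mc H_k}^{2}\le\|h_j\|_{2}^{2}<\infty$, so $f\in\mc H_k$ for every $k$.

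The only point that requires a modicum of care is the justification of the integration by parts in the generic setting of the first assertion; this is not a real obstacle because the hypothesis $f(x)\to 0$ as $|x|\to\infty$, together with the Schwartz-type decay of $f_\ell$ and $f_\ell'$, makes the boundary terms vanish. All remaining steps are bookkeeping with the Hermite recursion.
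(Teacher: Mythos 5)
Your proposal is correct and follows essentially the same route as the paper: both rest on the identity $\<f_\ell,f\> = -\tfrac{1}{\sqrt{\ell+1}}\<f_{\ell+1},(\partial_x-\tfrac{x}{2})f\>$ obtained by integration by parts (the paper derives it from $H_{\ell+1}'=(\ell+1)H_\ell$, you from the raising/lowering action on the $f_\ell$, which is the same computation), and both then sum with the weight $(1+\ell)^{2k}$ and iterate $j$ times for the second assertion. The only difference is presentational; your version spells out the three-term recurrence and the boundary-term justification that the paper leaves implicit.
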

\begin{proof} Using the relation $H_{\ell+1}' = (\ell+1) H_\ell$ and integrating by parts, we see that
\begin{equation}
\<f_\ell, f\> = \frac{-1}{\sqrt{\ell+1}} \< f_{\ell +1}, \big(\partial_x -\tfrac{x}{2}\big) f\>.
\end{equation}
Therefore,
\begin{equation}
\sum_{\ell \geq 0} (1+\ell)\<f_\ell,f\>^2 = \sum_{\ell \geq 1} \<f_\ell, \big( \partial_x -\tfrac{x}{2}\big) f\>^2 \leq \big\|\big(\partial_x-\tfrac{x}{2}\big)f\big\|_{2}^{2}\, ,
\end{equation}
which shows the first part of the lemma. Repeating the argument $j$ times, we see that
\begin{equation}
\sum_{\ell \geq 0} (1+\ell)^j \<f_\ell,f\>^2 \leq \big\| \big(\partial_x-\tfrac{x}{2}\big)^j f\big\|_2^2,
\end{equation}
which shows the second part of the lemma.
\end{proof}

Using \eqref{ec4.34} and \eqref{ec3.4} we can write \eqref{ec4.27} as
\begin{equation}
\mc S_T^n(f) = \mc S_0^n(f) + \int_0^T \mc S_t^n(g_M \bb L f) dt
\end{equation}
plus a rest that goes to $0$, as $n \to \infty$ {\em and then} $M \to \infty$.

  Now we can pass to the limit on each one of the terms in this equation, since $g_M \bb L f \in \mc H_k$. Taking $n \to \infty$ and then $M \to \infty$ we conclude that
\begin{equation}
\mc S_T(f) = \mc S_0(f) + \int_0^T \mc S_t({\bb L} f) dt
\end{equation}
for any $f \in \mc C_c^\infty(\bb R)$. Repeating the arguments above for $t \in [0,T]_{\bb Q}$ we see that
\begin{equation}
\label{ec4.40}
\mc S_t(f) = \mc S_0(f) + \int_0^t \mc S_s(\bb Lf) ds
\end{equation}
for any $t \in [0,T]_{\bb Q}$.
Notice that the {\em a priori} bound \eqref{ec3.4} is stable under weak limits, and therefore we have that
\begin{equation}
\big| \mc S_t(f) \big| \leq  \|g\|_{2} \|f\|_2
\end{equation}
for any $t \in [0,T]_{\bb Q}$ and any $f \in \mc C_c^\infty(\bb R)$. Using this bound back into \eqref{ec4.40}, we see that
\begin{equation}
\label{ec4.42}
\big|\mc S_t(f) - \mc S_s(f)\big| \leq |t-s| \|g\|_2 \|\bb L f\|_2.
\end{equation}
for any $s,t \in [0,T]_{\bb Q}$. In other words, the function $t \mapsto \mc S_t(f)$, defined for $t \in [0,T]_{\bb Q}$ is uniformly Lipschitz. In particular, it can be continuously extended to $[0,T]$ in a unique way. Here we face a problem: this extension does not need to be equal to $\{\mc S_t; t \geq 0\}$, since the latter is an element of $L^2([0,T]; \mc H_{-k})$. If we can prove that $\{\mc S_t(f); t \in [0,T]\}$ is continuous, then both processes would be equal. The idea is to use the compactness criterion of Proposition \ref{p1}. It turns out that it is not convenient to use this Lemma for the sequence $\{\mc S_t^n; t \in [0,T]\}_{n \in \bb N}$ but for another auxiliary sequence. Fix $n \in \bb N$ and let $\{\widetilde{\mc S}_t^n; t \in [0,T]\}$ be the field given by
\begin{equation}
\label{ec5.49}
\widetilde{\mc S}_t^n(f) = \mc S_0^n(f) + \int_0^t \mc S_s^n(\bb L f) ds
\end{equation}
for any $t \in [0,T]$ and any $f \in \mc C_c^\infty(\bb R)$. We assert  that the sequence $\{\widetilde{\mc S}_t^n(f); t \in [0,T]\}_{n \in \bb N}$ is relatively compact in $\mc C([0,T]; \mc H_{-k})$. According to Proposition \ref{p1}, we have to prove two properties, namely equicontinuity and uniform boundedness of $\{\widetilde{\mc S}_t^n(f_\ell); t \in [0,T]\}_{n \in \bb N}$ for each $\ell \in \bb N_0$. Boundedness follows at once from the {\em a priori} bound \eqref{ec3.4}. Looking at \eqref{ec5.49}, in order to show equicontinuity, it is enough to show that $\mc S_t^n(\bb L f_\ell)$ is uniformly bounded in $t$ and $n$. But this is again an easy consequence of the {\em a priori} bound \eqref{ec3.4} and the discussion after Lemma \ref{l4}. Therefore, the sequence $\{\widetilde{\mc S}_t^n; t \in [0,T]\}_{n\in\mathbb{N}}$ is relatively compact in $\mc C([0,T]; \mc H_{-k})$ for any $k > \frac{1}{2}$. In particular, it has at least one limit point $\{\widetilde{\mc S}_t; t \in [0,T]\}$. Since this topology is stronger than the topology of $L^2([0,T]; \mc H_{-k})$, this limit has to be $\{\mc S_t; t \in [0,T]\}$. Therefore, we have proved that $\{\mc S_t; t \in [0,T]\}$ is continuous.

Let $f: [0,T] \times \bb R \to \bb R$ be a smooth function of compact support (in $[0,T] \times \bb R$). The estimate \eqref{ec4.42} and the continuity of $\{\mc S_t; t \in [0,T]\}$ allows us to show the following extension of \eqref{ec4.40}:
\begin{equation}
\label{ec4.43}
\mc S_T(f_T) = \mc S_0(f_0) + \int_0^T \mc S_t\big((\partial_t+\bb L)f_t\big) dt.
\end{equation}
What \eqref{ec4.42} is saying is that $\{\mc S_t; t \in [0,T]\}$ is a {\em weak solution} of \eqref{ec1.13}, as defined in (2.1) of \cite{J}. In Section 8.1 of that paper, it is shown that there exists a unique solution of \eqref{ec4.43}. This uniqueness result shows that the limit process $\{\mc S_t; t \in[0,T]\}$ is unique. Now we are close to finish the proof of Theorem \ref{t1}.
In fact, we have shown that the sequence $\{\mc S_t^n; t \in [0,T]\}_{n \in \bb N}$ is relatively compact with respect to the weak topology in $L^2([0,T]; \mc H_{-k})$ for any $k >\frac{19}{24}$, and that this sequence has exactly one limit point. Therefore, the sequence $\{\mc S_t^n; t \in [0,T]\}_{n \in \bb N}$, actually, converges to that unique limit point, which we called $\{\mc S_t; t \in [0,T]\}$. The convergence also holds for any fixed time $t \in [0,T]_{\bb Q}$, with respect to the weak topology of $\mc H_{-k}$. Since $T$ is arbitrary, this last convergence holds for any $t \in [0,\infty)$.  In particular, $\mc S_t^n(f)$ converges to $\mc S_t(f)$, as $n \to \infty$, for any $f \in \mc C_c^\infty(\bb R)$. But this is exactly what \eqref{ec1.12} says. Therefore, Theorem \ref{t1} is proved.

\section*{Acknowledgements}
The authors thanks S. Olla and H. Spohn for their interest in this work and highly useful discussions.\\

P.G. thanks FCT (Portugal) for support through the research
project ``Non - Equilibrium Statistical Physics" PTDC/MAT/109844/2009 and to CNPq (Brazil) for support through the research project ``Additive functionals of particle systems", Universal n. 480431/2013-2.
P.G. thanks the Research Centre of Mathematics of the University of
Minho, for the financial support provided by ``FEDER" through the
``Programa Operacional Factores de Competitividade  COMPETE" and by
FCT through the research project  PEst-OE/MAT/UI0013/2014.
C.B. acknowledges the support of the French Ministry of
Education through the grant ANR-10-BLAN 0108 (SHEPI). C.B. and P.G. are grateful to \' Egide and FCT for support through the research project FCT/1560/25/1/2012/S. C.B. and M.J. are grateful to the "Brazilian-French Network in Mathematics".

\newpage

{\bf{Warning:}} In the sequel, we denote by $C, c, \ldots$ some positive constants. Sometimes, in order to precise that the constant $C$ depends specifically on a parameter $a$ we write $C(a)$. The constants can change from line to line and, nevertheless, be denoted by the same letter.

\appendix

\section{Computations involving the generator $L$}
\label{sec:Abb}

Let $f: \bb Z \to \bb R$ be a function of finite support, and let ${\mc E} (f) : \Omega \to \RR$ be defined as
\begin{equation*}
{\mc E} (f) =\sum_{{x \in \bb Z}} f(x)\eta(x)^2.
\end{equation*}
A simple computation shows that
\begin{equation*}
S{\mc E} (f) =\sum_{{x \in \bb Z}} \Delta f(x)\eta^2(x),
\end{equation*}
where $\Delta f(x)  =  f(x\!+\!1)+f(x\!-\!1) -2f(x)$ is the discrete Laplacian on $\bb Z$.
On the other hand
\begin{equation*}
A{\mc E} (f) =-2\sum_{{x \in \bb Z} } \nabla f(x)\eta(x)\eta(x\!+\!1),
\end{equation*}
where $\nabla f(x)  =  f(x\!+\!1) -f(x)$ is the discrete right-derivative in $\bb Z$.

Let $f: \bb Z^2 \to \bb R$ be a symmetric function of finite support, and let ${ Q} (f) : \Omega \to \RR$ be defined as
\vspace{-10pt}
\begin{equation*}
{Q} (f) =\sum_{\substack{x,y \in \bb Z \\ x \neq y}} \eta (x) \eta (y) f(x,y).
\end{equation*}
Define $\Delta f: \bb Z^2 \to \bb R$ as
\begin{equation}
\Delta f(x,y)  =  f(x\!+\!1,y)+f(x\!-\!1,y) + f(x,y\!+\!1)+f(x,y\!-\!1)-4f(x,y)
\end{equation}
for any $x,y \in \bb Z$ and $\mc A f: \bb Z^2 \to \bb R$ by
\begin{equation}
\mc Af(x,y) = f(x\!-\!1,y) + f(x,y\!-\!1) - f(x\!+\!1,y) - f(x,y\!+\!1)
\end{equation}
for any $x,y \in \bb Z$. Notice that $\Delta f$ is the discrete Laplacian on the lattice $\bb Z^2$ and $\mc A f$ is a possible definition of the discrete derivative of $f$ in the direction $(-2,-2)$. Notice that we are using the same symbol $\Delta$ for the one-dimensional and two-dimensional, discrete Laplacian. From the context it will be clear which operator we will be using. We have that
\begin{equation}
\begin{split}
S Q (f)
		&=\!\! \sum_{|x-y| \ge 2}\!\! f(x,y) \big[ \eta (y) \Delta \eta (x) + \eta (x) \Delta \eta (y) \big]\\
		&\quad + 2 \sum_{x\in\bb Z} f(x,x\!+\!1) \big[ (\eta (x\!-\!1) -\eta (x)) \eta (x\!+\!1) + \\
		&\quad \quad +(\eta (x\!+\!2) -\eta (x\!+\!1)) \eta (x)\big]\\
		&=\!\! \sum_{x,y\in\bb Z}\!\! \Delta f(x,y) \eta (x) \eta (y) -2 \sum_{x \in \bb Z} f(x,x) \eta (x) \Delta \eta(x) \\
		&\quad-2 \sum_{x\in\bb Z}  f(x,x\!+\!1) \big[ \eta (x\!+\!1) \Delta \eta(x) + \eta (x) \Delta \eta(x\!+\!1)\big]\\
		&\quad \quad + 2 \sum_{x\in\bb Z}  f(x,x\!+\!1) \big[ \eta (x\!+\!1) \eta (x\!-\!1) +\eta (x\!+\!2) \eta (x) -2 \eta (x) \eta (x\!+\!1)\big].\\
\end{split}
\end{equation}
Grouping terms involving $\eta(x)^2$ and $\eta(x)\eta(x\!+\!1)$ together we get that
\begin{equation}
\begin{split}
S Q(f)
		&= \sum_{\substack{x,y \in \bb Z \\ x \neq y}} ({\Delta} f)(x,y) \eta (x)\eta (y) \\
		&\quad + 2 \sum_{x \in \bb Z}  \Big\{ \big[f(x,x\!+\!1)-f(x,x)\big] +\\
		&\quad \quad +\big[f(x,x\!+\!1) -f(x\!+\!1, x\!+\!1)\big]\Big\} \eta (x)\eta(x\!+\!1)\\
		&= Q(\Delta f) + 2 \sum_{x\in\bb Z}  \Big\{\big[ f(x,x\!+\!1) -f(x,x)\big] +\\
		&\quad \quad +\big[f(x,x\!+\!1) -f(x\!+\!1, x\!+\!1)\big]\Big\} \eta (x)\eta(x\!+\!1)\\
\end{split}
\end{equation}
Similarly, we have that
\begin{equation}
\begin{split}
A{Q} (f) &= \!\!\sum_{\substack{x,y \in \bb Z \\ x \neq y}}\!\!  \mc A f(x,y) \eta(x) \eta(y)\\
		&\quad+ 2 \sum_{x \in \bb Z} \Big\{ \eta(x)^2 \big[ f(x\!-\!1,x) -f(x,x\!+\!1)\big] \\
		&\quad \quad  - \eta(x) \eta(x\!+\!1) \big[ f(x,x)-f(x\!+\!1,x\!+\!1)\big]\Big\}\\
&= Q(\mc A f)\\
		&\quad+ 2 \sum_{x \in \bb Z} \Big\{ \eta(x)^2 \big[ f(x\!-\!1,x) -f(x,x\!+\!1)\big] \\
		&\quad \quad- \eta(x) \eta(x\!+\!1) \big[ f(x,x)-f(x\!+\!1,x\!+\!1)\big]\Big\}.
\end{split}
\end{equation}
It follows that
\vspace{-10pt}
\begin{equation}
\label{ecA.8}
L Q(f) = Q((\Delta + \mc A)f) + D(f),
\end{equation}
where the diagonal term $D(f)$ is given by
\begin{equation}
\begin{split}
D(f)  &=2 \sum_{x \in \bb Z} \big(\eta(x)^2- \tfrac{1}{\beta}\big) \big(f(x\minus 1,x) - f(x,x \plus 1)\big)\\
	&\quad+ 4 \sum_{x \in \bb Z} \eta(x) \eta(x \plus 1) \big( f(x,x \plus 1) -f(x,x)\big).
\end{split}
\end{equation}
The normalization constant $\frac{1}{\beta}$ can be added for free because $f(x,x\plus 1) - f(x\minus 1,x)$ is a mean-zero function. The diagonal term will be of capital importance, in particular the term involving $\eta(x)^2$. Notice that the operators $f \mapsto Q(f)$, $f \mapsto L Q(f)$ are continuous maps from $\ell^2(\bb Z^2)$ to $L^2(\mu_\beta)$. Therefore, an approximation procedure shows that the identities above hold true for any $f \in \ell^2(\bb Z^2)$.

\section{Tools of Fourier analysis}

Let $d \ge 1$ and let $x \cdot y $ denote the usual scalar product in $\RR^d$ between $x$ and $y$. The Fourier transform  of a function $g :\tfrac{1}{n} \ZZ^d \to \RR$ is defined by
\begin{equation*}
{\widehat g}_n (k) = \tfrac{1}{n^d} \sum_{x \in \ZZ^d} g (\tfrac{x}{n}) e^{ \tfrac{2i \pi k \cdot x}{n}}, \quad k\in \RR^d.
\end{equation*}
The function $\widehat{g}_n$ is $n$-periodic in all the directions of $\RR^d$. We have the following Parseval-Plancherel identity  between the $\ell^2$-norm of $g$, weighted by the natural mesh, and the $L^2 ([-\tfrac{n}{2}, \tfrac{n}{2}]^d)$-norm of its Fourier transform:
\begin{equation}
\| g \|^2_{2,n} := \tfrac{1}{n^d} \sum_{x \in \ZZ^d} | g (\tfrac{x}{n})|^2 = \int_{[-\tfrac{n}{2}, \tfrac{n}{2}]^d}\;  \left| {\widehat g}_n (k) \right|^2 \, dk \; := \| {\widehat g_n} \|_2^2.
\end{equation}

The function $g$ can be recovered from the knowledge of its Fourier transform by the inverse Fourier transform of ${\widehat g}_n$:
\begin{equation}
g ( \tfrac{x}{n}) = \int_{[-\tfrac{n}{2}, \tfrac{n}{2}]^d}\; {\widehat g}_n (k)\;  e^{- \frac{2i \pi x\cdot k}{n}} \, dk.
\end{equation}

For any $p\ge 1$ let $[(\nabla_n)^p ]$ denote the $p$-th iteration of the operator $\nabla_n$.

\begin{lemma}
\label{lem:sfp}
Let $f: \tfrac{1}{n} \ZZ \to \RR$ and $p \ge 1$ be such that
\begin{equation}
\label{eq:asssfp}
\frac{1}{n} \sum_{x \in \ZZ} \left| [(\nabla_n)^p ] f\big( \tfrac{x}{n} \big) \right| < +\infty.
\end{equation}
There exists a universal constant $C:=C(p)$ independent of $f$ and $n$  such that for any $|y| \le 1/2$,
\begin{equation*}
| \widehat{f_n} (yn) | \le \frac{C}{n^p |\sin(\pi y)|^p} \; \left| \frac{1}{n} \sum_{x \in \ZZ} [(\nabla_n)^p ] f\big( \tfrac{x}{n} \big) e^{2i \pi y x} \right|.
\end{equation*}
In particular, if $f$ is in the Schwartz space  ${\mc S} (\RR)$ then for any $p\ge 1$, there exists a constant $C:=C(p,f)$ such that for any $|y| \le 1/2$,
\begin{equation*}
| \widehat{f_n} (yn) | \le \frac{C}{1+ (n|y|)^{p}}.
\end{equation*}

\end{lemma}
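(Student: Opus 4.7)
The plan is to derive the first inequality directly from how the discrete gradient $\nabla_n$ acts on the Fourier side, via iterated summation by parts, and then combine the resulting decay with the trivial $L^1$ bound to obtain the Schwartz corollary.

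First, a single index shift in the defining sum gives the identity
\begin{equation*}
\widehat{(\nabla_n f)}_n(k) = n\bigl(e^{-2i\pi k/n} - 1\bigr)\,\widehat{f}_n(k).
\end{equation*}
Indeed, since $(\nabla_n f)(x/n) = n[f((x+1)/n) - f(x/n)]$, relabeling $x\mapsto x-1$ in the translated piece extracts the factor $e^{-2i\pi k/n}$. Iterating $p$ times,
\begin{equation*}
\widehat{[(\nabla_n)^p f]}_n(k) = n^p \bigl(e^{-2i\pi k/n} - 1\bigr)^p \,\widehat{f}_n(k).
\end{equation*}
Setting $k = yn$ with $|y|\leq 1/2$ and using $|e^{-2i\pi y}-1| = 2|\sin(\pi y)|$, we solve for $\widehat{f}_n(yn)$, bound the right-hand side in absolute value, and obtain the first claim with constant $C(p) = 2^{-p}$.

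For the Schwartz corollary, iterative application of the mean value theorem shows that $[(\nabla_n)^p f](x/n)$ approximates $f^{(p)}(x/n)$ and is dominated, uniformly in $n$, by a Schwartz envelope. Consequently the discrete $\ell^1$ sum $\tfrac{1}{n}\sum_{x \in \ZZ}\bigl|[(\nabla_n)^p f](x/n)\bigr|$ is bounded uniformly in $n$ by a constant $M(p,f)$ (morally, $\|f^{(p)}\|_{L^1}$). Combined with the elementary inequality $|\sin(\pi y)|\geq 2|y|$ valid on $[-\tfrac12,\tfrac12]$, the first part yields
\begin{equation*}
|\widehat{f}_n(yn)| \leq \frac{C(p,f)}{(n|y|)^{p}}.
\end{equation*}
Applying the analogous estimate for $p=0$ gives the uniform bound $|\widehat{f}_n(yn)| \leq C(f)$. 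Taking the minimum of these two estimates produces a bound of the form $C/(1+(n|y|)^{p})$, as claimed.

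The argument is essentially bookkeeping with the discrete Fourier convention; I do not foresee a substantive obstacle. The only points to be careful about are the direction of the index shift (which fixes the sign in $e^{-2i\pi k/n}-1$), and, for the Schwartz part, the justification that iterated discrete derivatives of a Schwartz function have uniformly bounded discrete $L^1$ norm—a routine application of the fundamental theorem of calculus combined with the rapid decay of $f^{(p)}$.
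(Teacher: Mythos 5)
Your proposal is correct and follows essentially the same route as the paper: both reduce the claim to the one-step multiplier identity $\widehat{(\nabla_n f)}_n(yn) = n\,(e^{-2i\pi y}-1)\,\widehat{f}_n(yn)$ (the paper derives it via an explicit Abel summation with the partial sums $D_x$, $\widetilde D_x$, you by a direct index shift, but the paper's factor $-e^{i\pi y}/(2in\sin(\pi y))$ is exactly $1/(n(e^{-2i\pi y}-1))$), iterate $p$ times, and then combine the resulting decay with the trivial uniform bound and $|\sin(\pi y)|\ge 2|y|$ on $[-\tfrac12,\tfrac12]$ to obtain the $C/(1+(n|y|)^p)$ estimate. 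The only minor caveat is that your index shift rearranges the series defining $\widehat{f}_n$ and so implicitly uses summability of $f$ itself rather than only of $(\nabla_n)^p f$, but this is harmless in the Schwartz case, which is the only one the paper uses.
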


\begin{proof}
For the first claim it is sufficient to show that
\begin{equation}
\label{eq:dibp}
\frac{1}{n} \sum_{x \in \ZZ} f\big( \tfrac{x}{n} \big) e^{2i \pi y x} \; = \; - \frac{e^{i\pi y}}{2\, i\, n \sin (\pi y)} \; \frac{1}{n} \sum_{x \in \ZZ} \nabla_n f  \big( \tfrac{x}{n} \big) e^{2i \pi y x}.
\end{equation}
Then we iterate this $p$ times. To prove (\ref{eq:dibp}), we perform a discrete integration by parts. Let us define for any $x \in \ZZ$
\begin{equation*}
D_x =  \frac{e^{i \pi yx} \sin (\pi (x+1) y)}{\sin(\pi y)}, \quad {\widetilde D}_x =  \frac{e^{i \pi yx}  \sin (\pi (1-x) y)}{\sin(\pi y)}.
\end{equation*}
Observe that for any $x \in \ZZ$, $D_x +{\widetilde D}_{x+1} =1$ and that
\begin{equation*}
\begin{split}
D_x =\sum_{k=0}^x e^{2i \pi k y}, \quad x \ge 0,\\
{\widetilde D}_x= \sum_{k=x}^0 e^{2i \pi k x}, \quad x\le 0.
\end{split}
\end{equation*}
Then we write
\begin{equation*}
\begin{split}
\frac{1}{n} \sum_{x \in \ZZ} f\big( \tfrac{x}{n} \big) e^{2i \pi y x} &=\frac{1}{n} \sum_{x \ge 1} f\big( \tfrac{x}{n} \big) (D_x -D_{x-1}) +  \frac{1}{n} \sum_{x \le -1} f\big( \tfrac{x}{n} \big) ({\widetilde D}_x -{\widetilde D}_{x+1}) +\frac{f (0) }{n}\\
&= -\frac{1}{n^2} \sum_{x \ge 0} \nabla_n f\big( \tfrac{x}{n} \big) D_x + \frac{1}{n^2} \sum_{x \le -1} \nabla_n f\big( \tfrac{x-1}{n} \big) {\widetilde D}_{x} -\frac{f(-1/n)}{n}\\
&=  -\frac{1}{n^2} \sum_{x \in \ZZ} \nabla_n f\big( \tfrac{x}{n} \big) D_x + \frac{1}{n^2} \sum_{x \le -1} \nabla_n f\big( \tfrac{x}{n} \big) ({\widetilde D}_{x+1}+D_x) -\tfrac{f(0)}{n}\\
&= -\frac{1}{n^2} \sum_{x \in \ZZ} \nabla_n f\big( \tfrac{x}{n} \big) D_x + \frac{1}{n^2} \sum_{x \le -1} \nabla_n f\big( \tfrac{x}{n} \big)  -\frac{f(0)}{n}\\
&=-\frac{1}{n^2} \sum_{x \in \ZZ} \nabla_n f\big( \tfrac{x}{n} \big) D_x,
\end{split}
\end{equation*}
where the last equality is due to the fact that we have a telescopic sum. Using the explicit expression of $D_x$ and again a telescopic argument we get (\ref{eq:dibp}).

Now, for the second claim, we observe that if $f \in {\mc S} (\RR)$, the assumption (\ref{eq:asssfp}) is satisfied. Moreover, for any $|y| \le 1/2$, $|{\widehat f}_n (yn)| \le C$ for a constant $C$ independent of $n$ and $y$. By using the first claim proved above we deduce that there exists a constant $C:=C(p,f)$ such that
\begin{equation*}
| \widehat{f_n} (yn) | \le C \inf\left\{ 1, \frac{1}{n^p |\sin(\pi y)|^p} \right\}\le\frac{C'}{1+ (n|y|)^p}.
\end{equation*}
We notice that from the previous estimate we also get that
\begin{equation*}
| \widehat{f_n} (yn) |^2 \le \frac{C'}{1+ (n|y|)^p},
\end{equation*}
which will be useful in what follows.

\end{proof}

\section{Some computations involving trigonometric polynomials}
\label{sec:Ab}

The Fourier transform of the function $\Delta_n h$ for a given, summable function $h: \sfrac{1}{n} \bb Z^2 \to \bb R$ is given by:
\begin{equation}
\begin{split}
\widehat{(\Delta_n h)}_n (k,\ell)
		&= \frac{1}{n^2}  \sum_{\mathclap{x,y \in \bb Z}}  \Delta_n h \big(\tfrac{\vphantom{y}x}{n},\tfrac{y}{n}\big) e^{ \tfrac{2\pi i  (kx+\ell y)}{n}}\\
		&= n^2\big( e^{\tfrac{2\pi i k}{n}}\!\! + e^{-\tfrac{2\pi i k}{n}}\!\! + e^{\tfrac{2 \pi i \ell}{n}} \!\!+ e^{-\tfrac{2\pi i \ell}{n}} \!\!-4\big) \widehat{h}_n(k,\ell)\\
		&= - n^2\Lambda \big( \tfrac{k}{n}, \tfrac{\ell}{n}\big) \widehat{h}_n(k,\ell),
\end{split}
\end{equation}
where
\vspace{-10pt}
\begin{equation}
\begin{split}
 \Lambda \big( \tfrac{k}{n}, \tfrac{\ell}{n}\big)&= -\big( e^{\frac{2\pi i k}{n}}\!\! + e^{-\frac{2\pi i k}{n}}\!\! + e^{\frac{2 \pi i \ell}{n}} \!\!+ e^{-\frac{2\pi i \ell}{n}} \!\!-4\big)\\
		&=4 \left[ \sin^2\big(\tfrac{\pi k}{n}\big) + \sin^2\big(\tfrac{\pi \ell}{n}\big)\right].
\end{split}
\end{equation}
Similarly, the Fourier transform of ${\mc A}_n h$ is given by

\begin{equation}
\begin{split}
\widehat{({\mc A}_n h)}_n (k,\ell)=i\,n \,\Omega \big( \tfrac{k}{n}, \tfrac{\ell}{n}\big) \widehat{h}_n(k,\ell),
\end{split}
\end{equation}
 where
\begin{equation}
\begin{split}
i \,\Omega \big( \tfrac{k}{n}, \tfrac{\ell}{n}\big)&= e^{\tfrac{2\pi i k}{n}}\!\! + e^{\tfrac{2\pi i \ell}{n}}\!\! -e^{-\tfrac{2\pi i k}{n}}\!\! - e^{-\tfrac{2 \pi i \ell}{n}}\\
		&= 2\,i\,\big( \sin\big(\tfrac{2\pi k}{n}\big) + \sin \big(\tfrac{2 \pi \ell}{n} \big)\big).
\end{split}
\end{equation}
%
Notice in particular that $\Omega(\frac{k}{n}, \frac{\ell}{n})$ is a real number.
Let us now compute the Fourier transform of the function $g_n= \nabla_n f \otimes \delta$ defined in (\ref{eq:3.9}):
\begin{equation}
\begin{split}
\widehat{g}_n(k,\ell)
		& = \cfrac{1}{n^2} \sum_{x,y\in \mathbb{Z}} \big[ \nabla_n f  \otimes \delta \big] \big( \tfrac{\vphantom{y}x}{n}, \tfrac{y}{n}\big) e^{ \tfrac{2i \pi(kx + \ell y)}{n}}\\
		&= - \cfrac{i n}{2} \Omega \big( \tfrac{k}{n}, \tfrac{\ell}{n}\big) {\widehat{f_n}} (k+ \ell).
\end{split}
\end{equation}
Several times we will use the following elementary change of variable property.

\begin{lemma}
\label{lem:cov}
Let $F: \RR^2 \to \CC$ be a $n$-periodic function in each direction of $\RR^2$. Then we have that
\begin{equation*}
\iint_{[-\tfrac{n}{2}, \tfrac{n}{2}]^2} F (k,\ell) \, dk d\ell \; = \; \iint_{[-\tfrac{n}{2}, \tfrac{n}{2}]^2} F(\xi- \ell, \ell) \, d\xi d\ell.
\end{equation*}
\end{lemma}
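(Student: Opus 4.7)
The plan is to reduce the identity to a one-dimensional change of variable combined with the periodicity of $F$ in its first argument. First I would apply Fubini's theorem to rewrite
\begin{equation*}
\iint_{[-\tfrac{n}{2}, \tfrac{n}{2}]^2} F(k,\ell)\, dk\, d\ell = \int_{-\tfrac{n}{2}}^{\tfrac{n}{2}}\!\!\left( \int_{-\tfrac{n}{2}}^{\tfrac{n}{2}} F(k,\ell)\, dk \right) d\ell,
\end{equation*}
which is legitimate because $F$ is periodic in each direction and hence bounded on $[-n/2, n/2]^2$ (assuming the mild regularity under which the lemma is used, e.g.\ local integrability; in our applications $F$ is a continuous trigonometric polynomial divided by a smooth denominator, so this is not an issue).

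Next, for each fixed $\ell \in [-n/2, n/2]$, I would perform the linear change of variable $\xi = k + \ell$ in the inner integral, which yields
\begin{equation*}
\int_{-\tfrac{n}{2}}^{\tfrac{n}{2}} F(k,\ell)\, dk \;=\; \int_{-\tfrac{n}{2}+\ell}^{\tfrac{n}{2}+\ell} F(\xi -\ell, \ell)\, d\xi.
\end{equation*}
The key observation is that the map $\xi \mapsto F(\xi - \ell, \ell)$ inherits $n$-periodicity from the periodicity of $F$ in its first argument. Since the integral of any $n$-periodic function over an interval of length $n$ is independent of the position of that interval, the shifted interval $[-n/2 + \ell, n/2 + \ell]$ can be replaced by $[-n/2, n/2]$ without changing the value. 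Reassembling with the outer $d\ell$ integral via Fubini gives the claimed identity.

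The main (and essentially only) subtlety is the use of periodicity to identify the two integration domains; there is no real obstacle here, and the lemma should be viewed as a bookkeeping device that will be applied repeatedly in the Fourier computations of the next section.
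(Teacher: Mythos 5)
Your proof is correct and rests on the same two ingredients as the paper's: the substitution $\xi=k+\ell$ and the translation invariance of the integral of an $n$-periodic function over an interval of length $n$. The only difference is organizational — you apply Fubini and shift the domain slice by slice in $\ell$, whereas the paper performs the change of variables in two dimensions at once and then splits and recombines the $u$-range $[-n,n]$ using periodicity — so this is essentially the same argument in a slightly cleaner packaging.
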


\begin{proof}
Let us write $\chi (x,y) = {\bf 1}_{\{x,  y-x \in [-1/2 , 1/2]\}}$. We have that
\begin{equation*}
\begin{split}
&\iint_{[-n/2, n/2]^2} F (k,\ell) \; dk d\ell =\int_{-n}^n \;  \left\{ \int F(u-\ell,\ell)  \chi \big (\tfrac{\ell}{n}, \tfrac{u}{n} \big) d\ell \right\} du \\
&= \int_{-n}^0 \;  \left\{ \int_{-n/2}^{u+n/2}  F (u-\ell,\ell) d\ell \right\} du + \int_0^n \; \left\{ \int_{u-n/2}^{n/2}  F (u-\ell,\ell) d\ell \right\} du \\
&= \int_{0}^n \; \left\{ \int_{-n/2}^{u-n/2}  F (u-\ell,\ell) d\ell \right\} du + \int_0^n \;  \left\{ \int_{u-n/2}^{n/2}  F (u-\ell,\ell)  d\ell \right\} du \\
&=  \int_{0}^n \;  \left\{ \int_{-n/2}^{n/2}  F (u-\ell,\ell) d\ell \right\} du\\
&=  \int_{-n/2}^{n/2} \;  \left\{ \int_{-n/2}^{n/2}  F (u-\ell,\ell) d\ell \right\} du.
\end{split}
\end{equation*}
\end{proof}

\section{Proof of Lemma \ref{l3}}
\label{sec:proof_l3}
Let $h_n: \sfrac{1}{n} \bb Z^2 \to \bb R$ be the unique solution in $\ell^2(\sfrac{1}{n}\bb Z^2)$ of \eqref{ec4.22}. Observe that $h_n$ is a symmetric function. The Fourier transform of $h_n$ is not difficult to compute by using Appendix \ref{sec:Ab}. In fact, we have that
\begin{equation}
\label{ecB.4}
\widehat{h}_n(k,\ell) =  \frac{1}{2\sqrt n} \frac{i\,\Omega \big( \tfrac{k}{n}, \tfrac{\ell}{n}\big)}{\Lambda\big( \tfrac{k}{n}, \tfrac{\ell}{n}\big) - i\, \Omega \big( \tfrac{k}{n}, \tfrac{\ell}{n}\big)} \; \widehat{f_n}(k+\ell).
\end{equation}
Our aim will be to study the behavior of $h_n$, as $n \to \infty$, and in particular to prove Lemma \ref{l3}.

\subsection{Proof of (\ref{eq:l31})}
Observe first that
\begin{equation}
i\,\Omega \big( \tfrac{\xi -\ell}{n}, \tfrac{\ell}{n}\big) = e^{ \tfrac{2i \pi\ell}{n}} (1- e^{-\tfrac{2 i \pi \xi}{n}}) -  e^{- \tfrac{2i \pi \ell}{n}} (1- e^{\tfrac{2 i \pi \xi}{n}})
\end{equation}
so that
\begin{equation}
\label{eq:omegaomega}
\Omega \big( \tfrac{\xi -\ell}{n}, \tfrac{\ell}{n}\big)^2 \le 4 \Big| 1- e^{\tfrac{2 i \pi \xi}{n}} \Big|^2=16 \sin^2 \big(\tfrac{\pi\xi}{n}\big).
\end{equation}
Then, by Plancherel-Parseval's relation and by using Lemma \ref{lem:cov} we have that
\begin{equation*}
\begin{split}
\| h_n \|^2_{2,n}
		&=\iint_{[-\tfrac{n}{2}, \tfrac{n}{2}]^2} | {\widehat h}_n (k, \ell) |^2 dk d\ell \\
		&= \frac{1}{4n} \iint_{[-\tfrac{n}{2}, \tfrac{n}{2}]^2}\frac{\Omega\big( \tfrac{k}{n}, \tfrac{\ell}{n}\big)^2 \, |{\widehat f_n} (k+\ell)|^2}{\Lambda \big( \tfrac{k}{n}, \tfrac{\ell}{n}\big)^2 + \Omega \big( \tfrac{k}{n}, \tfrac{\ell}{n}\big)^2} \; dk d\ell \\
		&\le \frac{1}{n} \int_{-n/2}^{n/2} \big|1 - e^{\tfrac{2i\pi\xi}{n}}\big|^2 \big|{\widehat f}_n (\xi)\big|^2 \; \left[ \int_{-n/2}^{n/2} \frac{d\ell}{\Lambda \big( \tfrac{\xi - \ell}{n}, \tfrac{\vphantom{\xi}\ell}{n}\big )^2 + \Omega \big( \tfrac{\xi - \ell}{n}, \tfrac{\vphantom{\xi}\ell}{n}\big)^2} \right] \; d\xi \\
		&= 4 n \int_{-1/2}^{1/2} \sin^2 (\pi y) |{\widehat f}_n (ny) |^2 W(y) dy,
\end{split}
\end{equation*}
where for the last equality we performed the changes of variables $y=\frac{\xi}{n}$ and $x= \frac{\ell}{n}$. The function $W$ is defined by
\begin{equation}
\label{eq:W}
W(y) =  \int_{-1/2}^{1/2}  \frac{dx}{\Lambda (y-x,x)^2 + \Omega (y-x, x)^2 }.
\end{equation}
Since by Lemma \ref{lem:W} we have that $W(y) \le C |y|^{-3/2}$ on $[-\tfrac{1}{2}, \tfrac{1}{2}]$, we get, by using the second part of Lemma \ref{lem:sfp} with $p=3$ and the elementary inequality $\sin^2 (\pi y) \le (\pi y)^2$, that
\begin{equation*}
\begin{split}
&\iint_{[-\tfrac{n}{2}, \tfrac{n}{2}]^2} | {\widehat h}_n (k, \ell) |^2 dk d\ell \le C' n \int_{-1/2}^{1/2} \cfrac{|y|^{1/2}}{1+ (n|y|)^3} dy = O (n^{-1/2}).
\end{split}
\end{equation*}

\subsection{Proof of (\ref{eq:l32})}
\label{subsec:firstterm}

We denote by $G$ the $1$-periodic function defined by
\begin{equation}
\label{eq:fF}
G(y)= \frac{1}{4} \int_{-1/2}^{1/2} \cfrac{\Omega(y-z,z)^2 }{\Lambda (y-z,z) -i\, \Omega (y-z,z)}\, dz.
\end{equation}
As $y\to 0$, the function $G$ is equivalent (in a sense defined below) to the function $G_0$ given by
\begin{equation}
\label{eq:G0}
G_0 (y)= \cfrac{|\pi y|^{3/2}}{2} ( 1+i \, {\rm{sgn}} (y)) .
\end{equation}
In fact, we show in Lemma \ref{lem:G0345} that there exists a constant $C>0$ such that for any $|y| \le 1/2$
\begin{equation}
\label{eq:G000}
|G (y) -G_0 (y)| \le C |y|^2.
\end{equation}
We denote by $\mc F f$ the (continuous) Fourier transform of $f$, defined by
\begin{equation}
{\mc F} f (y) = \int_{-\infty}^{+\infty} \, f(t)\;  e^{2i\pi t y }\;  dt
\end{equation}
and by $q:=q(f): \RR \to \RR$ the function defined by
\begin{equation}
q(x) = \int_{- \infty}^{\infty} e^{-2 i \pi xy} G_0 (y) {\mc F} f (y) dy
\end{equation}
which coincides with $-\tfrac{1}{4} {\bb L} f (x)$.

Let $q_n : {\sfrac{1}{n}} \ZZ \to \RR$ the function defined by
\begin{equation}
q_n \big(\tfrac{x}{n}) = {{\mc D}}_n h_n \,  \big(\tfrac{x}{n}).
\end{equation}

\begin{lemma}
We have
\begin{equation}
\lim_{n \to + \infty} \frac{1}{n} \sum_{x \in \ZZ} \left[ q \big(\tfrac{x}{n} \big) -q_n \big( \tfrac{x}{n}\big)\right]^2 =0.
\end{equation}
\end{lemma}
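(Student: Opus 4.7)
The strategy is Fourier analytic. By the Plancherel--Parseval identity on the grid $\sfrac{1}{n}\bb Z$, the $\ell^2_n$-norm squared $\tfrac{1}{n}\sum_{x\in\bb Z}[q(\tfrac{x}{n})-q_n(\tfrac{x}{n})]^2$ equals the $L^2([-n/2,n/2])$-norm squared of the difference of the two discrete Fourier transforms (one of $q_n$, one of the restriction of $q$ to the grid). The problem thus reduces to computing both transforms explicitly and showing their difference vanishes in $L^2$ as $n\to\infty$.

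To compute the transform of $q_n$, I would use the symmetry $h_n(y,x)=h_n(x,y)$ to rewrite $\mc D_n h_n(\tfrac{x}{n})=n[h_n(\tfrac{x}{n},\tfrac{x+1}{n})-h_n(\tfrac{x}{n},\tfrac{x-1}{n})]$, then plug in the explicit formula \eqref{ecB.4} for $\widehat{h}_n$, perform the change of variable $\xi=k+\ell$ via Lemma~\ref{lem:cov}, and symmetrize under $\eta\leftrightarrow\xi-\eta$ (legitimate because $\Omega$ and $\Lambda$ are both symmetric in their arguments). The sum of sines coming from the remaining exponentials then combines into a second factor of $\Omega$, and one recognizes the function $G$ from \eqref{eq:fF}, obtaining the clean identity
$$\widehat{(q_n)}_n(\xi)=n^{3/2}\,\widehat{f_n}(\xi)\,G\bigl(\tfrac{\xi}{n}\bigr).$$
In parallel, since $q$ is the inverse continuous Fourier transform of $G_0\,\mc F f$, Poisson summation gives for the DFT of $q|_{\sfrac{1}{n}\bb Z}$ the formula $\widehat{q}_n(\xi)=\sum_{m\in\bb Z}G_0(\xi+mn)\,\mc F f(\xi+mn)$, whose $m=0$ term is the target $G_0(\xi)\mc F f(\xi)$ and whose $m\ne 0$ aliasing decays faster than any polynomial since $f\in\mc C_c^\infty(\bb R)$ makes $\mc F f$ Schwartz while $G_0$ grows only polynomially.

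The key link between the two expressions is the positive $3/2$-homogeneity $G_0(\xi/n)=n^{-3/2}G_0(\xi)$, which combined with the bound $|G(y)-G_0(y)|\le C|y|^2$ from \eqref{eq:G000} yields
$$\bigl|n^{3/2}G(\tfrac{\xi}{n})-G_0(\xi)\bigr|\le \tfrac{C\xi^2}{\sqrt n}\qquad\text{for }\xi\in[-\tfrac{n}{2},\tfrac{n}{2}].$$
Decomposing
$$\widehat{(q_n)}_n(\xi)-\widehat{q}_n(\xi)=\bigl[n^{3/2}G(\tfrac{\xi}{n})-G_0(\xi)\bigr]\widehat{f_n}(\xi)+G_0(\xi)\bigl[\widehat{f_n}(\xi)-\mc F f(\xi)\bigr]-\sum_{m\ne 0}G_0(\xi+mn)\mc F f(\xi+mn),$$
I would bound each piece separately in $L^2$. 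The first is $O(1/n)$ by the above estimate together with the rapid polynomial decay of $\widehat{f_n}$ furnished by Lemma~\ref{lem:sfp}, which handily absorbs the $\xi^4$ weight. The remaining two pieces are aliasing errors, both controlled by the Schwartz decay of $\mc F f$ against the polynomial growth of $G_0$. The main obstacle will be the algebraic bookkeeping in the first step: justifying the shifted change of variables via Lemma~\ref{lem:cov} and checking that the symmetrization really does collapse the remaining exponentials into a factor $\Omega$, which is what makes $G$ (and hence, in the limit, $G_0$) appear at all.
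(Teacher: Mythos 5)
Your proposal is correct and follows essentially the same route as the paper: both rest on the explicit formula $\widehat{(q_n)}_n(\xi)=n^{3/2}G(\tfrac{\xi}{n})\widehat{f_n}(\xi)$, the $3/2$-homogeneity $n^{3/2}G_0(\tfrac{\xi}{n})=G_0(\xi)$ combined with the bound $|G-G_0|\le C|y|^2$ of Lemma~\ref{lem:G0345}, and the decay of $\widehat{f_n}$ from Lemma~\ref{lem:sfp}, with the same three-term decomposition of the error. The only difference is cosmetic: you control the high-frequency contribution of $q$ (and of $\widehat{f_n}-\mc F f$) via Poisson summation and aliasing estimates, whereas the paper splits the inverse Fourier integral at $|\xi|=n/2$ and treats the tail term $(I)$ by integration by parts and the term $(II)$ by dominated convergence; both are valid.
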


\begin{proof}
Since ${\widehat h}_n$ is a symmetric function we have
\begin{equation*}
\begin{split}
{\widehat{q}_n} (\xi)
		&= \sum_{x \in \ZZ} e^{ \tfrac{2i \pi \xi x}{n}} \big[ h_n \big( \tfrac{x}{n}, \tfrac{x+1}{n} \big) - h_n \big( \tfrac{x-1}{n}, \tfrac{x}{n} \big) \big]\\
		& =\sum_{x \in \ZZ} e^{ \tfrac{2i \pi \xi x}{n}} \iint_{[-\tfrac{n}{2}, \tfrac{n}{2}]^2} e^{-\tfrac{2i \pi (k+\ell)x}{n}} \Big[ e^{- \tfrac{2i \pi\ell}{n} } - e^{ \tfrac{2i \pi \ell}{n}} \Big] {\widehat h_n} (k, \ell) \, dk d\ell\\
		&=\frac{1}{2}\sum_{x \in \ZZ} e^{ \tfrac{2i \pi \xi x}{n}} \iint_{[-\tfrac{n}{2}, \tfrac{n}{2}]^2} e^{-\tfrac{2i \pi (k+\ell) x}{n}} \Big[ e^{- \tfrac{2i \pi \ell}{n} } - e^{ \tfrac{2i \pi \ell}{n}} \Big] {\widehat h_n} (k, \ell) \, dk d\ell \\
		&\quad +\frac{1}{2}\sum_{x \in \ZZ} e^{ \tfrac{2i \pi \xi x}{n}} \iint_{[-\tfrac{n}{2}, \tfrac{n}{2}]^2} e^{-\tfrac{2i \pi (k+\ell) x}{n}} \Big[ e^{-\tfrac{2i \pi k}{n} } - e^{\tfrac{2i \pi k}{n}} \Big] {\widehat h_n} (k, \ell) \, dk d\ell \\
		&= -\frac{i}{2}  \sum_{x \in \ZZ} e^{ \tfrac{2i \pi \xi x}{n}} \iint_{[-\tfrac{n}{2}, \tfrac{n}{2}]^2} e^{-\tfrac{2i \pi (k+\ell) x}{n}}  \Omega \big( \tfrac{k}{n}, \tfrac{\ell}{n}\big)  {\widehat h_n} (k,\ell) \, dk d\ell.
\end{split}
\end{equation*}
We use now Lemma \ref{lem:cov} and the inverse Fourier transform relation to get
\begin{equation*}
\begin{split}
{\widehat{q}_n} (\xi) =- \cfrac{in}{2} \int_{-n/2}^{n/2} \Omega  \big( \tfrac{\xi -\ell}{n}, \tfrac{\ell}{n}\big)  {\widehat h_n} (\xi- \ell,\ell) \, d\ell.
\end{split}
\end{equation*}
By the explicit expression (\ref{ecB.4}) of ${\widehat h_n}$ we obtain that
\begin{equation*}
{\widehat{q_n}} (\xi) = \frac{\sqrt{n}}{4} \left[\int_{-n/2}^{n/2} \frac{\Omega  \big( \tfrac{\xi -\ell}{n}, \tfrac{\ell}{n}\big)^2}{\Lambda \big( \tfrac{\xi - \ell}{n}, \tfrac{\ell}{n}\big )-i\Omega \big( \tfrac{\xi - \ell}{n}, \tfrac{\ell}{n}\big)} \, d\ell \right]\; {\widehat f_n} (\xi).
\end{equation*}
Again by the inverse Fourier transform we get that
\begin{equation*}
q_n\big(\tfrac{x}{n}\big)=\int_{-n /2}^{n/2}\; e^{- \tfrac{2i\pi \xi x}{n}}  n^{3/2} G \big( \tfrac{\xi}{n}\big)  {\widehat f}_n (\xi)   \; d\xi.
\end{equation*}
Then we have
\begin{equation}
\begin{split}
q \big(\tfrac{x}{n} \big) -q_n \big( \tfrac{x}{n}\big)
		&= \int_{|\xi| \ge n/2} \; e^{- \tfrac{2i\pi \xi x}{n}}  \; G_0 (\xi)  \;  \mc F f (\xi)   \; d\xi\\
		&\quad + \int_{|\xi| \le n/2} \; e^{- \tfrac{2i\pi \xi x}{n}}  \; G_0 (\xi)  \; \big[  \mc F f (\xi)  - {\widehat f}_n (\xi)  \big]\; d\xi\\
		&\quad \quad +n^{3/2}\int_{|\xi| \le n/2} \; e^{- \tfrac{2i\pi \xi x}{n}}  (G_0- G) \big( \tfrac{\xi}{n}\big) {\widehat f}_n (\xi) \; d\xi.
\end{split}
\end{equation}
Above we have used the fact that $n^{3/2} G_0\big(\tfrac{\xi}{n})=G_0(\xi)$.
Then we use the triangular inequality and Plancherel's theorem in the two last terms of the right hand side to get
\begin{equation}
\label{eq:l2limit}
\begin{split}
 \frac{1}{n} \sum_{x \in \ZZ} \big[ q \big(\tfrac{x}{n} \big) -q_n \big( \tfrac{x}{n}\big)\big]^2
 		&\le \frac{1}{n} \sum_{x \in \ZZ} \left| \int_{|\xi| \ge n/2} \; e^{- \tfrac{2i\pi \xi x}{n}}  \; G_0 (\xi)  \;  (\mc F f) (\xi)   \; d\xi \right|^2\\
		&\quad + \int_{|\xi| \le n/2} \left|  G_0 (\xi)  \; \big[  \mc F f (\xi)  - {\widehat f}_n (\xi)  \big]\; \right|^2 d\xi\\
		&\quad \quad +n^3 \int_{|\xi| \le n/2} \; \left|  (G_0- G) \big( \tfrac{\xi}{n}\big) {\widehat f}_n (\xi) \right|^2 \; d\xi \\
		&= (I) +(II)+(III).
\end{split}
\end{equation}
The contribution of the term $(I)$ is estimated by performing an integration by parts and using the fact that the Fourier transform ${\mc F} f$ of $f$ is in the Schwartz space and that $G_0$ and $G_0'$ grow at most polynomially:
\begin{equation}
\begin{split}
(I)&\leq \frac{C}{n} \sum_{x \in \ZZ}\frac{n^2}{|x|^2} \left\{   \big|  (G_0 \, \mc F f )(\pm \tfrac{n}{2}) \big|^2  + \left| \int_{|\xi| \ge n/2} \big| \tfrac{d}{d\xi} [G_0 \,  \mc F f] (\xi) \big| \; d\xi \right|^2 \right\}.
\end{split}
\end{equation}

Then one can get that $(I) \le  C_p  n^{-p}$ for any $p \ge 1$ with a suitable constant $C_p>0$. Therefore $(I)$ gives a trivial contribution in (\ref{eq:l2limit}).  The term $(II)$ in (\ref{eq:l2limit}) can be bounded above by a constant times
\begin{equation}
\int_{-n/2}^{n/2} |\xi|^3 |\mc F f (\xi) -{\widehat f}_n (\xi) \big|^2 d\xi
\end{equation}
because $|G_0 (\xi)| \le C |\xi|^{3/2}$ for any $\xi$. Let $0<A<n/2$ and write
\begin{equation}
\label{eq:zurich}
\begin{split}
\int_{-n/2}^{n/2} |\xi|^3  |\mc F f (\xi) -{\widehat f}_n (\xi) \big|^2 d\xi
		&= \int_{|\xi| \le A} |\xi|^3  \big|\mc F f (\xi) -{\widehat f}_n (\xi)\big|^2 d\xi \\
		&\quad+ \int_{A\le |\xi| \le n/2} |\xi|^3 \big|\mc F f (\xi) -{\widehat f}_n (\xi) \big|^2 d\xi.
\end{split}
\end{equation}
Now, performing a change of variables $\xi=\frac{y}{n}$ and using the fact that $f$ is in the Schwartz space and Lemma \ref{lem:sfp}, the second term on the right hand side of (\ref{eq:zurich}) is bounded above by
\begin{multline*}
C \int_{|\xi| \ge A} |\xi|^3 \, \big|{\mc F} f(\xi)\big|^2 d\xi + Cn^4  \int_{\tfrac{A}{n} \le |y| \le 1/2} \frac{|y|^3}{1+|ny|^p} dy \\
\le C \int_{|\xi| \ge A} |\xi|^3 \, \big|{\mc F} f(\xi)\big|^2 d\xi + C \int_{A}^{\infty} \frac{z^3}{1+z^p} dz:=\ve (A),
\end{multline*}
where $p$ is bigger than $4$ and $C$ is independent of $n$ and $A$. Observe that $\ve (A) \to0$, as $A\to \infty$. It follows that the left hand side of (\ref{eq:zurich}) is bounded above by
\begin{equation*}
\int_{|\xi| \le A} |\xi|^3 \big|\mc F f (\xi) -{\widehat f}_n (\xi)\big|^2 d\xi +\ve (A).
\end{equation*}
We first take the limit $n \to  \infty$ and use the dominated convergence theorem for the first term of the expression above and then we take the limit as $A \to  \infty$.

The contribution of $(III)$ is estimated by using (\ref{eq:G000}) which gives
\begin{equation*}
(III) \le \frac{C}{n} \int_{|\xi| \le n/2} \; | \xi|^4|  \; | {\widehat f}_n (\xi) |^2 \; d\xi = Cn^4 \int_{-1/2}^{1/2} |z|^4 | {\widehat{f}_n} (nz) |^2 dz
\end{equation*}
which goes to $0$, as $n\to\infty$, by Lemma \ref{lem:sfp} applied with $p=2$.
\end{proof}

\section{Proof of Lemma \ref{lem:5,7.}}
\label{subsec:secondterm}
Let $w_n$ be defined by
\begin{equation}
w_n \big(\tfrac{x}{n}\big) = h_n \big( \tfrac{x}{n} , \tfrac{x+1}{n} \big) -h_n \big( \tfrac{x}{n} , \tfrac{x}{n} \big)
\end{equation}
and observe that
\begin{equation*}
\tfrac{1}{\sqrt{n}} {\widetilde{\mc D}_n} h_n \; \big( \tfrac{\vphantom{y}x}{n}, \tfrac{y}{n} \big) = n^{3/2}
\begin{cases}
w_n \big(\tfrac{x}{n} \big), \quad y =x+1,\\
w_n \big(\tfrac{x-1}{n} \big), \quad y =x-1,\\
0, \quad \text{otherwise}.
\end{cases}
\end{equation*}
Now, since
\begin{equation*}
\begin{split}
\widehat{\widetilde{\mc D}_n{h}_n}(k,l) =& \sum_{x\in\mathbb{Z}}\Big\{w_n \big(\tfrac{x}{n}\big)e^{\tfrac{2\pi i (kx + \ell (x+1))}{n}}+w_n \big(\tfrac{x}{n}\big)e^{\tfrac{2 i \pi (k(x+1)+\ell x)}{n}}\Big\}\\
=&n \Big\{e^{\tfrac{2i\pi\ell}{n}}+e^{\tfrac{2i\pi k}{n}}\Big\}\widehat{w}_n(k+\ell)
\end{split}
\end{equation*}
and using the computations of the Appendix \ref{sec:Ab}, it is easy to see that the Fourier transform ${\widehat v}_n$ is given by
\begin{equation}\label{FT of vn}
{\widehat v}_n (k, \ell) = -\cfrac{1}{n} \, \cfrac{e^{ \tfrac{2i \pi k}{n}} +e^{\tfrac{2i \pi \ell}{n}} }{  \Lambda\big( \tfrac{k}{n}, \tfrac{\ell}{n}\big) -i\, \Omega\big( \tfrac{k}{n}, \tfrac{\ell}{n}\big) } \, {\widehat w}_n (k+\ell).
\end{equation}
By using Lemma \ref{lem:cov}, we have that the Fourier transform of $w_n$ is given by
\begin{equation}
\begin{split}
{\widehat w}_n (\xi) &= \frac{1}{n} \sum_{x\in\mathbb{Z}} e^{ \tfrac{2i\pi \xi x}{n}} \iint_{[-\tfrac{n}{2}, \tfrac{n}{2}]^2} {\widehat h}_n (k,\ell) e^{- \tfrac{2 i \pi (k+\ell) x}{n}} \big\{ e^{- \tfrac{2i\pi\ell}{n}} -1 \big\} \; dk d\ell \\
&= \frac{1}{n} \sum_{x\in\mathbb{Z}} e^{ \tfrac{2i\pi \xi x}{n}} \int_{-n/2}^{n/2} \; e^{- \tfrac{2 i \pi u x}{n}} \left\{ \int_{-n/2}^{n/2}  {\widehat h}_n (u-\ell,\ell)\big\{ e^{- \tfrac{2i\pi \ell}{n}} -1 \big\} d\ell \right\} du\\
&=  \int_{-n/2}^{n/2}  {\widehat h}_n (\xi-\ell,\ell)\big\{ e^{- \tfrac{2i\pi \ell}{n}} -1 \big\} \, d\ell.
\end{split}
\end{equation}
In the last line we used the inverse Fourier transform.
By (\ref{ecB.4}) we get
\begin{equation}\label{Est}
\begin{split}
{\widehat w}_n (\xi) &= -\frac{1}{2\sqrt{n}} \, {\widehat f}_n (\xi) \,  \int_{-n/2}^{n/2}  \frac{\big(1- e^{- \tfrac{2i\pi \ell}{n}} \big)\,  i\, \Omega \big( \tfrac{\xi - \ell}{n}, \tfrac{\vphantom{\xi}\ell}{n}\big)} { \Lambda\big( \tfrac{\xi - \ell}{n}, \tfrac{\vphantom{\xi}\ell}{n}\big) - i \, \Omega\big( \tfrac{\xi - \ell}{n}, \tfrac{\vphantom{\xi}\ell}{n}\big)}\, d\ell\\
&= -\frac{\sqrt{n}}{2} I\big( \tfrac{\xi}{n} \big) {\widehat f}_n (\xi)
\end{split}
\end{equation}
where the function $I$ is defined by
\begin{equation}
\label{eq:I}
I(y)= \int_{-1/2}^{1/2} \cfrac{(1-e^{-2i \pi x}) \, i \, \Omega (y-x,x)}{\Lambda(y-x,x) -i \Omega(y-x,x)} dx.
\end{equation}

\subsection{Proof of \eqref{est vn}}
By Plancherel-Parseval's relation and Lemma \ref{lem:cov} we have
\begin{equation*}
\begin{split}
\|v_n\|_{2,n}^2
		&= \iint_{[-\tfrac{n}{2},\tfrac{n}{2}]^2} |{\widehat v_n} (k, \ell)|^2 dk d\ell\\
		&= \frac{1}{n^2} \int_{-n/2}^{n/2} |\widehat w_n (\xi)|^2 \,  \int_{-n/2}^{n/2} \left|\cfrac{e^{ \tfrac{2i \pi(\xi - \ell)}{n} } +e^{ \tfrac{2 i \pi \ell}{n} }} {\Lambda\big(\tfrac{\xi -\ell}{n}, \tfrac{\vphantom{\xi}\ell}{n} \big) -i \Omega\big(\tfrac{\xi -\ell}{n}, \tfrac{\vphantom{\xi}\ell}{n} \big) }\,  \right|^2 d\ell d\xi\\
		&\le \frac{C}{n} \int_{-n/2}^{n/2} |\widehat w_n (\xi)|^2 W \big( \tfrac{\xi}{n}\big) d\xi\\
		&=\cfrac{C }{4}  \int_{-n/2}^{n/2} \big|\widehat f_n (\xi)\big|^2 \big| \, I \big( \tfrac{\xi}{n}\big) \big|^2   W \big( \tfrac{\xi}{n}\big) d\xi\\
		&=\cfrac{C n}{4}  \int_{-1/2}^{1/2} |\widehat f_n (ny)|^2 | \, I \big(y\big) |^2   W \big( y\big) d\xi,
\end{split}
\end{equation*}
where in the third inequality we used the Cauchy-Schwarz inequality, in the penultimate inequality we used \eqref{Est} and in the last equality we used a change of variables. Recall that the function $W$ is defined by (\ref{eq:W}). By Lemma \ref{lem:W}, Lemma \ref{lem:I} and Lemma \ref{lem:sfp}, we get, that
\begin{equation*}
\begin{split}
\|v_n\|_{2,n}^2 & \le Cn \int_{-1/2}^{1/2} |{\widehat f}_n (ny)|^2 |\sin (\pi y)|^{3/2} dy\\
& \le  C \int_{-1/2}^{1/2}  \cfrac{|y|^{3/2}}{1+|ny|^p} dy = \cfrac{C}{n^{3/2}} \int_{-n/2}^{n/2} \cfrac{|z|^{3/2}}{1+|z|^p} dz,
\end{split}
\end{equation*}
which goes to $0$ as soon as $p$ is chosen bigger than $3$.

\subsection{Proof of \eqref{est der vn}}

Notice that
\begin{equation}
\begin{split}
{\widehat{{\mc D}_n v_n}}(\xi)
		&= \sum_{x\in\mathbb{Z}}\big\{v_n\big(\tfrac{x}{n},\tfrac{x+1}{n}\big)-v_n\big(\tfrac{x-1}{n},\tfrac{x}{n}\big)\big\}e^{ \tfrac{2i \pi \xi}{n}}\\
		&=\sum_{x\in\mathbb{Z}}v_n\big(\tfrac{x}{n},\tfrac{x+1}{n}\big)e^{\tfrac{2i\pi \xi x}{n}}\big(1-e^{\tfrac{2i\pi \xi }{n}}\big)\\
		&=\sum_{x\in\mathbb{Z}}e^{\tfrac{2i\pi \xi x}{n}}\big(1-e^{\tfrac{2i\pi\xi }{n}}\big)\iint_{[-\tfrac{n}{2},\tfrac{n}{2}]^2} {\widehat v_n} (k, \ell) e^{-\tfrac{2 i \pi(kx+\ell(x+1))}{n}}dk d\ell\\
		&=\sum_{x\in\mathbb{Z}}e^{\tfrac{2i\pi \xi x}{n}}\big(1-e^{\tfrac{2i\pi \xi }{n}}\big)\iint_{[-\tfrac{n}{2},\tfrac{n}{2}]^2} {\widehat v_n} (k, \ell) e^{-\tfrac{2i\pi(kx+\ell(x+1))}{n}}dk d\ell.\\
\end{split}
\end{equation}
Now, by Lemma \ref{lem:cov} we get
\begin{equation}
\begin{split}
{\widehat{{\mc D}_n v_n}}(\xi)
		&=\sum_{x\in\mathbb{Z}}e^{\tfrac{2i\pi \xi x}{n}}\big(1-e^{\tfrac{2i\pi \xi }{n}}\big)\iint_{[-\tfrac{n}{2},\tfrac{n}{2}]^2} {\widehat v_n} (m-\ell, \ell)e^{-\tfrac{2i\pi\ell}{n}} e^{-\tfrac{2i\pi mx}{n}}dm d\ell\\
		&=n\big(1-e^{\tfrac{2i\pi \xi }{n}}\big)\int_{-\tfrac{n}{2}}^{\tfrac{n}{2}} {\widehat v_n} (\xi-\ell, \ell)e^{-\tfrac{2i\pi\ell}{n}} d\ell\\
		&=-\big(1-e^{\tfrac{2i\pi \xi }{n}}\big){\widehat w_n} (\xi)\int_{-\tfrac{n}{2}}^{\tfrac{n}{2}}\frac{1+e^{ \tfrac{2i \pi(\xi-2\ell)}{n}}}{ \Lambda \big( \tfrac{\xi-\ell}{n}, \tfrac{\vphantom{\xi}\ell}{n}\big) +i \Omega \big( \tfrac{\xi-\ell}{n}, \tfrac{\vphantom{\xi}\ell}{n}\big)}d\ell\\
		&=-{n}\big(1-e^{\tfrac{2i\pi \xi }{n}}\big){\widehat w_n} (\xi)J\big (\tfrac{\xi}{n}\big),
\end{split}
\end{equation}
where in the penultimate equality we used \eqref{FT of vn} and in last equality we performed a change of variables. Above, $J$ is given by
\begin{equation}
\label{eq:J}
J(y) = \int_{-1/2}^{1/2} \cfrac{1+e^{2i \pi (y-2x)}}{\Lambda(y-x,x)-i\Omega(y-x,x) } dx.
\end{equation}
Now, by using \eqref{Est} we get, finally, that
\begin{equation}
{\widehat{{\mc D}_n v_n}}(\xi)=\frac{n^{3/2}}{2}\big( 1- e^{ \tfrac{2i \pi \xi}{n}} \big)  {\widehat f}_n (\xi) I \big (\tfrac{\xi}{n}\big) J\big (\tfrac{\xi}{n}\big),
\end{equation}
where $I$ is defined by (\ref{eq:I}).

By Plancherel-Parseval's relation we have to prove that
\begin{multline}
n^3 \int_{-n/2}^{n/2} \sin^2 \big (\pi \tfrac{\xi}{n}\big )\big| {\hat f}_n (\xi) \big|^2 \big| I \big (\tfrac{\xi}{n}\big) \big|^2 \big| J\big (\tfrac{\xi}{n}\big) \big|^2    d\xi=\\
= n^4 \int_{-1/2}^{1/2} \sin^2 (\pi y) | I(y)|^2 | J(y)|^2 |\hat f_n (ny)|^2 dy
\end{multline}
vanishes, as $n\to\infty$. By Lemma \ref{lem:sfp}, Lemma \ref{lem:I} and Lemma \ref{lem:J}, this is equivalent to show that the following term goes to 0, as $n\to\infty$:
\begin{equation*}
n^4 \int_{-1/2}^{1/2} \cfrac{|y|^4}{1+|ny|^p} dy = \cfrac{1}{n} \int_{-n/2}^{n/2} \cfrac{|z|^4}{1+|z|^p} dz.
\end{equation*}
But for $p$ bigger than $5$, this term goes to $0$, as $n\to\infty$.

\subsection{Proof of \eqref{est tilde d vn}}
Let $\theta_n:\sfrac{1}{n}\ZZ \to \RR$ be defined by
\begin{equation*}
\theta_n \big( \tfrac{x}{n} \big) = v_{n} \big(\tfrac{x}{n}, \tfrac{x+1}{n} \big) - v_n \big( \tfrac{x}{n},\tfrac{x}{n} \big)
\end{equation*}
and observe that
\begin{equation*}
\tfrac{1}{\sqrt{n}} {\widetilde{\mc D}_n} v_n \; \big( \tfrac{\vphantom{y}x}{n}, \tfrac{y}{n} \big) = n^{3/2}
\begin{cases}
\theta_n \big(\tfrac{x}{n} \big), \quad y =x+1,\\
\theta_n \big(\tfrac{x-1}{n} \big), \quad y =x-1,\\
0, \quad \text{otherwise}.
\end{cases}
\end{equation*}
Now, doing similar computations as above we have that
\begin{equation}
\begin{split}
{\widehat \theta_n} (\xi)&= \frac{1}{n}\sum_{x\in\mathbb{Z}}\big\{v_n\big(\tfrac{x}{n},\tfrac{x+1}{n}\big)-v_n\big(\tfrac{x}{n},\tfrac{x}{n}\big)\big\}e^{ \tfrac{2i \pi\xi}{n}}\\
&=\frac{1}{n}\sum_{x\in\mathbb{Z}}e^{\tfrac{2i\pi\xi x}{n}}\iint_{[-\tfrac{n}{2},\tfrac{n}{2}]^2} {\widehat v_n} (k, \ell)\big\{e^{-\tfrac{2i\pi(kx+\ell(x+1))}{n}}-e^{\tfrac{2i\pi(k+\ell)x}{n}}\big\}dk d\ell\\
&=\frac{1}{n}\sum_{x\in\mathbb{Z}}e^{\tfrac{2i\pi \xi x}{n}}\iint_{[-\tfrac{n}{2},\tfrac{n}{2}]^2} {\widehat v_n} (m-\ell, \ell)e^{-\tfrac{2i\pi mx}{n}}\{e^{-\tfrac{2i\pi \ell}{n}}-1\}dk d\ell\\
&=\int_{-\tfrac{n}{2}}^{\tfrac{n}{2}} {\widehat v_n} (\xi-\ell, \ell)\{e^{-\tfrac{2i\pi \ell}{n}}-1\} d\ell.\\
\end{split}
\end{equation}
Performing a change of variables and using \eqref{FT of vn} and \eqref{Est} we get that
\begin{equation*}
{\widehat \theta_n} (\xi) = \sqrt{n} {\widehat f}_n (\xi) I \big( \tfrac{\xi}{n} \big) K \big( \tfrac{\xi}{n} \big),
\end{equation*}
where $I$ is defined by (\ref{eq:I}) and $K$ is given by
\begin{equation}
\label{eq:K}
K(y) = \int_{-1/2}^{1/2} \cfrac{(e^{-2i \pi x} -1) (e^{2i \pi (y-x)}+ e^{2i \pi x}  )}{\Lambda(y-x,x) -i \Omega(y-x,x) } dx.
\end{equation}
We need to show that
\vspace{-10pt}
\begin{equation*}
\lim_{n \to \infty} n^2 \| \theta_n \|_{2,n}^2 =0.
\end{equation*}
By Plancherel-Parseval's relation, this is equivalent to prove that
\begin{equation*}
\lim_{n \to \infty} n^3 \int_{-n/2}^{n/2} \big| {\widehat f}_n (\xi) \big|^2 \big|  I \big( \tfrac{\xi}{n} \big) \big|^2  \big| K \big( \tfrac{\xi}{n} \big) \big|^2 \, d\xi =0.
\end{equation*}
By using the change of variables $y=\xi/n$, Lemma \ref{lem:sfp}, Lemma \ref{lem:I} and Lemma \ref{lem:K}, we have
\begin{multline*}
n^3 \int_{-n/2}^{n/2} \big| {\widehat f}_n (\xi) \big|^2 \big|  I \big( \tfrac{\xi}{n} \big) \big|^2  \big| K \big( \tfrac{\xi}{n} \big) \big|^2 \, d\xi \leq \\
 \le Cn^4 \int_{-1/2}^{1/2} \cfrac{|y|^4}{1+|ny|^p} dy  = \cfrac{C}{n} \int_{-n/2}^{n/2} \cfrac{|z|^4}{1+|z|^p} dz
\end{multline*}
which goes to $0$, as $n\to\infty$, for $p$ bigger than $5$.

\section{Asymptotics of few integrals }

\begin{lemma}
\label{lem:G0345}
Recall that $G$ and $G_0$ are defined by (\ref{eq:fF}) and (\ref{eq:G0}). There exists a constant $C>0$ such that for any $|y| \le 1/2$
\begin{equation}
|G (y) -G_0 (y)| \le C |y|^2.
\end{equation}
\end{lemma}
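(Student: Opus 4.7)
The plan is to evaluate $G(y)$ in closed form and then Taylor-expand the result against $G_0(y)$. The main idea is a symmetric change of variable that decouples the ``modulus'' and ``phase'' dependence on $y$.

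First I would substitute $z=y/2+u$ in the integral defining $G(y)$. Using the identities $\sin^2 A+\sin^2 B=1-\cos(A+B)\cos(A-B)$ (with $A=\pi(y-z)$, $B=\pi z$) and $\sin 2A+\sin 2B=2\sin(A+B)\cos(A-B)$, together with $1$-periodicity of the integrand in $z$, one finds $\Lambda(y-z,z)=4[1-\cos(\pi y)\cos(2\pi u)]$, $\Omega(y-z,z)=4\sin(\pi y)\cos(2\pi u)$, and hence
\begin{equation*}
G(y) \;=\; \sin^2(\pi y)\int_{-1/2}^{1/2}\frac{\cos^2(2\pi u)}{1-e^{i\pi y}\cos(2\pi u)}\,du.
\end{equation*}
Setting $\zeta=e^{i\pi y}$ and $w=e^{2i\pi u}$ turns the remaining integral into a rational contour integral on $|w|=1$. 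Combining the algebraic identity $\cos^2 v/(1-\zeta\cos v)=\zeta^{-2}[(1-\zeta\cos v)^{-1}-1-\zeta\cos v]$ with the standard residue formula $\int_{-\pi}^{\pi}(1-\zeta\cos v)^{-1}dv=2\pi/\sqrt{1-\zeta^2}$ (principal branch; a short estimate shows that for $\zeta=e^{i\pi y}$ with $0<|y|\le 1/2$ exactly one of the two poles $(1\pm\sqrt{1-\zeta^2})/\zeta$ lies strictly inside the unit disk, so the formula extends), I obtain the closed form
\begin{equation*}
G(y) \;=\; \frac{\sin^2(\pi y)\,e^{-2i\pi y}}{\sqrt{1-e^{2i\pi y}}}\;-\;\sin^2(\pi y)\,e^{-2i\pi y}.
\end{equation*}

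Next I would pin down the branch. From $1-e^{2i\pi y}=-2i\sin(\pi y)\,e^{i\pi y}=2|\sin(\pi y)|\,e^{i(\pi y-\pi\,\mathrm{sgn}(y)/2)}$, taking the principal square root gives
\begin{equation*}
\sqrt{1-e^{2i\pi y}}\;=\;\sqrt{|\sin(\pi y)|}\;e^{i\pi y/2}\,(1-i\,\mathrm{sgn}(y));
\end{equation*}
rationalizing via $(1-is)(1+is)=2$ for $s=\pm 1$, the first term of the closed form becomes
\begin{equation*}
\frac{\sin^2(\pi y)\,e^{-2i\pi y}}{\sqrt{1-e^{2i\pi y}}}\;=\;\frac{|\sin(\pi y)|^{3/2}\,(1+i\,\mathrm{sgn}(y))\,e^{-5i\pi y/2}}{2}.
\end{equation*}

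Finally, using $|\sin(\pi y)|^{3/2}=|\pi y|^{3/2}+O(|y|^{7/2})$ and $e^{-5i\pi y/2}=1+O(|y|)$ on $|y|\le 1/2$, the right-hand side above equals $G_0(y)+O(|y|^{5/2})$, while the subtracted piece $-\sin^2(\pi y)\,e^{-2i\pi y}$ is $O(|y|^2)$. Since $|y|^{5/2}\le|y|^2$ for $|y|\le 1/2$, combining the two bounds yields $|G(y)-G_0(y)|\le C|y|^2$, as claimed. The main obstacle I anticipate is the careful branch bookkeeping of $\sqrt{1-e^{2i\pi y}}$ around $y=0$: the factor $(1-i\,\mathrm{sgn}(y))$ is precisely what reproduces the $\mathrm{sgn}(y)$ in $G_0$, and a sign error at that step would give a qualitatively wrong answer; a minor secondary point is verifying that the residue identity above remains valid when $\zeta=e^{i\pi y}$ sits on the unit circle.
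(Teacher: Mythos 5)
Your proof is correct and follows essentially the same route as the paper: both evaluate $G$ in closed form via the residue theorem, track the branch of $\sqrt{1-e^{2i\pi y}}$ through the identity $1-e^{2i\pi y}=2|\sin(\pi y)|e^{i\pi[y-\frac{1}{2}\mathrm{sgn}(y)]}$, and then Taylor-expand against $G_0$. Your closed form coincides exactly with the paper's (after simplifying $\tfrac14(e^{2i\pi y}-1)^2e^{-4i\pi y}=-\sin^2(\pi y)e^{-2i\pi y}$); the symmetrizing substitution $z=y/2+u$ is a pleasant simplification of the contour computation but not a different method.
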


\begin{proof}
We compute the function $G$ by using the residue theorem. For any $y \in [-1/2, 1/2]$ we denote by $w:=w(y)$ the complex number $w=e^{2i\pi y}$. By denoting $z=e^{2i \pi x}$, $x \in [-1/2, 1/2]$, we have that
\begin{equation*}
\begin{split}
& \Lambda (y-x,x)= 4- z (w^{-1} +1) -z^{-1}( w+1),\\
&i \, \Omega (y-x,x)= z(1-w^{-1}) + z^{-1} (w-1).
\end{split}
\end{equation*}
We denote by ${\mc C}$ the unit circle positively oriented. Then, we have
\begin{equation}
G(y)= \frac{1}{16 i \pi} \oint_{\mc C} f_{w} (z) dz
\end{equation}
where the meromorphic function $f_{w}$ is defined by
\begin{equation}
f_w (z)= \frac{[(w-1) +z^2 (1-w^{-1})]^2}{z^2 (z^2 -2 z +w)}.
\end{equation}
The poles of $f_w$ are $0$ and $z_-, z_+$ which are the two solutions of $z^2 -2z +w$. Since
$$1-w=2 |\sin (\pi y)| e^{i \pi [y-\frac{1}{2} {\rm sgn} (y)]}$$
we have that
\begin{equation}
\label{eq:zpm}
z_{\pm} = 1 \pm \sqrt{ 2 |\sin (\pi y)|}\,  e^{ \tfrac{i\pi}{2} [y-\frac{1}{2} {\rm sgn} (y)]}.
\end{equation}
Observe that $|z_-| <1$ and $|z_+|>1$. By the residue theorem, we have
\begin{equation}
\oint_{\mc C} f_{w} (z) dz = 2 \pi i \big[ {\rm {Res}} (f_w, 0) +{\rm {Res}} (f_w, z_-) \big],
\end{equation}
where ${\rm {Res}} (f_w, a)$ denotes the value of the residue of $f_w$ at the pole $a$. An elementary computation shows that
\begin{equation*}
\begin{split}
&{\rm{Res}} (f_w,0) = \cfrac{2 (w-1)^2}{w^2}, \\
&{\rm{Res}} (f_w,z_-)= \lim_{z \to z_-} (z-z_-) f_{w} (z)= \cfrac{1}{z_- -z_+} \cfrac{[(w-1) + (1-w^{-1})z_{-}^2]^2}{z_-^2}.
\end{split}
\end{equation*}

By using the fact that $z_-^2 = 2z_- -w$, we obtain that
\begin{equation}
\begin{split}
{\rm Res} (f_w, 0) +{\rm {Res}} (f_w, z_-) &= \frac{2(w-1)^2}{w^2}\Big[ 1 + \frac{2}{z_- -z_+} \Big]\\
&= \frac{2(w-1)^2}{w^2}\Big[ 1 - \frac{1}{\sqrt{2|\sin(\pi y)|}} e^{- \frac{i\pi}{2} [y-\frac{1}{2} {\rm sgn} (y)]}\Big].
\end{split}
\end{equation}
Finally, we have
\begin{equation}
\begin{split}
G(y)&= \frac{1}{4} \frac{(e^{2i \pi y}-1)^2}{e^{4i \pi y}}\left[ 1 - \tfrac{1}{\sqrt{2|\sin(\pi y)|}} e^{- \frac{i\pi}{2} [y-\frac{1}{2} {\rm sgn} (y)]} \right]\\
&= \cfrac{1}{2} |\pi y|^{3/2} [ 1+ i \, {\rm{sgn}} (y)] + O(|y|^2).
\end{split}
\end{equation}
\end{proof}

\begin{lemma}
\label{lem:I}
The function $I$ defined by (\ref{eq:I}) satisfies, for any $y\in \RR$,
\begin{equation*}
| I(y)| \le C |\sin (\pi y)|^{3/2},
\end{equation*}
where $C$ is a positive constant independent of $y$.
\end{lemma}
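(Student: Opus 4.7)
The plan is to compute $I(y)$ essentially explicitly by contour integration, in the same spirit as the proof of Lemma E.1 for $G$, and then read off the desired bound from the explicit form. The $3/2$-power behavior of $|I(y)|$ in $|\sin(\pi y)|$ should emerge as a cancellation, analogous to the $3/2$-power appearing in $G$.

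First I would make the substitution $z=e^{2i\pi x}$, $w=e^{2i\pi y}$, so that $dx=\frac{dz}{2i\pi z}$, $1-e^{-2i\pi x}=1-z^{-1}$, and, as already noted in the proof of Lemma E.1,
$i\,\Omega(y-x,x)=z(1-w^{-1})+z^{-1}(w-1)$, $\Lambda(y-x,x)=4-z(w^{-1}+1)-z^{-1}(w+1)$. A direct computation yields the clean identity
\begin{equation*}
\Lambda(y-x,x)-i\,\Omega(y-x,x)=-\tfrac{2}{z}(z^2-2z+w),
\end{equation*}
so that $I(y)$ becomes a contour integral around the unit circle $\mathcal{C}$ of a rational function of $z$ with parameter $w$.

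Next I would simplify the numerator. The key algebraic observation is that
\begin{equation*}
(1-z^{-1})\bigl[z(1-w^{-1})+z^{-1}(w-1)\bigr]=(z-1)\,z^{-2}\,\tfrac{w-1}{w}(z^2+w),
\end{equation*}
which, after collecting all prefactors, reduces $I(y)$ to
\begin{equation*}
I(y)=-\frac{w-1}{4i\pi w}\oint_{\mathcal{C}}\frac{(z-1)(z^2+w)}{z^2(z^2-2z+w)}\,dz.
\end{equation*}
The poles inside $\mathcal{C}$ are $z=0$ (double) and $z_-=1-\sqrt{1-w}$ (recall $|z_-|<1$, cf.\ \eqref{eq:zpm}).

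Then I would compute the two residues. For the simple pole, using $z_-^2=2z_--w$ and $z_--z_+=-2\sqrt{1-w}$, one gets $z_-^2+w=2z_-$ and hence the residue at $z_-$ simplifies to $1/z_-$. For the double pole at $0$ one differentiates $h(z)=(z-1)(z^2+w)/(z^2-2z+w)$ once and evaluates at $0$; a short computation gives $h'(0)=(w-2)/w$. Adding the two residues, multiplying by $-\tfrac{w-1}{2w}$, and using $\frac{1}{z_-}=\frac{1+\sqrt{1-w}}{w}$, I expect the two contributions to combine into
\begin{equation*}
I(y)=\frac{z_-\,(1-w)^{3/2}}{2\,w^{2}}.
\end{equation*}
This is exactly the formula that exhibits the crucial $(1-w)^{3/2}$ factor.

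Finally, the bound is immediate: on the unit circle $|w|=1$, the identity $|1-w|=2|\sin(\pi y)|$ gives $|1-w|^{3/2}=2^{3/2}|\sin(\pi y)|^{3/2}$, while from \eqref{eq:zpm} $|z_-|\le 1+\sqrt{2|\sin(\pi y)|}\le 1+\sqrt{2}$. Combining, $|I(y)|\le C|\sin(\pi y)|^{3/2}$ with $C$ independent of $y$. The main obstacle is not conceptual but purely algebraic: correctly tracking the factors of $w$, $z$, and $z_-$ so that the delicate cancellation producing $(1-w)^{3/2}$ — rather than a mere $(1-w)$ — goes through. Once the explicit formula is obtained the estimate is trivial.
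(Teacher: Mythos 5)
Your proposal is correct and follows exactly the paper's route: the same substitution $z=e^{2i\pi x}$, the same contour integral $I(y)=-\tfrac{w-1}{4i\pi w}\oint_{\mc C}\tfrac{(z-1)(z^2+w)}{z^2(z^2-2z+w)}\,dz$, and the same residues $\mathrm{Res}(f_w,0)=1-2/w$ and $\mathrm{Res}(f_w,z_-)=1/z_-$. Your extra simplification $\tfrac{1}{z_-}+1-\tfrac{2}{w}=\tfrac{\sqrt{1-w}\,z_-}{w}$, yielding $I(y)=\tfrac{z_-(1-w)^{3/2}}{2w^2}$, checks out and makes the $|\sin(\pi y)|^{3/2}$ bound immediate, where the paper merely says to substitute the explicit values of $w$ and $z_-$.
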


\begin{proof}
We compute $I$ by using the residue theorem. For any $y \in [-\frac{1}{2}, \frac{1}{2}]$ we denote by $w:=w(y)$ the complex number $w=e^{2i\pi y}$. Then, we have
\begin{equation}
\label{eq:E8}
I(y)= -\frac{1}{4 i \pi} \cfrac{w-1}{w} \oint_{\mc C} f_{w} (z) dz,
\end{equation}
where the meromorphic function $f_{w}$ is defined by
\begin{equation}
f_w (z)= \frac{(z-1)(z^2 +w)}{z^2 (z-z_+)(z-z_-)}
\end{equation}
with $z_\pm$ defined by (\ref{eq:zpm}). We recall that $|z_-| <1$ and $|z_+|>1$ so that by the residue theorem we have
$$I(y) = -\cfrac{w-1}{2w} \left[ {\rm{Res}} (f_w,0) +{\rm {Res}} (f_w,z_-)\right].$$
A simple computation shows that
\begin{equation*}
{\rm{Res}} (f_w,0) =1-2/w, \quad {\rm {Res}} (f_w,z_-) =1/z_-.
\end{equation*}
It follows that
\begin{equation*}
I(y)= - \cfrac{w-1}{2w} \Big[ \frac{1}{z_-} +1 -\frac{2}{w}\Big].
\end{equation*}
Replacing $w$ and $z_-$ by their explicit values we get the result.
\end{proof}

\begin{lemma}
\label{lem:J}
The $1$-periodic function $J$ defined by (\ref{eq:J}) satisfies, for any $y \in \RR$,
 \begin{equation}
|J(y)| \le  C |\sin (\pi y)|^{-1/2},
\end{equation}
where $C$ is a positive constant independent of $y$.
\end{lemma}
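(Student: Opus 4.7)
The plan is to compute $J(y)$ essentially explicitly via the residue theorem, in direct analogy with the proof of Lemma \ref{lem:I} (and with the computation of $G$ in Lemma \ref{lem:G0345}), and then read off the $|\sin(\pi y)|^{-1/2}$ singularity from the formula. The essential input is already recorded in the proof of Lemma \ref{lem:G0345}: with the substitutions $w = e^{2i\pi y}$ and $z = e^{2i\pi x}$, one has
\begin{equation*}
\Lambda(y-x,x) - i\Omega(y-x,x) = -\tfrac{2}{z}\bigl(z^2 - 2z + w\bigr),
\end{equation*}
whose only roots are $z_\pm = 1 \pm \sqrt{2|\sin(\pi y)|}\,e^{i\pi[y - \tfrac12\mathrm{sgn}(y)]/2}$, with $|z_-|<1<|z_+|$.

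Next I would rewrite the numerator as $1 + e^{2i\pi(y-2x)} = (z^2 + w)/z^2$ and use $dx = dz/(2i\pi z)$ to transform $J(y)$ into a contour integral over the unit circle $\mathcal{C}$:
\begin{equation*}
J(y) = -\frac{1}{4i\pi}\oint_{\mathcal{C}} g_w(z)\,dz, \qquad g_w(z) = \frac{z^2 + w}{z^2\,(z^2 - 2z + w)}.
\end{equation*}
The integrand has a double pole at $0$ and a simple pole at $z_-$ inside $\mathcal{C}$. A direct computation (using $z_-^2 = 2z_- - w$ to simplify the numerator at $z_-$) gives
\begin{equation*}
\mathrm{Res}(g_w,0) = \frac{2}{w}, \qquad \mathrm{Res}(g_w,z_-) = \frac{2}{z_-(z_- - z_+)}.
\end{equation*}
By the residue theorem this yields the closed form
\begin{equation*}
J(y) = -\frac{1}{w} \;-\; \frac{1}{z_-(z_- - z_+)}.
\end{equation*}

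Finally, plugging in $z_- - z_+ = -2\sqrt{2|\sin(\pi y)|}\,e^{i\pi[y-\tfrac12\mathrm{sgn}(y)]/2}$ shows that the second term is exactly of order $|\sin(\pi y)|^{-1/2}$ (since $z_- \to 1$ as $y \to 0$, so $1/z_-$ stays bounded and bounded away from $0$ on $[-\tfrac12,\tfrac12]$), while the first term $-1/w$ is uniformly bounded. For $y$ bounded away from $0$ in $[-\tfrac12,\tfrac12]$, $|\sin(\pi y)|^{-1/2}$ is bounded below by a positive constant and the bound holds trivially by compactness and continuity. Combining both regimes gives a single constant $C$ such that $|J(y)| \le C|\sin(\pi y)|^{-1/2}$ for all $y$, and $1$-periodicity of $J$ (clear from the definition) extends this to $y \in \mathbb{R}$.

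There is no genuine obstacle here; the only thing one has to be slightly careful about is the algebra showing that $\Lambda - i\Omega$ factors through $(z^2 - 2z + w)$ and that $z_-^2 + w = 2z_-$, which then makes the residue at $z_-$ collapse to $2/(z_-(z_- - z_+))$ and thereby produces the expected $|\sin(\pi y)|^{-1/2}$ scaling rather than a worse singularity.
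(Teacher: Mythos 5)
Your proof is correct and follows essentially the same route as the paper's: the same contour-integral reformulation with $w=e^{2i\pi y}$, $z=e^{2i\pi x}$, the same poles at $0$ and $z_-$ inside $\mathcal{C}$, and the same use of $z_-^2=2z_--w$ to collapse the residue at $z_-$ to $2/(z_-(z_--z_+))$, from which the $|\sin(\pi y)|^{-1/2}$ bound follows since $|z_--z_+|=2\sqrt{2|\sin(\pi y)|}$ and $|z_-|=1/|z_+|$ is bounded away from $0$. Incidentally, your value $\mathrm{Res}(g_w,0)=2/w$ is the correct one -- the paper states $-w/2$ at that point, which appears to be a slip, but since $|w|=1$ either value is bounded and immaterial to the final estimate.
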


\begin{proof}
We compute $J$ by using the residue theorem. For any $y \in [-\frac{1}{2}, \frac{1}{2}]$ we denote by $w:=w(y)$ the complex number $w=e^{2i\pi y}$. Then, we have
\begin{equation}
J(y)= -\frac{1}{4 i \pi} \oint_{\mc C} f_{w} (z) dz,
\end{equation}
where the meromorphic function $f_{w}$ is defined by
\begin{equation}
f_w (z)= \frac{(z^2 +w)}{z^2 (z-z_+)(z-z_-)}
\end{equation}
with $z_\pm$ defined by (\ref{eq:zpm}). By the residue theorem, we get
$$J(y)=-\frac{1}{2} \left[ {\rm{Res}} (f_w, 0) + {\rm Res} (f_w, z_-) \right].$$
A simple computation shows that
\begin{equation*}
{\rm{Res}} (f_w, 0)= - \frac{w}{2}, \quad {\rm{Res}} (f_w, z_-)=\frac{2}{z_- (z_- -z_+)}.
\end{equation*}
By using the explicit expressions for $w$, $z_\pm$, we get the result.
\end{proof}

\begin{lemma}
\label{lem:K}
The $1$-periodic function $K$ defined by (\ref{eq:K}) satisfies, for any $y \in \RR$,
 \begin{equation}
|K(y)| \le  C |\sin (\pi y)|^{1/2},
\end{equation}
where $C$ is a positive constant independent of $y$.
\end{lemma}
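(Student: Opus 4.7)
The plan is to follow the same residue-calculus strategy used in the proofs of Lemmas \ref{lem:I} and \ref{lem:J}. First I will substitute $z = e^{2i\pi x}$ and $w = e^{2i\pi y}$, so that $dx = dz/(2i\pi z)$, and recast $K(y)$ as a contour integral around the unit circle $\mc C$. Using the identities already derived in the proof of Lemma \ref{lem:G0345}, in particular
$$\Lambda(y-x,x) - i\,\Omega(y-x,x) = -\tfrac{2}{z}(z^2-2z+w),$$
together with the elementary simplifications $e^{-2i\pi x}-1 = (1-z)/z$ and $e^{2i\pi(y-x)} + e^{2i\pi x} = (w+z^2)/z$, I expect the integrand to collapse precisely onto $\tfrac{1}{4i\pi}f_w(z)$, where
$$f_w(z) = \frac{(z-1)(z^2+w)}{z^2(z-z_+)(z-z_-)}$$
is the very same meromorphic function introduced in the proof of Lemma \ref{lem:I}, with $z_\pm$ as in \eqref{eq:zpm}.

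Once this identification is made, applying the residue theorem and reusing the residue values ${\rm Res}(f_w,0) = 1 - 2/w$ and ${\rm Res}(f_w,z_-) = 1/z_-$ already computed in Lemma \ref{lem:I} yields $K(y) = \tfrac{1}{2}\bigl[\,1 - 2/w + 1/z_-\,\bigr]$. Using Vieta's formulas $z_+z_- = w$ and $z_++z_- = 2$, I can rewrite $1/z_- = (1+\sqrt{1-w})/w$, and the expression consolidates into the compact form
$$K(y) = \frac{\sqrt{1-w}\,\bigl(1-\sqrt{1-w}\bigr)}{2w}.$$

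At this stage the claimed bound is essentially immediate: since $|1-w| = 2|\sin\pi y|$, $|w|=1$, and $|z_-| = |1-\sqrt{1-w}| \le 1$ (the latter bound was already used in Lemma \ref{lem:I}), one obtains $|K(y)| \le \tfrac{1}{\sqrt 2}\,|\sin\pi y|^{1/2}$, which is exactly the required estimate.

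The only real obstacle is the careful sign and prefactor bookkeeping when reducing the integrand to $f_w(z)$: I need to verify that, up to the overall factor $1/(4i\pi)$, the integrand of $K$ is literally the same as the one appearing in Lemma \ref{lem:I}, rather than differing by some extraneous factor such as the $(w-1)/w$ prefactor that arose there. Once that algebraic check is done, no new analytic input is required and the proof terminates by invoking already-established computations.
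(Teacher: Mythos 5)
Your proof is correct and follows essentially the same route as the paper: both reduce $K(y)$ to the contour integral $\tfrac{1}{4i\pi}\oint_{\mc C} f_w(z)\,dz$ with the identical meromorphic function $f_w$ from Lemma \ref{lem:I} and then apply the residue theorem with the poles at $0$ and $z_-$. The only (immaterial) difference is that the paper observes $K(y)=-\tfrac{w}{w-1}I(y)$ and divides the bound of Lemma \ref{lem:I} by $|w-1|=2|\sin\pi y|$, whereas you evaluate the residue sum explicitly and bound the resulting closed form, reaching the same estimate.
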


\begin{proof}
We compute $K$ by using the residue theorem. For any $y \in [-\frac{1}{2}, \frac{1}{2}]$ we denote by $w:=w(y)$ the complex number $w=e^{2i\pi y}$. Then, we have
\begin{equation}
K(y)= \frac{1}{4 i \pi} \oint_{\mc C} f_{w} (z) dz
\end{equation}
where the meromorphic function $f_{w}$ is defined by
\begin{equation}
f_w (z)= \frac{(z-1)(z^2 +w)}{z^2 (z-z_+)(z-z_-)}
\end{equation}
with $z_\pm$ defined by (\ref{eq:zpm}). Recalling (\ref{eq:E8}), we see that
\begin{equation*}
K(y)= -\cfrac{w}{w-1} I(y),
\end{equation*}
 and by Lemma \ref{lem:I} the result follows.
\end{proof}

\begin{lemma}
\label{lem:W}
The $1$-periodic function $W$ defined by  \eqref{eq:W} is such that
\begin{equation*}
W(y) = O (|y|^{-3/2})
\end{equation*}
on $[-\tfrac{1}{2}, \tfrac{1}{2}]$.
\end{lemma}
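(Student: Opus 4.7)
The plan is to follow the residue-calculus strategy used for Lemmas \ref{lem:I}--\ref{lem:K} above, after an algebraic simplification that isolates the singularity. Using the product-to-sum identities with $A=\pi(y-x)$ and $B=\pi x$, I first rewrite
\[
\Lambda(y-x,x) = 4\bigl[1-\cos(\pi y)\cos(\pi(y-2x))\bigr], \qquad \Omega(y-x,x) = 4\sin(\pi y)\cos(\pi(y-2x)),
\]
so that, setting $u=\pi y$ and $v=\pi(y-2x)$, a short expansion gives
\[
\Lambda^2+\Omega^2 \;=\; 16\bigl[(\cos v-\cos u)^2+\sin^2 u\bigr]=16(\cos v - e^{iu})(\cos v-e^{-iu}).
\]
Changing variables from $x$ to $v$ and using $2\pi$-periodicity of the resulting integrand, this reduces $W$ to the one-variable integral
\[
W(y)=\frac{1}{32\pi}\int_{-\pi}^{\pi}\frac{dv}{(\cos v-\cos u)^2+\sin^2 u}.
\]

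Next I compute this integral by residues. Substituting $z=e^{iv}$ converts it to a contour integral around the unit circle $\mc C$ of a rational function whose four simple poles are the roots of $z^2-2e^{\pm iu}z+1=0$. Following the branch choice made in Lemma \ref{lem:I} (namely $\sqrt{e^{\pm 2iu}-1}=\sqrt{2|\sin u|}\,e^{\pm i\pi/4}e^{\pm iu/2}$ for $\sin u>0$), one root of each quadratic lies inside $|z|<1$; call them $z_-^\pm=e^{\pm iu}-\sqrt{e^{\pm 2iu}-1}$. Using the defining relation $(z_-^{\pm})^2+1=2e^{\pm iu}z_-^{\pm}$, the value of the other quadratic factor at $z_-^{\pm}$ simplifies to $4i\sin(\pm u)\,z_-^{\pm}$, and the two interior residues can be added in closed form.

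A direct calculation yields
\[
W(y) \;=\; \frac{\sin\!\bigl(\tfrac{\pi}{4}+\tfrac{u}{2}\bigr)}{16\sqrt{2}\,|\sin u|^{3/2}},
\]
and since the numerator is uniformly bounded while $|\sin(\pi y)|\geq 2|y|$ on $[-\tfrac12,\tfrac12]$, the desired estimate $W(y)=O(|y|^{-3/2})$ follows immediately; the case $y<0$ is covered by the symmetry $W(-y)=W(y)$, obtained from $x\mapsto-x$ in the defining integral. The main (and only) subtle point is the branch choice for the square root when identifying which of the two roots of $z^2-2e^{\pm iu}z+1$ lies inside $\mc C$, but this is handled in exactly the same way as in the proof of Lemma \ref{lem:I}; the rest is routine algebra already paralleling the paper's existing computations.
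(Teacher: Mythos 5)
Your proof is correct, but it takes a genuinely different route from the paper's. The paper explicitly \emph{bypasses} the residue computation: it notes that an exact evaluation is possible, then settles for a rough upper bound by splitting the integral in \eqref{eq:W} into the region $|x|\ge c\delta$, where the denominator is bounded below by $16\sin^4(\pi x)$, and the region $|x|\le c\delta$, where a Taylor expansion reduces the integral to $\int \tfrac{dx}{x^4+y^2}=O(|y|^{-3/2})$. You instead compute $W$ in closed form. I checked your reduction: with $u=\pi y$, $v=\pi(y-2x)$ one indeed has $\Lambda^2+\Omega^2=16[(1-\cos u\cos v)^2+\sin^2u\cos^2v]=16[(\cos v-\cos u)^2+\sin^2u]$, the change of variables and $2\pi$-periodicity give your one-variable integral, and the residue calculation (the other quadratic factor evaluating to $4i\sin(\pm u)\,z_-^{\pm}$ via $(z_-^{\pm})^2+1=2e^{\pm iu}z_-^{\pm}$, exactly as in Lemma \ref{lem:I}) produces $W(y)=\sin(\tfrac{\pi}{4}+\tfrac{u}{2})/(16\sqrt{2}\,|\sin u|^{3/2})$ for $0<y\le\tfrac12$, which I verified independently; the evenness $W(-y)=W(y)$ follows from $x\mapsto -x$ since $\Omega$ changes sign and $\Lambda$ is invariant. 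What your approach buys is the exact constant and matching upper and lower bounds (hence the true order $|y|^{-3/2}$, not merely an upper estimate), and it replaces the paper's somewhat informal "it is straightforward that it behaves like" step by an explicit identity; the cost is a longer computation where the only delicate point, as you say, is the branch choice identifying the interior roots, which is already settled in Lemma \ref{lem:I}. Either argument suffices for the lemma as stated, since only the upper bound is used.
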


\begin{proof}
It is possible to compute $W$ by using the residue theorem and get the estimate. Since we need only an upper bound we bypass the computations and give a rough argument. On  $[-\tfrac{1}{2}, \tfrac{1}{2}]$, if $y$ is not close to $0$, say $|y| \ge \delta$ with $0<\delta<1/2$, then the integral is bounded above by a constant $C(\delta) < \infty$. If $|y| \le \delta$, then we split the integral into two integrals according to $|x| \le c \delta$ or $|x| \ge c\delta$ with $0<c<1/ (2\delta)$ a constant. We have
\begin{equation*}
 \int_{|x| \ge c \delta}  \frac{dx}{\Lambda (y-x,x)^2 + \Omega (y-x, x)^2 }  \le   \int_{|x| \ge c \delta}  \frac{dx}{16 \sin^4 (\pi x)  } \le C(\delta).
\end{equation*}
It remains then to show that if $|y| \le \delta$ then
\begin{equation*}
 \int_{|x| \le c \delta}  \frac{dx}{\Lambda(y-x,x)^2 + \Omega (y-x, x)^2 }  \le  C(\delta) |y|^{-3/2}.
 \end{equation*}
 Since $|x|, |y|$ are small, a Taylor expansion can be used to estimate the behavior of the previous integral. It is straightforward that it behaves like
 \begin{equation*}
 \int_{|x| \le c \delta}  \frac{dx}{x^4 +y^2} = O (|y|^{-3/2}).
 \end{equation*}
\end{proof}

\section{Estimates involving Hermite functions}
\label{A.7}
In this Appendix we prove (\ref{ec5.17}). For simplicity, assume $\ell =2m$. Let $M\ge 1$ and let $I_m$ be defined by
\begin{equation*}
I_m= \int_{\frac{M}{\sqrt 2}}^\infty x^{2m} e^{-\frac{x^2}{2}}dx.
\end{equation*}
By successive integration by parts, we have
\begin{equation*}
\begin{split}
I_m &= \big(\tfrac{M}{\sqrt{2}} \big)^{2m-1} e^{-M^2 /4} + (2m-1) I_{m-1} \\
&=\ldots= \cfrac{(2m)!}{m!}\, \left\{ e^{-M^2 /4} \sum_{k=0}^{m-1} \tfrac{1}{2^k}\tfrac{(m-k)!}{(2m- (2k+1))!} \big(\tfrac{M}{\sqrt{2}} \big)^{2m- (2k+1)}  + \tfrac{I_1}{2^m} \right\}\\
&=\cfrac{(2m)!}{2^m m!}\, \left\{ e^{-M^2 /4} \tfrac{\sqrt{2}}{M}  \sum_{k=1}^{m} \tfrac{k!}{(2k+1)!} M^{2k} + I_1\right\}\\
& \le C \cfrac{(2m)!}{2^m m!}\, \left\{ e^{-M^2 /4} m M^{2m-1} + e^{- M^2 /2} \right\}\\
& \le  C'  \cfrac{(2m)!}{2^m m!}\, m M^{2m-1} e^{-M^2 /4}.
\end{split}
\end{equation*}

We start now with the following representation of the Hermite polynomials:
\begin{equation}
H_\ell(x) = \ell! \sum_{j=0}^{\lfloor \ell/2\rfloor} \frac{(-1)^j x^{\ell - 2j}}{2^j j!(\ell-2j)!}.
\end{equation}
For $|x| \geq 1$, $|x|^{\ell-2j} \leq |x|^\ell$ and therefore we have that $\big|H_\ell(x) \big| \leq \ell! x^\ell$. It follows that
\begin{equation}
\begin{split}
\int_M^\infty \big| H_{2m}(x)\big| e^{-\frac{x^2}{4}} dx
		&\leq 2^m (2m)! \, \sqrt{2}\, I_m \leq C \frac{(2m)!^2}{m!} m M^{2m -1} e^{-M^2 /4}.
\end{split}
\end{equation}

By Stirling's formula, we conclude that
\begin{equation*}
\lim_{ m \to + \infty} \int_{|x| \ge m^{\frac{1+\delta}{2}}} | f_{2m} (x)|\,  dx =0,
\end{equation*}
uniformly in $m$.
Moreover, by Cauchy-Schwarz's inequality, since $\int \, f_{2m} (x)^2\,  dx =1$, we have
\begin{equation*}
\int_{|x| \le m^{\frac{1+\delta}{2}}} | f_{2m} (x)|\,  dx \le\sqrt{2} \,  m^{\frac{1+\delta}{4}}.
\end{equation*}


Since $\delta$ is arbitrary, (\ref{ec5.17}) is proved for $\ell $ even. For $\ell$ odd, the computations are similar.

\bibliographystyle{plain}


\end{document}